%% file: draft-2026-10.tex
\documentclass[12pt]{article}
\usepackage{amssymb}
\usepackage{amsmath}
\usepackage{amsfonts}
\usepackage{geometry,enumerate,bbm}
\usepackage{natbib}
\usepackage{epsfig}
\usepackage{float}
\usepackage[onehalfspacing]{setspace}
\usepackage{graphicx}
\usepackage{subcaption}

\usepackage{lscape}
\usepackage[flushleft]{threeparttable}
\usepackage{ulem}
\usepackage{scrextend}
\usepackage{xcolor}

\usepackage{booktabs}
\usepackage{multicol}

\definecolor{dgreen}{rgb}{0,0.5,0}

\RequirePackage[colorlinks,citecolor=blue,urlcolor=blue,linkcolor=blue]{hyperref}
\newtheorem{assumption}{Assumption}
\newtheorem{definition}{Definition}
\newtheorem{theorem}{Theorem}
\newtheorem{conjecture}{Conjecture}

\newtheorem{lemma}{Lemma}
\newtheorem{proposition}{Proposition}
\newtheorem{remark}{Remark}
\newenvironment{proof}[1][Proof]{\noindent \textbf{#1.} }{\  \rule{0.5em}{0.5em}}
\newcommand{\argmax}{\operatorname*{argmax}}

\newcommand{\norm}[1]{\lVert #1 \rVert}
\newcommand{\abs}[1]{|#1|}

\usepackage{physics}

\begin{document}

\title{{\bf Dynamic Panel Multinomial Logit Models}
\thanks{
This research was
supported by the Gregory C. Chow Econometric Research Program at Princeton
University and by  the Danish National Research Foundation grant number DNRF186 to the Aarhus Center for Econometrics (ACE). 
We are grateful to seminar participants at numerous institutions for constructive comments.}
}
\author{\setcounter{footnote}{2}
Kevin Dano%
\thanks{Princeton University, \texttt{kdano@princeton.edu} }
\and
Bo E. Honor{\'e}%
\thanks{Princeton University and Aarhus Center for Econometrics (ACE) at the University of Aarhus, \texttt{honore@princeton.edu} }
\and
 Martin Weidner%
\thanks{
University College London, \texttt{m.weidner@ucl.ac.uk} } }
\date{October 2026}

\maketitle
\thispagestyle{empty}
\setcounter{page}{0}

\begin{abstract}
{
\noindent This paper provides moment conditions for dynamic multinomial logit panel data models with individual specific fixed effects. The moment conditions do not depend on the fixed effects and %
thus form the %
basis for generalized method of moments estimation. For a leading case, we provide sufficient conditions under which the moment conditions identify the common parameters, and we illustrate their use %
in an empirical application to household soft drink purchases.
The construction of moment conditions is based on transformed models obtained by permuting outcomes across time periods. For any cyclic permutation, the ratio of the transformed model probability to the true model probability is free of the fixed effects, and comparing two such ratios yields %
restrictions on the common parameters of interest. %
This construction is not specific to the multinomial logit. It applies to any dynamic panel data model whose static counterpart admits a symmetric sufficient statistic for the fixed effects. This unifies our main result with analogous results for other models.
}
\end{abstract}

\newpage

\section{Introduction}

Controlling for individual heterogeneity without imposing strong assumptions on
its relationship to observed covariates is a central challenge in the
econometric analysis of nonlinear panel data models. Suppose the individual-specific effects are treated as parameters and estimated jointly with the common parameters of interest. The resulting estimator is then generally inconsistent as the number of cross-sectional units grows with the number of time periods held fixed. This is the classical incidental parameter problem \citep{neyman1948consistent}. This paper studies the dynamic multinomial logit model for panel data with fixed effects. Our main contribution is to construct moment conditions that depend on the common parameters but not on the individual-specific effects. We obtain them from a transformed model that permutes outcomes across time periods, and that cancels the fixed effects through the structure of the multinomial logit likelihood.

One approach to panel data discrete choice models with fixed effects is to look for sufficient statistics for the fixed effects. It dates back to \citet{rasch1960studies}, who observed that in a binary logit model with strictly exogenous regressors the sum of an individual's outcomes is a sufficient statistic for the fixed effect, so that conditional maximum likelihood can be used. \citet{chamberlain_analysis_1980} extended this to the static multinomial logit, where the vector of choice frequencies is sufficient for the fixed effects, and \citet{magnac2000subsidised} obtained an analogous result when the only regressors are lagged dependent variables. Unfortunately, conditional likelihood methods generally break down in the empirically most relevant case in which both lagged dependent variables and strictly exogenous regressors are present. \citet{honore2000panel} handled this joint case for binary and multinomial logit by additionally conditioning on covariates being equal across certain periods. This leads to an estimator whose rate of convergence is slower than $\sqrt{n}$. \citet{Kyriazidou2001} combined conditional likelihood with kernel smoothing in dynamic panel sample selection models. \citet{KhanOuyangTamer2021} generalized
\citet{honore2000panel} to semiparametric multinomial response models that drop
the logit assumption and relax the independence of irrelevant alternatives,
establishing point identification but again at a rate slower than $\sqrt{n}$
that deteriorates with the number of alternatives. 

Since there are relatively few models for which sufficient statistics can be found, an alternative approach is to try to construct moment conditions that depend on the common parameters, but not on the individual-specific fixed effects. In the linear case this is what \citet{ArellanoBond1991} do, when they eliminate the fixed effects by first-differencing and instrument with lagged levels. Papers by \citet{Honore1993} and \citet{Hu2002} are specific examples of this for panel Tobit models, where the conditions are obtained from a symmetry that is restored through trimming, and the argument is tailored to the censored-regression structure rather than being generic. For multiplicative models, \citet{Chamberlain1992} and \citet{Wooldridge1997} derived moment conditions which are valid without strict exogeneity, implying that lagged dependent variables can be accommodated in count and other nonnegative outcome models. \citet{bonhomme2012functional} developed a general approach for obtaining moment conditions via functional differencing, which essentially differences out the unknown distribution of the individual effects in the likelihood. While general, this characterization does not by itself deliver explicit moment functions, and concrete constructions have so far proceeded model by model, exploiting the specific structure of each likelihood.
Functional differencing has been operationalized for the dynamic binary logit
model by \citet{kitazawa2022transformations} and \citet{honore2025moment},
extended to the dynamic ordered logit by \citet{HonoreMurisWeidner2025}, and
given a general transition-probability treatment by \citet{dano2023transition}.
The present paper extends this program to the dynamic multinomial logit.\footnote{ \cite{AlSadoonLiPesaranYangZhao2026} also consider moment conditions in a dynamic multinomial fixed effects model, but their model is outside the multinomial logit framework and it places restrictions on the support of the fixed effects that depend on the explanatory variables and the model parameters.}

Multinomial outcomes are common in applied work with panel data. They arise in choices among employment states \citep{keane1997career}, consumer products with repeat purchasing and switching costs  \citep{handel2013adverse, dubehitschrossi2010}, health insurance plans \citep{pakes2021unobserved}, urban consumption locations \citep{davis2019segregated}, and migration destinations \citep{blumenstock2025migration}. All of these fit naturally into the multinomial framework. Because switching costs and habit formation are prominent in such settings, it is important to allow lagged choices to enter as regressors and to control for permanent unobserved heterogeneity at the same time \citep{Heckman1981}. 

The construction of our moment conditions is based on a transformed model in which the outcome generated in each period is reassigned according to a permutation $\sigma$ of the time indices. For any $\sigma$, the ratio of the transformed-model probability to the true-model probability does not depend on the fixed effects. This follows from the logistic structure, in which the fixed effects enter the choice probabilities via alternative-specific constants whose total contribution depends on the unordered collection of outcomes only --- implying invariance to reshuffling outcomes across time. Intuitively, reshuffling outcomes across
time leaves the total contribution of the individual effects unchanged, so they
can be differenced out through likelihood ratios. When $\sigma$ is a cyclic
permutation the transformed model is a proper probability distribution, and
comparing two such models yields moment conditions of the form required for
fixed effects-free estimation; the binary logit conditions of
\citet{honore2025moment} emerge as the special case with two alternatives. The construction is not specific to the multinomial logit. All that is needed is that the static version of the model admits a sufficient statistic for the fixed effects that is symmetric in the time indices, i.e. a statistic that depends on the outcomes only through their unordered collection. This unifies our results with analogous
ones for other models and shows that the approach applies to any dynamic panel
model built on a static model that admits conditional likelihood estimation via
a symmetric sufficient statistic.

The number of linearly independent moment conditions for
given $(Q,T)$ grows rapidly in both the number of alternatives $Q$ and the
number of time periods $T$, and we  conjecture an explicit formula for this count.
We prove that this formula is correct for $Q=2$, and that it is a
lower bound for $T=3$ and any $Q$. For the other combinations of $(Q,T)$ that we
report we verify it numerically. At
least three time periods after the initial condition are required, as in the
binary case. We give sufficient conditions under which these moments identify
the common parameters, the regression and state-dependence coefficients, and
propose a generalized method of moments (GMM) estimator. Because the number of
available moments grows quickly, only a small subset is typically used in
estimation. An empirical application to household soft drink purchases
illustrates the importance of controlling for unobserved heterogeneity when
estimating state dependence in multinomial choice.

 The paper is organized as follows. Section~\ref{sec:model} introduces the model. Section~\ref{sec:moments} presents the moment conditions and establishes their validity. Section~\ref{sec:moregen} generalizes the construction to other dynamic models whose static versions admit a symmetric sufficient statistic for the fixed effects. Section~\ref{sec:identification} analyzes identification. Section~\ref{sec:empirical} contains the empirical application. Section~\ref{sec:conclusion} concludes.

\section{Multinomial logit model}
\label{sec:model}

Panel settings in which agents repeatedly select from a finite set of alternatives motivate our framework. Examples of such alternatives are brands of a consumer good, transportation modes, employment states, or health insurance plans. Our analysis focuses on two features of these settings: agents differ in their persistent preferences over alternatives (a variation not observed by the econometrician), and past choices can influence current utility (which gives rise to state dependence). Both features produce persistence in the observed choices,
so distinguishing them is difficult on the basis of cross-sectional data alone. This is why we use panel data where the same agents are observed over multiple time periods.

We consider panel data with cross-sectional units $i=1,\ldots,n$ and time periods $t=0,\ldots,T$.
For each pair $(i,t)$, we observe the discrete outcome $Y_{it} \in \{1,2,\ldots,Q\}$, where  $Q \in \{2,3,4,\ldots\}$ denotes the number of options.
We also observe strictly exogenous regressors $X_{it} \in \mathbb{R}^K$ for $t \geq 1$.
The observed outcome at $t=0$ serves as the initial condition and is conditioned on throughout. For $t \geq 1$, the outcomes are generated by the model
\begin{align}
Y_{it}=\argmax_{q \in \left\{ 1,2,\ldots,Q\right\} }U_{qit} ,
    \label{ModelChoice}
\end{align}
with latent utilities
\begin{align}
    U_{qit} &= z_t(q,Y_{i,t-1},X_i,\theta) +  A_{qi} + \varepsilon _{qit} .
    \label{ModelUtility}
\end{align}
Here, $A_i=(A_{1i},\ldots,A_{Qi})$ are individual-specific fixed effects, and
$X_i=(X_{i1},X_{i2},\ldots,X_{iT})$ are all the strictly exogenous covariates observed for unit $i$. The errors
$\varepsilon _{qit}$ are independent of $(Y_{i0},X_i,A_i)$, and are assumed to be independent and identically distributed (i.i.d.)\ over $t$ and $q$ with a type I extreme value distribution.

The vector $A_i$ collects alternative-specific constants for unit $i$, with $A_{qi}$ capturing the taste for option $q$. Since only differences between the $A_{qi}$ enter the choice probabilities, one can normalize $A_{1i} = 0$. 
The type I extreme value assumption implies that the choice probabilities in \eqref{CondProb} below take the form of the familiar multinomial logit model. The independence of irrelevant alternatives across alternatives within each period is therefore the key substantive restriction. Conditioning on $Y_{i0}$ throughout avoids the need to specify a model for the initial outcome. This is convenient, because $Y_{i0}$ is typically correlated with the fixed effects in an unknown way.

The parameter $\theta$ is unknown, but the function $z_t(\cdot,\cdot,\cdot,\theta)$ is known up to $\theta$.
The standard linear specification reads
  \begin{align}
    z_t(q,Y_{i,t-1},X_i,\theta) =  \sum_{r=1}^Q  \, \gamma_{qr} \,  \mathbbm{1}\left\{Y_{i,t-1} = r \right\} + X_{it}' \, \beta_q ,
    \label{LinearSpec}
 \end{align}
with $\theta=(\gamma,\beta)$. %
Our primary interest is in the state dependence coefficients $\gamma_{qr}$ and the regression coefficients $\beta_q$. The diagonal entries $\gamma_{qq}$ measure how much the utility of alternative $q$ rises when $q$ was also chosen in the previous period. In other words, $\gamma_{qq}$ captures the degree of habit formation or switching costs for option $q$. The off-diagonal entries $\gamma_{qr}$ with $r \neq q$ capture additional forms of persistence, including asymmetries between related alternatives. The central difficulty in estimating the persistence coefficients is that one would like to separate genuine state dependence (in the form of the $\gamma$'s) from spurious persistence induced by unobserved heterogeneity. See, for example, \citet{Heckman1981}. We address this problem by constructing moment conditions that are free of $A_i$ but informative about the $\gamma$'s.

The coefficients in the model above are identified only up to normalizations. Three types of shifts leave the choice probabilities unchanged: (i) shifting $\gamma_{qr}$ by a constant that depends on $r$ but not on $q$, because such a shift appears in the numerator and in every term of the denominator of the logit and therefore cancels; (ii) shifting $\gamma_{qr}$ by a constant that depends on $q$ but not on $r$, because $\sum_{r=1}^Q \mathbbm{1}\{Y_{i,t-1} = r\} = 1$ and such a shift is therefore equivalent to adding a constant to the utility of alternative $q$, which is absorbed into the fixed effect $A_{qi}$; and (iii) shifting $\beta_q$ by an additive vector in $\mathbb{R}^K$ that does not depend on $q$, which cancels between the numerator and denominator of the logit. To remove these sources of non-identification we impose the normalizations $\gamma_{1r} = 0$ for all $r \in \{1, \ldots, Q\}$, $\gamma_{q1} = 0$ for all $q \in \{1, \ldots, Q\}$, and $\beta_1 = 0$. The first of these takes alternative $1$ as the reference alternative, as in \citet{magnac2000subsidised}, and together with $A_{1i} = 0$ handles the non-identification of type (i). The second takes lagged outcome $1$ as the reference lagged state and handles the non-identification of type (ii). The third normalization handles type (iii).

Of course, the baseline linear specification in \eqref{LinearSpec} is only one possible parametrization of $z_t(\cdot,\cdot,\cdot,\theta)$.  The framework allows for more flexible choices. For example, the parameters $\gamma_{qr}$ and $\beta_q$ in \eqref{LinearSpec} could depend on $t$, or $Y_{i,t-1}$ and $X_{it}$ could be interacted. Conversely, in some applications a more parsimonious specification may be desirable. In our empirical application, we consider  
\begin{align} \label{EmpiricalSpec}
z_{t}(q,Y_{i,t-1},X_i,\theta)=\gamma_{q} \mathbbm{1}\left\{Y_{i,t-1}=q\right\} + X_{it,q}\beta   
\end{align}
where the $Q$ coefficients $\gamma_{q}$ reward habit formation, while the effect $\beta$ of the alternative-specific covariate $X_{it,q}$ (i.e price of good $q$) %
is restricted to be constant across alternatives.\footnote{Specification \eqref{EmpiricalSpec} can be viewed as a special case of \eqref{LinearSpec} where the cross-state dependence parameters satisfy $\gamma_{qr}=0$ for $q\neq r$, and the alternative-specific slopes are restricted to $\beta_q=\beta\,\iota_q$, where $\iota_q$ is the $q$th canonical basis vector. Hence $X_{it}'\beta_q=X_{it,q}\beta$, explaining the notational change between \eqref{LinearSpec} and \eqref{EmpiricalSpec}.} For $Q \geq 3$, this specification does not require the normalizations imposed above,  since the parameter vector is
$\theta=(\beta,\gamma_1,\ldots,\gamma_Q)$. We discuss this model further in Sections \ref{sec:identification} and \ref{sec:empirical}.

Let $Y_i=(Y_{i1},Y_{i2},\ldots,Y_{iT})$.
We assume cross-sectional sampling, that is, $(Y_i,Y_{i,0},X_i,A_i)$ are i.i.d.\ distributed across $i$.
For convenience, we drop the subscript $i$ from the notation when no confusion can arise.
Then, the model assumptions imply that
  \begin{align}
P\left(  Y_{t}=q  \, \Big| \,  Y^{t-1}, X, A \right) =\frac{\exp \left(z_t(q,Y_{t-1},X,\theta) +A_q\right) }{\sum_{r=1}^{Q}
\exp\left( z_t(r,Y_{t-1},X,\theta)  +A_r \right) } ,
    \label{CondProb}
 \end{align}
where $Y^{t-1}=(Y_{t-1},Y_{t-2},\ldots,Y_0)$ denotes past outcomes.
It is convenient to introduce the shorthand notation\footnote{%
Regarding notation, uppercase $X$ and $A$ denote random vectors, whereas lowercase $x$ and $\alpha$ denote their possible non-random realizations. The object $p_t(q \mid y_{t-1}, x, \theta, \alpha)$ is defined as an ordinary function, so its arguments are non-random. By contrast, stochastic operators such as $P$ and $\mathbb{E}$ act on random quantities and therefore involve $X$ and $A$.}
\begin{align}
    p_t(q \mid y_{t-1}, x, \theta, \alpha) := \frac{\exp \left(z_t(q,y_{t-1},x,\theta) +\alpha_q\right) }{\sum_{r=1}^{Q}
\exp\left( z_t(r,y_{t-1},x,\theta)  +\alpha_r \right) }
\label{DefPt}
\end{align}
for the conditional probability of choosing alternative $q$ in period $t$, given that the previous choice was $y_{t-1}$.
With this notation, the joint probability of observing a sequence of outcomes $y=(y_1,\ldots,y_T)$ given the initial condition and fixed effects is
  \begin{align}
p(y \mid y_0, x, \theta, \alpha) := P\left(  Y=y  \, \Big| \,  Y_0=y_0 , X=x, A=\alpha \right) =
 \prod_{t=1}^T p_t(y_t \mid y_{t-1}, x, \theta, \alpha) .
      \label{JointProb}
 \end{align}
The true model parameters are denoted by $\theta_0$. Throughout, we assume that the model in \eqref{JointProb} is correctly specified and all probabilistic statements below are for the model distribution generated under $\theta_0$.

Our goal is to estimate the common parameters $\theta_0$ from the observed data as $n \rightarrow \infty$ with $T$ fixed. The individual-specific fixed effects $A_i$ are treated as nuisance parameters. Their distribution is left unspecified, and they are allowed to be arbitrarily correlated with the regressors $X_i$ and the initial conditions $Y_{i0}$. In this setting, treating the individual-specific fixed effects, $A_i$, as parameters to be estimated will lead  to the incidental parameter problem \citep{neyman1948consistent}.

\section{Moment conditions free of fixed effects}
\label{sec:moments}

\subsection{Moment condition approach}

A classic approach to the incidental parameter problem in parametric models is to look for sufficient statistics for the fixed effects. Conditional on these sufficient statistics, the likelihood does not depend on the fixed effects and can be used for estimation. This approach was pioneered by \cite{rasch1960studies} and \cite{andersen1970asymptotic}. For the static multinomial logit model without regressors, \cite{chamberlain_analysis_1980} showed that the vector of choice frequencies serves as a sufficient statistic. For dynamic models, \cite{magnac2000subsidised} and \cite{aguirregabiriaSufficientStatisticsUnobserved2021} have developed sufficient statistic approaches. However, these methods do not generally extend to dynamic multinomial logit models with strictly exogenous regressors.

An alternative approach is to construct moment conditions that depend on the common parameters but not on the individual-specific fixed effects. \cite{bonhomme2012functional} developed a general framework for obtaining such moment conditions via ``functional differencing.'' This approach has been applied to dynamic binary logit models by \cite{kitazawa2022transformations} and \cite{honore2025moment}, and to dynamic ordered logit models by \cite{HonoreMurisWeidner2025}. \cite{dano2023transition} provides a general treatment using transition probabilities. The present paper extends this approach to the dynamic multinomial logit model. 

Formally, we seek moment functions $m_{y_0}(y,x,\theta) \in \mathbb{R}$ such that
\begin{align}
\mathbb{E}\left[ m_{Y_0}(Y,X,\theta_0)
   \, \Big| \, Y_0,X,A \right] = 0
   \label{MomentCondition}
\end{align}
for all possible values of the fixed effect $A$. Conditioning on the entire covariate path $X$ here uses the strict exogeneity assumption from Section~\ref{sec:model}, which rules out feedback from past outcomes to future covariates.
For given values of $Y_0=y_0$, $X=x$, $\theta$, and $A=\alpha$, the distribution of $Y=(Y_1,\ldots,Y_T) \in \{1,\ldots,Q\}^T$ is given by \eqref{JointProb}. The conditional moment condition \eqref{MomentCondition} can therefore be written as
\begin{align}
    \sum_{y \in \{1,\ldots,Q\}^T} m_{y_0}(y,x,\theta) \, p(y \mid y_0, x, \theta, \alpha) = 0 ,
     \qquad \text{for all } \alpha \in \mathbb{R}^Q.
     \label{MomentConditionSum}
\end{align}
This is a system of linear equations in the $Q^T$ unknowns $m_{y_0}(y,x,\theta)$, where $y$ ranges over $\{1,\ldots,Q\}^T$. The system must hold for all values of $\alpha$, which imposes infinitely many constraints. The existence of solutions to this infinite-dimensional system is not guaranteed for arbitrary models, and the distributional assumption on the errors is important for finding valid moment conditions.  This linear-algebra view of the problem is the starting point of the functional differencing approach of \citet{bonhomme2012functional}: valid moment functions are those that lie in the joint null space of the constraints generated by all $\alpha \in \mathbb{R}^Q$.

If we can establish the conditional moment condition \eqref{MomentCondition}, then by the law of iterated expectations, we also have the unconditional moment conditions
\begin{align}
\mathbb{E} \left[ h(Y_0,X,\theta_0) \,  m_{Y_0}(Y,X,\theta_0)   \right] &= 0
   \label{MomentUnconditional}
\end{align}
for any function $h$ such that the expectation is well-defined. Subject to regularity conditions, these unconditional moment conditions can then be used to estimate the model parameters $\theta_0$ by the generalized method of moments (GMM). This approach solves the incidental parameter problem \citep{neyman1948consistent}, because the moment condition \eqref{MomentUnconditional} does not feature the individual-specific effects $A$ at all. No assumptions are imposed on the distribution of those nuisance parameters, and they need not be estimated.

Finding a non-trivial solution $m_{y_0}(y,x,\theta)$ to the system \eqref{MomentConditionSum} is generally hard. The system must hold for all $\alpha \in \mathbb{R}^Q$, which imposes infinitely many restrictions on the $Q^T$ unknowns. For arbitrary parametric models, only the trivial solution $m_{y_0} \equiv 0$ may exist. Even when non-trivial solutions do exist, they are typically not available in closed form. \citet{bonhomme2012functional} provides an abstract characterization of the space of valid moment conditions through functional differencing, but this characterization does not by itself deliver explicit moment functions that can be used for estimation. As a result, explicit constructions of moment functions free of the fixed effects have so far been carried out on a model-by-model basis, relying on features specific to each setting, such as the logit structure in the binary and ordered choice models. The main contribution of this paper is to provide a similar explicit construction for the dynamic multinomial logit model, using the transformed model approach developed in the next subsection.

\subsection{Transformed model}
\label{Sec: Transformed model}

Our construction of valid moment functions is based on the concept of a transformed model. The key observation is that the fixed effects $A$ enter the model probabilities in a specific way, and by comparing the true model with a suitably constructed transformed model, one can eliminate the dependence on $A$.

The intuition behind this construction is as follows. 
The fixed effects $A$ enter the joint probability $p(y \mid y_0, x, \theta, \alpha)$ in two ways: (i) through the product $\prod_{t=1}^T \exp(\alpha_{y_t})$ in the numerators of the period-by-period choice probabilities, and (ii) through their denominators $\prod_{t=1}^T\sum_{r=1}^{Q}
\exp\left( z_t(r,y_{t-1},x,\theta) +\alpha_r \right)$. 
Crucially, the numerator product depends on the outcome sequence only through the unordered multiset $\{y_1, \ldots, y_T\}$, which renders it invariant under any permutation $\sigma$ of the time indices. This implies that comparing the true model with a transformed one that reshuffles outcomes while preserving the denominators will eliminate the fixed effect contribution. We will show below that the resulting likelihood ratio $p^\sigma(y \mid y_0,x,\theta,\alpha)/p(y \mid y_0,x,\theta,\alpha)$ is indeed 
free of $\alpha$ (Lemma~\ref{lemma:SnoAlpha}), and that when $\sigma$ is a cyclic permutation this ratio can be turned into valid moment conditions for estimation.

Let $\sigma: \{1,\ldots,T\} \to \{1,\ldots,T\}$ be a permutation of the time indices. We define the transformed model probability as
\begin{align}
p^\sigma(y \mid y_0, x, \theta, \alpha) :=  p_1(y_{\sigma(T)} \mid y_{0}, x, \theta, \alpha)  \, \prod_{t=2}^T \, p_t(y_{\sigma(t-1)} \mid y_{t-1}, x, \theta, \alpha) .
\label{AuxModel}
\end{align}
Compared to the true model, where period $t$ generates outcome $y_t$, the transformed model reassigns which outcome is generated in each period: period 1 generates $y_{\sigma(T)}$, and period $t \geq 2$ generates $y_{\sigma(t-1)}$. The true model is the special case $\sigma=\sigma_+$, where $\sigma_+$ is the cyclic increment
$\sigma_+(t)=t+1$ for $t<T$ and $\sigma_+(T)=1$.

The expression $p^\sigma$ in \eqref{AuxModel} is nonnegative by construction, but it does not sum to one over all $y \in \{1,\ldots,Q\}^T$ for arbitrary permutations $\sigma$. For a general permutation $\sigma$, the factors in \eqref{AuxModel} do not necessarily generate a single cyclic $T$-step Markov chain through all $T$ outcomes.
To see this formally, define
\begin{equation}
\tau(k) := \sigma^k(T), \qquad k=1,\ldots,T,
   \label{TauDefinition}
\end{equation}
where $\sigma^k$ is the $k$-fold composition of $\sigma$, e.g. $\sigma^2(T)= \sigma(\sigma(T))$.
The transformed model has a proper Markov structure only if $\tau$ visits each index in $\{1,\ldots,T\}$ exactly once. In this case, $\tau$ is itself a permutation of $\{1,\ldots,T\}$, and $p^\sigma$ can be rewritten as
\begin{align}
p^\sigma(y \mid y_0, x, \theta, \alpha) =
p_1(y_{\tau(1)} \mid y_0, x, \theta, \alpha) \prod_{k=2}^T p_{\tau(k-1)+1}(y_{\tau(k)} \mid y_{\tau(k-1)}, x, \theta, \alpha) .
\label{AuxModelTau}
\end{align}
We call the sequence $\tau = (\tau(1), \tau(2), \ldots, \tau(T))$ the generation order of $\sigma$. The name reflects the role that $\tau$ plays in the representation \eqref{AuxModelTau}: starting from $y_0$, the transformed model generates $y_{\tau(1)}$ first, then $y_{\tau(2)}$, and so on, with $y_{\tau(T)}$ generated last. In other words, the generation order of $\sigma$ is the order in which the outcomes $y_1, \ldots, y_T$ are produced by the transformed model $p^\sigma$ as a Markov chain.
It turns out that \eqref{AuxModel} can be rewritten as \eqref{AuxModelTau} if and only if $\sigma$ is a {\it cyclic} permutation of $\{1,\ldots,T\}$.
This is a well-known concept in combinatorics, which simply means that $T,\sigma(T),\sigma^2(T),\sigma^3(T),\ldots,\sigma^{T-1}(T)$ are all distinct.\footnote{
Formally, a permutation $\sigma:\{1,\ldots,T\}\to\{1,\ldots,T\}$ is cyclic if and only if the orbit of $T$ under $\sigma$ has length $T$. Equivalently,
$
\{\sigma^k(T): k=0,1,\ldots,T-1\}=\{1,\ldots,T\}.
$
}

To see why cyclic permutations are the relevant class, recall that every permutation $\sigma$ of $\{1,\ldots,T\}$ decomposes into disjoint cycles, and that the sequence $\tau(k) = \sigma^k(T)$ traces only the cycle of $\sigma$ containing $T$. The reformulation \eqref{AuxModelTau} represents the transformed model as a single Markov chain through all $T$ time periods, which is possible only when the cycle of $T$ has length $T$, that is, when $\sigma$ is cyclic. For a non-cyclic $\sigma$, the cycle of $T$ has length strictly less than $T$, and $\tau(k)$ returns to $T$ before visiting the time indices in the other cycles of $\sigma$. In this case $p^\sigma$ factors as a product of independent Markov chains, one per cycle of $\sigma$. The chains associated with cycles not containing $T$ are ``closed loops'' that start and end at the same outcome, and for generic $(x,\theta,\alpha)$ they do not sum to one over the corresponding outcomes. The cyclic case is therefore the only one for which $\sum_y p^\sigma(y \mid y_0, x, \theta, \alpha) = 1$ holds for all $(x,\theta,\alpha)$; see Appendix~\ref{app:generalized} for the factorization in the non-cyclic case.

Once we have the reformulation \eqref{AuxModelTau}, then it is straightforward to show that $p^\sigma(y \mid y_0, x, \theta, \alpha) $ sums to one, as summarized by the following lemma.

\begin{lemma}
\label{lemma:cyclic}
If $\sigma:\{1,\ldots,T\}\to\{1,\ldots,T\}$ is a cyclic permutation, then
\begin{align}
\sum_{y \in \{1,\ldots,Q\}^T} p^\sigma(y \mid y_0, x, \theta, \alpha) = 1 .
   \label{PSigmaSumToOne}
\end{align}
\end{lemma}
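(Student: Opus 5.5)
The plan is to first justify the reformulation \eqref{AuxModelTau} for cyclic $\sigma$, and then sum out the outcomes one at a time in reverse generation order, using that each factor $p_t(\cdot \mid \cdot)$ is a probability distribution over its first argument.

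\textbf{Step 1: the reindexing.} Set $\tau(k)=\sigma^k(T)$. Because $\sigma$ is cyclic, $\tau$ is a bijection of $\{1,\ldots,T\}$ with $\tau(T)=\sigma^T(T)=T$. The factors in \eqref{AuxModel} are indexed by the period $t$. Period $1$ generates $y_{\sigma(T)}=y_{\tau(1)}$ given $y_0$. Period $t\ge 2$ generates $y_{\sigma(t-1)}$ given $y_{t-1}$.

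Substitute $t-1=\tau(k-1)$ for $k=2,\ldots,T$. As $k$ runs over $2,\ldots,T$, the index $\tau(k-1)$ runs over $\{1,\ldots,T\}\setminus\{\tau(T)\}=\{1,\ldots,T-1\}$, so $t=\tau(k-1)+1$ covers exactly $\{2,\ldots,T\}$. The corresponding factor is $p_{\tau(k-1)+1}(y_{\sigma(\tau(k-1))}\mid y_{\tau(k-1)})=p_{\tau(k-1)+1}(y_{\tau(k)}\mid y_{\tau(k-1)})$. This gives \eqref{AuxModelTau}. This bookkeeping is the only delicate point: the factor for period $t$ must be matched to a unique $k$, and this uses $\tau(T)=T$.

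\textbf{Step 2: summation.} Since $\tau$ is a bijection, summing over $y\in\{1,\ldots,Q\}^T$ is the same as summing over $(y_{\tau(1)},\ldots,y_{\tau(T)})$. Sum first over $y_{\tau(T)}$. It appears only in the last factor $p_{\tau(T-1)+1}(y_{\tau(T)}\mid y_{\tau(T-1)},x,\theta,\alpha)$. By \eqref{DefPt}, $\sum_{q=1}^Q p_t(q\mid y_{t-1},x,\theta,\alpha)=1$ for every $t$ and every $y_{t-1}$, so this factor sums to one.

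Continue by backward induction on $k$. After summing out $y_{\tau(T)},\ldots,y_{\tau(k+1)}$, the outcome $y_{\tau(k)}$ appears only in the factor generating it. That factor sums to one over $y_{\tau(k)}$. The last step is the factor $p_1(y_{\tau(1)}\mid y_0)$, which also sums to one. Hence the total is $1$.

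No further obstacle arises, since the summation is the standard Markov-chain argument. The step I expect to be the main obstacle is verifying in Step 1 that each period appears exactly once and that the conditioning variable $y_{\tau(k-1)}$ has already been generated. Both facts follow from cyclicity.
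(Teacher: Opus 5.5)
Your proposal is correct and follows the paper's proof: rewrite $p^\sigma$ in generation order as in \eqref{AuxModelTau}, then sum out the outcomes in reverse order using $\sum_{j=1}^Q p_t(j\mid\cdot)=1$. The only difference is that you explicitly verify the reindexing via $\tau(T)=T$ and $\sigma(\tau(k-1))=\tau(k)$, which the paper asserts in the text before the lemma rather than proving.
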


\begin{proof}[\bf Proof of Lemma~\ref{lemma:cyclic}]
Since $\sigma$ is cyclic,  the sequence $\tau(k)=\sigma^k(T)$ visits each index in $\{1,\ldots,T\}$ exactly once. Hence \eqref{AuxModel} is a $T$-step Markov chain written in a different time order, see \eqref{AuxModelTau}. Therefore iterated summation gives \eqref{PSigmaSumToOne}
since at each step $\sum_{j=1}^Q p_t(j\mid \cdot)=1$.
\end{proof}

\subsection{Moment construction} 
\label{Sec: Moment construction}
 
Let $\sigma:\{1,\ldots,T\}\to\{1,\ldots,T\}$ be a permutation. Define $\sigma(0):=\sigma(T)$. Then \eqref{AuxModel} can be written more concisely as
\begin{align}
p^\sigma(y \mid y_0, x, \theta, \alpha)
= \prod_{t=1}^T p_t\!\left(y_{\sigma(t-1)} \mid y_{t-1}, x, \theta, \alpha\right),
\label{AuxModelSimplified}
\end{align}
and this definition of $\sigma(0)$ will simplify many expressions below.

The ratio of the transformed model probability to the true model probability plays a central role in our construction. The following lemma shows that the ratio
$
 p^\sigma(y \mid y_0, x, \theta, \alpha) / {p(y \mid y_0, x, \theta, \alpha)}
$
is free of $\alpha$ and equal to
\begin{align}
S^\sigma_{y_0}(y, x, \theta)
:= \exp\!\left( \sum_{t=1}^T \Big[ z_t\!\left(y_{\sigma(t-1)}, y_{t-1}, x, \theta\right)
- z_t\!\left(y_t, y_{t-1}, x, \theta\right) \Big] \right).
\label{DefS}
\end{align}

\begin{lemma}
\label{lemma:SnoAlpha}
For any permutation $\sigma:\{1,\ldots,T\}\to\{1,\ldots,T\}$ we have
\begin{align}
\frac{p^\sigma(y \mid y_0, x, \theta, \alpha)}{p(y \mid y_0, x, \theta, \alpha)}
= S^\sigma_{y_0}(y, x, \theta),
\label{SExplicit}
\end{align}
implying that this ratio is free of $\alpha$.
\end{lemma}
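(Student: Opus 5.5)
The plan is a direct computation that takes the ratio factor by factor. Using the concise form \eqref{AuxModelSimplified} together with \eqref{JointProb}, both $p^\sigma$ and $p$ are products over $t=1,\ldots,T$ of single-period logit probabilities $p_t(\cdot \mid y_{t-1},x,\theta,\alpha)$. The key structural observation is that in period $t$ both factors condition on the \emph{same} lagged outcome $y_{t-1}$. The transformed model only changes which outcome is generated in period $t$, namely $y_{\sigma(t-1)}$ instead of $y_t$. Hence by \eqref{DefPt} the two period-$t$ factors share the identical denominator
\[
\sum_{r=1}^Q \exp\bigl(z_t(r,y_{t-1},x,\theta)+\alpha_r\bigr).
\]
This denominator cancels term by term in the ratio. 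We are left with
\[
\frac{p^\sigma}{p}=\prod_{t=1}^T \frac{\exp\bigl(z_t(y_{\sigma(t-1)},y_{t-1},x,\theta)+\alpha_{y_{\sigma(t-1)}}\bigr)}{\exp\bigl(z_t(y_t,y_{t-1},x,\theta)+\alpha_{y_t}\bigr)} .
\]

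Next I separate the $z_t$ parts from the $\alpha$ parts. The $z_t$ parts combine into exactly the exponent defining $S^\sigma_{y_0}(y,x,\theta)$ in \eqref{DefS}. The remaining factor is
\[
\exp\Bigl(\sum_{t=1}^T \alpha_{y_{\sigma(t-1)}}-\sum_{t=1}^T\alpha_{y_t}\Bigr).
\]
To show this factor equals one, I note that with the convention $\sigma(0)=\sigma(T)$, the map $t\mapsto \sigma(t-1)$ on $\{1,\ldots,T\}$ takes the values $\sigma(T),\sigma(1),\ldots,\sigma(T-1)$. This is just a reordering of $\sigma(1),\ldots,\sigma(T)$, i.e.\ of $\{1,\ldots,T\}$, since $\sigma$ is a bijection. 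So $t\mapsto\sigma(t-1)$ is itself a permutation of the time indices. The first sum is therefore a reindexing of the second, and the difference vanishes. This is the multiset-invariance point stressed in the intuition preceding the lemma.

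No cyclicity is needed anywhere. Lemma~\ref{lemma:cyclic} plays no role, and the result holds for every permutation, as stated. The only step requiring care is the bookkeeping check that $t\mapsto\sigma(t-1)$ (with the wrap-around at $t=1$) is a bijection. Once that is verified, the argument is routine. Strict positivity of $p(y\mid y_0,x,\theta,\alpha)$, which follows from \eqref{DefPt}, ensures the ratio is well defined.
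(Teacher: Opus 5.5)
Your proposal is correct and follows essentially the same route as the paper's proof. Like the paper, you cancel the shared period-$t$ logit denominators, collect the $z_t$ terms into $S^\sigma_{y_0}$, and remove the $\alpha$ terms by noting that $t \mapsto \sigma(t-1)$ with $\sigma(0)=\sigma(T)$ is a reordering of $\{1,\ldots,T\}$.
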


\begin{proof}[\bf Proof of Lemma~\ref{lemma:SnoAlpha}]
Using \eqref{AuxModelSimplified} and the definition of the true model,
\[
\frac{p^\sigma(y \mid y_0,x,\theta,\alpha)}{p(y \mid y_0,x,\theta,\alpha)}
=\prod_{t=1}^T 
\frac{p_t\!\left(y_{\sigma(t-1)} \mid y_{t-1},x,\theta,\alpha\right)}
     {p_t\!\left(y_t \mid y_{t-1},x,\theta,\alpha\right)}.
\]
For each $t$, the normalizing denominators cancel, so
\[
\frac{p_t\!\left(y_{\sigma(t-1)} \mid y_{t-1},x,\theta,\alpha\right)}
     {p_t\!\left(y_t \mid y_{t-1},x,\theta,\alpha\right)}
=\exp\!\Big(z_t(y_{\sigma(t-1)},y_{t-1},x,\theta)-z_t(y_t,y_{t-1},x,\theta)\Big)
\,\exp\!\Big(\alpha_{y_{\sigma(t-1)}}-\alpha_{y_t}\Big).
\]
Multiplying over $t$ gives $S^\sigma_{y_0}(y, x, \theta)$ times
\[
\exp\!\left(\sum_{t=1}^T \alpha_{y_{\sigma(t-1)}}-\sum_{t=1}^T \alpha_{y_t}\right).
\]
Since $\sigma$ is a permutation and $\sigma(0)=\sigma(T)$, the indices $\sigma(t-1)$, $t=1,\ldots,T$, are a reordering of $1,\ldots,T$. Hence $\sum_{t=1}^T \alpha_{y_{\sigma(t-1)}}=\sum_{t=1}^T \alpha_{y_t}$, so the $\alpha$ term cancels. This yields \eqref{SExplicit}. 
\end{proof}

\bigskip
 
Next, let $\sigma$ be a cyclic permutation. Then the simplest valid moment condition that follows directly from Lemma~\ref{lemma:cyclic} and Lemma~\ref{lemma:SnoAlpha} is
\begin{align}
\mathbb{E}\left[ 1 - S^\sigma_{Y_0}(Y, X, \theta_0) \, \Big| \, Y_0, X, A \right] = 0 .
\label{SimpleMoment}
\end{align}
To see this, note that
\[
\mathbb{E}\!\left[S^\sigma_{Y_0}(Y,X,\theta_0)\mid Y_0,X,A\right]
=\sum_{y} S^\sigma_{Y_0}(y,X,\theta_0)\, p(y\mid Y_0,X,\theta_0,A)
=\sum_{y} p^\sigma(y\mid Y_0,X,\theta_0,A)
=1,
\]
where the second equality uses \eqref{SExplicit} and the last equality uses \eqref{PSigmaSumToOne}.

While valid, the moment function $1 - S^\sigma_{y_0}(y, x, \theta)$ may not be the most useful in practice, because it typically has $\theta = 0$ as a root regardless of the true parameter value.  Indeed, for the linear specification \eqref{LinearSpec} we have $z_t(\cdot,\cdot,x,0) = 0$, so $S^\sigma_{y_0}(y,x,0) = 1$ for every permutation and every outcome sequence, and the moment function $1 - S^\sigma$ vanishes identically at $\theta = 0$. Moment conditions based on \eqref{SimpleMoment} alone therefore cannot separate the true parameter value from zero. The following theorem therefore presents more general valid moment functions.

The construction works by pairing two cyclic permutations $\sigma_1$ and $\sigma_2$ whose generation orders $\tau_1$ and $\tau_2$ agree on their first $k-1$ entries, but differ at position $k$. For both of the transformed models, the first $k-1$ outcomes $Y_{\tau_1(1)}, \ldots, Y_{\tau_1(k-1)}$ are produced by identical Markov transitions, so conditioning on these outcomes removes the shared history. The divergence in position $k$ is what makes the moment function nontrivial. The outcomes in the remaining positions $k+1,\ldots,T$ telescope out of the expectation through the Markov factorization \eqref{AuxModelTau}, leaving an exact cancellation between the two $S^\sigma$ terms.

\begin{theorem}
\label{th:MainTheorem}
Let $\sigma_1$ and $\sigma_2$ be two cyclic permutations of $\{1,\ldots,T\}$, and define
$\tau_1(k) := \sigma_1^k(T)$ and $\tau_2(k) := \sigma_2^k(T)$ for $ k \in \{1,\ldots,T\}$, analogous to  \eqref{TauDefinition}.
Assume $\sigma_1 \neq \sigma_2$ and
let $k \in \{1,\ldots,T\}$ be the smallest integer such that $\tau_1(k) \neq \tau_2(k)$. Then for any $q \in \{1,\ldots,Q\}$, the moment function
\begin{align}
m^{\sigma_1,\sigma_2,q}_{y_0}(y, x, \theta) := \mathbbm{1}\{y_{\tau_1(k)} = q\} \, S^{\sigma_1}_{y_0}(y, x, \theta) - \mathbbm{1}\{y_{\tau_2(k)} = q\} \, S^{\sigma_2}_{y_0}(y, x, \theta)
\label{MainMoment}
\end{align}
satisfies
\begin{align}
\mathbb{E}\left[ m^{\sigma_1,\sigma_2,q}_{Y_0}(Y, X, \theta_0) \, \Big| \, Y_{\tau_1(1)}, Y_{\tau_1(2)}, \ldots, Y_{\tau_1(k-1)}, Y_0, X, A \right] = 0 .
\label{MainMomentCondition}
\end{align}
This implies that for any function $g:  \{1,\ldots,Q\}^{k-1} \rightarrow \mathbb{R}$ we have
\begin{align}
\mathbb{E}\left[ g( Y_{\tau_1(1)}, Y_{\tau_1(2)}, \ldots, Y_{\tau_1(k-1)}) \; m^{\sigma_1,\sigma_2,q}_{Y_0}(Y, X, \theta_0) \, \Big| \, Y_0, X, A \right] = 0 .
\label{MainMomentUnconditional}
\end{align}

\end{theorem}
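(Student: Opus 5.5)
The approach is to use Lemma~\ref{lemma:SnoAlpha} to turn each of the two expectations in \eqref{MainMoment} into a probability under the corresponding transformed model, and then to use the Markov representation \eqref{AuxModelTau} of each $p^{\sigma_j}$ to show that these two probabilities coincide once the common prefix of the generation orders is fixed.

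Fix $y_0,x,\alpha$ and $\theta=\theta_0$, and write $I=\{\tau_1(1),\ldots,\tau_1(k-1)\}$. By the minimality of $k$, $I=\{\tau_2(1),\ldots,\tau_2(k-1)\}$ as well. Instead of working directly with the conditional expectation in \eqref{MainMomentCondition}, I will prove the equivalent statement that, for every fixed prefix $(y_{\tau_1(1)},\ldots,y_{\tau_1(k-1)})=(a_1,\ldots,a_{k-1})$,
\[
\sum_{y:\,y_{\tau_1(j)}=a_j,\ j<k} \mathbbm{1}\{y_{\tau_1(k)}=q\}\,S^{\sigma_1}_{y_0}(y,x,\theta_0)\,p(y\mid y_0,x,\theta_0,\alpha)
\]
equals the same sum with $\sigma_2,\tau_2$ in place of $\sigma_1,\tau_1$. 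Dividing by the conditioning probability gives \eqref{MainMomentCondition}; this probability is positive because all logit probabilities are positive. By Lemma~\ref{lemma:SnoAlpha}, $S^{\sigma_j}p=p^{\sigma_j}$, so the $j$-th sum is $P^{\sigma_j}(Y_{\tau_j(1)}=a_1,\ldots,Y_{\tau_j(k-1)}=a_{k-1},\,Y_{\tau_j(k)}=q)$. Here $P^{\sigma_j}$ is the distribution given by $p^{\sigma_j}$, which is a proper distribution by Lemma~\ref{lemma:cyclic}.

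Next I use \eqref{AuxModelTau}. Summing out $y_{\tau_j(T)},y_{\tau_j(T-1)},\ldots,y_{\tau_j(k+1)}$ in that order, each last factor sums to one, so the telescoping leaves
\[
p_1(a_1\mid y_0)\prod_{m=2}^{k-1}p_{\tau_j(m-1)+1}(a_m\mid a_{m-1})\cdot p_{\tau_j(k-1)+1}(q\mid a_{k-1}).
\]
For $k=1$ this expression reduces to $p_1(q\mid y_0)$. Since $\tau_1(m)=\tau_2(m)$ for all $m<k$, every factor, including the final one, is the same for $j=1,2$: its time index $\tau_j(k-1)+1$ and its conditioning state $a_{k-1}$ are both shared. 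The only difference between the two sums is which coordinate carries the value $q$, namely $\tau_1(k)$ versus $\tau_2(k)$. That coordinate is the $k$-th generated outcome in each chain, with identical transition law. Hence the two sums are equal, which proves \eqref{MainMomentCondition}.

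Finally, \eqref{MainMomentUnconditional} follows by multiplying by $g(Y_{\tau_1(1)},\ldots,Y_{\tau_1(k-1)})$, which is measurable with respect to the conditioning variables, and applying the law of iterated expectations. The main obstacle is the bookkeeping. The conditioning event is defined through the coordinates in $I$, and under $\sigma_1$ and $\sigma_2$ these coordinates must be read in the same generation order with the same transition time indices. Checking that $\tau_1(m)=\tau_2(m)$ for $m<k$ gives both of these is the step to verify carefully. The case $k=1$, which has an empty prefix, also needs to be checked separately.
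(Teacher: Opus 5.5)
Your proposal is correct and follows essentially the same route as the paper. You fix the prefix values, use Lemma~\ref{lemma:SnoAlpha} to rewrite each term as a sum of $p^{\sigma_i}$ over the restricted outcome set, and telescope via \eqref{AuxModelTau} to the common product $p_1(a_1\mid y_0)\prod_{m=2}^{k-1}p_{\tau_i(m-1)+1}(a_m\mid a_{m-1})\,p_{\tau_i(k-1)+1}(q\mid a_{k-1})$, which is identical for $i=1,2$ because $\tau_1$ and $\tau_2$ agree before position $k$ (your remarks on dividing by the positive conditioning probability and on the case $k=1$ only make explicit what the paper leaves implicit).
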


\begin{proof}[\bf Proof of Theorem~\ref{th:MainTheorem}]
It suffices to prove \eqref{MainMomentCondition}, since \eqref{MainMomentUnconditional} follows by multiplying by $g(Y_{\tau_1(1)},\ldots,Y_{\tau_1(k-1)})$ and taking iterated expectations.

Fix arbitrary $q_1,\ldots,q_{k-1} \in \{1,\ldots,Q\}$. For $i \in \{1,2\}$, Lemma~\ref{lemma:SnoAlpha} gives $S^{\sigma_i} = p^{\sigma_i}/p$, so
\begin{align*}
&\mathbb{E}\!\left[\mathbbm{1}\{Y_{\tau_i(k)}=q\}\,S^{\sigma_i}_{Y_0}(Y,X,\theta_0)\cdot \prod_{j=1}^{k-1}\mathbbm{1}\{Y_{\tau_1(j)}=q_j\} \,\Big|\, Y_0,X,A\right] \\
&\qquad \qquad \qquad \qquad \qquad = \sum_{\substack{y \in \{1,\ldots,Q\}^T:\\ y_{\tau_1(j)}=q_j\ \forall\, j=1,\ldots,k-1}} \mathbbm{1}\{y_{\tau_i(k)}=q\}\; p^{\sigma_i}(y\mid Y_0,X,\theta_0,A) .
\end{align*}
By the Markov factorization \eqref{AuxModelTau}, summing over the remaining components $y_{\tau_i(k+1)},\ldots,y_{\tau_i(T)}$ telescopes to one, and the right-hand side equals
\[
p_1(q_1 \mid Y_0,X,\theta_0,A)\,
\prod_{j=2}^{k-1} p_{\tau_i(j-1)+1}(q_j \mid q_{j-1},X,\theta_0,A)\,
\cdot\, p_{\tau_i(k-1)+1}(q \mid q_{k-1},X,\theta_0,A).
\]
For $k=1$ no outcomes are held fixed, and the right-hand side is simply $p_1(q \mid Y_0,X,\theta_0,A)$.
Since $\tau_1(j)=\tau_2(j)$ for all $j \le k-1$, this expression is the same for $i=1$ and $i=2$. Hence the two terms in \eqref{MainMoment} cancel. Since this holds for all $(q_1,\ldots,q_{k-1})$, equation \eqref{MainMomentCondition} follows.
\end{proof}

\subsubsection{Example: moment conditions for $T=3$}

To illustrate, consider $T=3$. There are $(T-1)!=2$ cyclic permutations of $\{1,2,3\}$: the cyclic increment $\sigma_+$, which gives the true model, and the reverse cycle $\sigma_1$ defined by $\sigma_1(1)=3$, $\sigma_1(2)=1$, $\sigma_1(3)=2$. Taking $\sigma_2 = \sigma_+$, the generation orders are $\tau_1=(2,1,3)$ and $\tau_2=(1,2,3)$, which first differ at $k=1$. Since $\sigma_2=\sigma_+$ is the true model, $S^{\sigma_2}_{y_0}(y,x,\theta)=1$, and Theorem~\ref{th:MainTheorem} yields $Q$ conditional moment conditions:
\begin{align}
\mathbb{E}\Big[\mathbbm{1}\{Y_2=q\}\,S^{\sigma_1}_{Y_0}(Y,X,\theta_0) - \mathbbm{1}\{Y_1=q\} \,\Big|\, Y_0, X, A\Big] = 0, \quad q=1,\ldots,Q,
\label{MomentT3}
\end{align}
where, by \eqref{DefS},
\begin{align}
S^{\sigma_1}_{y_0}(y,x,\theta) = \exp\!\Big(&z_1(y_2,y_0,x,\theta) - z_1(y_1,y_0,x,\theta) \notag\\
&+ z_2(y_3,y_1,x,\theta) - z_2(y_2,y_1,x,\theta) \notag\\
&+ z_3(y_1,y_2,x,\theta) - z_3(y_3,y_2,x,\theta)\Big).
\label{ST3}
\end{align}
For $Q=2$, the moment conditions \eqref{MomentT3} recover the leading result for the dynamic binary logit model with $T=3$ in \cite{honore2025moment}. The transformed model approach provides a particularly transparent derivation of these known moment conditions for $Q=2$. %

More generally, varying $(\sigma_1, \sigma_2, q, g)$ over all admissible choices in Theorem~\ref{th:MainTheorem} generates a large family of valid moment functions, many of which are linearly dependent. Determining the number of linearly independent moment conditions that this construction yields, as a function of $Q$ and $T$, is the subject of the next subsection.

\subsection{Counting of linearly independent moment conditions}
\label{Sec: Counting}

Theorem~\ref{th:MainTheorem} produces a large family of moment functions as we vary the cyclic permutations $\sigma_1, \sigma_2$ and the weighting function $g$. Many choices will generate moment conditions that are redundant, in the sense that they can be written as linear combinations of others in the family. It is therefore useful to know how many linearly independent moment conditions can be obtained from the theorem.

We now describe an explicit parameterization that extracts $\ell(Q, T)$ moment functions from Theorem~\ref{th:MainTheorem}, where $\ell(Q, T)$ is the count given in \eqref{CountingFormula} below. The moment functions will be indexed by tuples $(j, \mathbf{t}, s_2, \mathbf{q})$ where $j \in \{0, 1, \ldots, T - 3\}$ is an integer that records the number of shared time indices between the two generation orders of Theorem~\ref{th:MainTheorem}, $\mathbf{t} = (t_1 < \cdots < t_j)$ is a collection of $j$ time periods in $\{1, \ldots, T - 1\}$, $s_1 := \min(\{1, \ldots, T - 1\} \setminus \mathbf{t})$ is therefore determined by $\mathbf{t}$, and $s_2$ is any element of $\{1, \ldots, T - 1\} \setminus (\mathbf{t} \cup \{s_1\})$, and $\mathbf{q} = (q_0, q_1, \ldots, q_j) \in \{1, \ldots, Q\}^{j+1}$ is a vector of outcome values. For each such tuple the corresponding moment function takes the form
\begin{align}
 m^{j, \mathbf{t}, s_1, s_2, \mathbf{q}}_{y_0}(y, x, \theta) \;:=\; \left[ \prod_{l=1}^j  \mathbbm{1}\!\left\{y_{t_l} = q_l \right\} \right] \Big[\mathbbm{1}\{y_{s_1} = q_0\} \, S^{\sigma_1}_{y_0}(y, x, \theta) - \mathbbm{1}\{y_{s_2} = q_0\} \, S^{\sigma_2}_{y_0}(y, x, \theta)\Big],
\label{MomentParameterized}
\end{align}
where $\sigma_1$ and $\sigma_2$ are the two cyclic permutations of $\{1, \ldots, T\}$ whose generation orders agree on the first $j$ entries $(t_1, \ldots, t_j)$ but differ at position $j + 1$, with $\sigma_1$ placing $s_1$ at that position and $\sigma_2$ placing $s_2$.\footnote{Explicitly, $\sigma_1$ has generation order $\tau_1 = (t_1, \ldots, t_j, s_1, r_1, \ldots, r_{T - j - 3}, s_2, T)$ and $\sigma_2$ has generation order $\tau_2 = (t_1, \ldots, t_j, s_2, r_1, \ldots, r_{T - j - 3}, s_1, T)$, where $r_1 < \cdots < r_{T - j - 3}$ are the remaining elements of $\{1, \ldots, T - 1\} \setminus (\mathbf{t} \cup \{s_1, s_2\})$ in increasing order.} Since the generation orders first differ at position $k = j + 1$, Theorem~\ref{th:MainTheorem} applies with this $k$, $q = q_0$, and weighting function $g(y_{t_1}, \ldots, y_{t_j}) = \prod_{l=1}^j \mathbbm{1}\{y_{t_l} = q_l\}$, and guarantees that $m^{j, \mathbf{t}, s_1, s_2, \mathbf{q}}_{y_0}$ satisfies \eqref{MomentCondition} at $\theta = \theta_0$.

To avoid counting redundant moment functions, we further impose the restriction $t_r = r$ for $r = 1, \ldots, \max(0, j - (Q - 2))$ on $\mathbf{t}$.\footnote{Without this restriction the construction would yield more tuples than $\ell(Q, T)$, but those additional moment functions would be linearly dependent on those in the constrained set. For $j \leq Q - 2$ the restriction is vacuous; for $j > Q - 2$ it fixes the first $j - (Q - 2)$ entries of $\mathbf{t}$ to the smallest possible values $1, 2, \ldots, j - (Q - 2)$.}

Counting the distinct tuples $(j, \mathbf{t}, s_2, \mathbf{q})$ that satisfy all the restrictions above then gives\footnote{For each $j$, the restriction on $\mathbf{t}$ fixes $\max(0, j - (Q - 2))$ of its entries and leaves $\min(j, Q - 2)$ entries to be chosen freely from the $T - 1 - \max(0, j - (Q - 2))$ remaining positions in $\{1, \ldots, T - 1\}$, giving $\binom{T - 1 - \max(j - (Q - 2),\, 0)}{\min(Q - 2,\, j)}$ choices for $\mathbf{t}$. Given $\mathbf{t}$, there are $T - 2 - j$ choices for $s_2 \in \{1, \ldots, T - 1\} \setminus (\mathbf{t} \cup \{s_1\})$, and $Q^{j+1}$ choices for $\mathbf{q}$. Summing the product of these three counts over $j \in \{0, 1, \ldots, T - 3\}$ yields \eqref{CountingFormula}.}
\begin{align}
\ell(Q, T) \;=\; \sum_{j=0}^{T - 3} \binom{T - 1 - \max(j - (Q-2),\, 0)}{\min(Q - 2,\, j)} \cdot (T - 2 - j) \cdot Q^{j + 1}.
\label{CountingFormula}
\end{align}

For small values of $Q$ this simplifies to $\ell(2, T) = 2^T - 2T$, $\ell(3, T) = 3^T - \binom{3T}{2} + 6(T - 1)$, and $\ell(4, T) = 4^T - \binom{4T}{3} + 40(T - 1)^2$. More generally, for $Q \geq T - 1$, it becomes $\ell(Q, T) = Q^T - Q (2 + Q - T) (1 + Q)^{T - 2}$. In each case the formula has the structure $\ell(Q, T) = Q^T - r(Q, T)$, where the ambient dimension $Q^T$ counts all real-valued functions of the outcome sequence $y$, and the subtracted term $r(Q, T)$ counts the linear restrictions on $m$ imposed by requiring the moment condition \eqref{MomentConditionSum} to hold for all values of the fixed effect $\alpha$.

Numerical investigations and the partial analytical results discussed in Appendix~\ref{app:counting} lead us to the following conjecture.

\begin{conjecture}
\label{conj:counting}
For $T \geq 3$ and $Q \geq 2$ and for generic $(y_0, x, \theta_0)$, the $\ell(Q, T)$ moment functions in \eqref{MomentParameterized} are linearly independent, and every valid moment function of the form \eqref{MomentCondition} at $\theta = \theta_0$ is a linear combination of them.
\end{conjecture}

The status of the conjecture is as follows. For $Q = 2$ the lower bound follows from the explicit moment functions of \citet{honore2025moment}, and the matching upper bound is established in subsequent work; see \citet{dobronyi2021identification} and \cite{dano2023transition}. For $T = 3$ and any $Q \geq 2$, we prove in Appendix~\ref{app:counting} that the $\ell(Q, 3) = Q$ moment functions in \eqref{MomentParameterized} are linearly independent, so the dimension of the space of valid moment conditions is at least $\ell(Q, 3)$. For all combinations of $(Q, T)$ reported in Table~\ref{tab:counting}, the conjecture has been verified by direct numerical computation. A general analytical proof of the conjecture remains an open problem. Appendix~\ref{app:counting} discusses two natural proof strategies (a polynomial expansion of the likelihood and a direct linear-independence argument on the moments in \eqref{MomentParameterized}) and identifies the technical obstacle.

The qualifier ``generic'' in the conjecture means that it is expected to hold outside a measure-zero subset of $(Y_0, X, \theta_0)$. For special values of $X$ and $\theta_0$, additional moment conditions may become available, so the total number of linearly independent moment conditions can exceed $\ell(Q, T)$.

\begin{table}[t]
\begin{center}
\begin{tabular}{c|cccccccc}
     & $T=1$ & $T=2$ & $T=3$ & $T=4$ & $T=5$ & $T=6$ & $T=7$ & $T=8$ \\
\hline
$Q=2$ & 0 & 0 & 2 & 8 & 22 & 52 & 114 & 240 \\
$Q=3$ & 0 & 0 & 3 & 33 & 162 & 606 & 2013 & 6327 \\
$Q=4$ & 0 & 0 & 4 & 56 & 524 & 3072 & 14548 & 62536 \\
$Q=5$ & 0 & 0 & 5 & 85 & 965 & 9145 & 62500 & 358280 \\
$Q=6$ & 0 & 0 & 6 & 120 & 1602 & 17844 & 179094 & 1399680 \\
$Q=7$ & 0 & 0 & 7 & 161 & 2471 & 31633 & 364791 & 3929793 \\
$Q=8$ & 0 & 0 & 8 & 208 & 3608 & 52192 & 679976 & 8274160 \\
\end{tabular}
\end{center}
\caption{\label{tab:counting} Conjectured number of linearly independent moment conditions $\ell(Q, T)$ for various values of $Q$ (the number of alternatives) and $T$ (the number of time periods after the initial condition). Verified by numerical computation for all entries in the table.}
\end{table}

Table~\ref{tab:counting} reports $\ell(Q, T)$ for combinations relevant for empirical applications. Two features are worth noting. First, $\ell(Q, T) = 0$ for $T \leq 2$ for every $Q$, which confirms that at least three time periods after the initial condition are required for moment conditions free of the fixed effects to exist for generic parameters. Second, the count grows quickly in both $Q$ and $T$: with $Q = 4$ alternatives and $T = 5$ time periods, the parameterization yields 524 linearly independent moment conditions for each value of the initial condition. Only a small subset of these moment conditions is typically used for estimation in practice. We return to this issue in Section~\ref{sec:empirical}.

\section{Moment conditions for more general models}
\label{sec:moregen}

Most of the derivations in Sections \ref{Sec: Transformed model} and \ref{Sec: Moment construction} are generic and do not
use the specific structure of the multinomial logit model. The only
exception is Lemma \ref{lemma:SnoAlpha} which establishes that for any permutation $\sigma
:\{1,\ldots ,T\}\rightarrow \{1,\ldots ,T\}$, 
\begin{equation}
S_{y_{0}}^{\sigma }(y,x,\theta ):=\frac{p^{\sigma }(y\mid y_{0},x,\theta
,\alpha )}{p(y\mid y_{0},x,\theta ,\alpha )}  \label{SExplicitGeneric}
\end{equation}%
is free of $\alpha $. In this section, we characterize a larger class of
models for which this is true, and it follows that moment conditions can be
constructed using the arguments of Sections \ref{Sec: Transformed model} and \ref{Sec: Moment construction} for any
model in this larger class.

Consider the model for the joint distribution of $\left( Y_{1},Y_{2},\ldots
,Y_{T}\right) $:

\begin{equation}
p(y\mid y_{0},x,\theta ,\alpha )=P\left( Y=y\,\Big|\,Y_{0}=y_{0},x,A=\alpha
\right) =\prod_{t=1}^{T}f(y_{t},z_{t}\left( y_{t-1},x,\theta \right) ,\alpha
)  \label{JointProbGeneric}
\end{equation}%
for functions $f$ and $z_{1},\ldots ,z_{T}$.

\begin{lemma}
\label{lemma:SnoAlphaGeneric}
Let $p$ be defined as in (\ref{JointProbGeneric}). Assume that for every
$(\eta _{1},\ldots ,\eta _{T})$ in the range of the functions $z_{t}$, the
joint distribution
\[
\prod_{t=1}^{T}f(y_{t},\eta _{t},\alpha )
\]%
on the outcome space of $(Y_{1},\ldots ,Y_{T})$ admits a statistic $S=s(Y_{1},\ldots ,Y_{T})$ such
that the conditional distribution of $(Y_{1},\ldots ,Y_{T})$ given $S$ does
not depend on $\alpha $, and that $s$ is symmetric in its arguments (so that
$s(y_{\pi (1)},\ldots ,y_{\pi (T)})=s(y_{1},\ldots ,y_{T})$ for every
permutation $\pi $ of $\{1,\ldots ,T\}$). Then for any permutation $\sigma
:\{1,\ldots ,T\}\rightarrow \{1,\ldots ,T\}$, the ratio
\[
\frac{p^{\sigma }(y\mid y_{0},x,\theta ,\alpha )}{p(y\mid y_{0},x,\theta
,\alpha )}=S_{y_{0}}^{\sigma }(y,x,\theta )
\]%
is free of $\alpha $.
\end{lemma}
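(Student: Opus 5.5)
The plan is to reduce the claim to a statement about the static model $\prod_t f(y_t,\eta_t,\alpha)$ with fixed indices $\eta_t$, and then use sufficiency together with the symmetry of $s$. Fix $y_0,x,\theta$ and an outcome sequence $y$. Set $\eta_t := z_t(y_{t-1},x,\theta)$ for $t=1,\ldots,T$; these depend on $y$ but not on $\alpha$. By \eqref{AuxModelSimplified} and \eqref{JointProbGeneric},
\[
p(y\mid y_0,x,\theta,\alpha)=\prod_{t=1}^T f(y_t,\eta_t,\alpha), \qquad
p^\sigma(y\mid y_0,x,\theta,\alpha)=\prod_{t=1}^T f(y_{\sigma(t-1)},\eta_t,\alpha).
\]
The key observation is that in $p^\sigma$ the conditioning lags are still $y_{t-1}$. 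The indices are therefore the same $\eta_t$, and only the generated outcomes are permuted. So both expressions are values of one and the same static joint pmf $g_\eta(w;\alpha):=\prod_t f(w_t,\eta_t,\alpha)$. They are evaluated at the two points $w=y$ and $w=y^\sigma:=(y_{\sigma(0)},y_{\sigma(1)},\ldots,y_{\sigma(T-1)})$. Since $t\mapsto\sigma(t-1)$ (with $\sigma(0)=\sigma(T)$) is a permutation of $\{1,\ldots,T\}$, $y^\sigma$ is a rearrangement of $y$.

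Next, invoke the hypothesis for this $\eta$. Sufficiency of $S=s(W)$ gives, for any $w$,
\[
g_\eta(w;\alpha)=P_\eta(W=w\mid S=s(w))\,P_\eta(S=s(w);\alpha)=h_\eta(w)\,G_\eta(s(w);\alpha),
\]
where $h_\eta$ is free of $\alpha$. By symmetry, $s(y^\sigma)=s(y)$, so the factor $G_\eta(s(\cdot);\alpha)$ is common to both points. Hence the ratio equals $h_\eta(y^\sigma)/h_\eta(y)$, which is free of $\alpha$. It depends on $y$, $y_0$, $x$ and $\theta$ only through $\eta$ and $y$, which defines $S^\sigma_{y_0}(y,x,\theta)$.

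The main obstacle is the degenerate case where $g_\eta(y;\alpha)=0$, or where $P_\eta(S=s(y);\alpha)=0$ for some $\alpha$. There the ratio is undefined, or the conditional probability is defined only on a support that might vary with $\alpha$. I would handle this by assuming, as implicitly in the statement, that the ratio is considered where $p>0$. Then $P(S=s(y);\alpha)\ge g_\eta(y;\alpha)>0$, so the conditional pmf is well defined at $s(y)$, and the factorization above is legitimate pointwise. It also requires $h_\eta(y)>0$, which holds because $g_\eta(y;\alpha)>0$.

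Finally, as a sanity check, the same argument applied to the multinomial logit case yields the explicit formula of Lemma~\ref{lemma:SnoAlpha}. There $s$ is the vector of choice frequencies, and $h_\eta(w)\propto\exp(\sum_t z_t(w_t,\cdot))$ for a fixed $s$-class.
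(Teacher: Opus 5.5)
Your proof is correct and follows the paper's argument. The paper also observes that $p$ and $p^{\sigma}$ are the same static product $\prod_{t} f(\cdot,\eta_t,\alpha)$ with $\eta_t=z_t(y_{t-1},x,\theta)$, evaluated at $y$ and at its rearrangement $(y_{\sigma(T)},y_{\sigma(1)},\ldots,y_{\sigma(T-1)})$, and then appeals to sufficiency and the symmetry of $s$; your factorization $g_\eta(w;\alpha)=h_\eta(w)\,G_\eta(s(w);\alpha)$ and the positivity discussion spell out the step the paper asserts in one line (for the continuous examples of Section~\ref{sec:moregen} you would use the Fisher--Neyman factorization for densities instead of the conditional-pmf identity).
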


\begin{proof}[\textbf{Proof of Lemma~\protect\ref{lemma:SnoAlphaGeneric}}]
The assumptions imply that for any values of $\eta _{1},\ldots ,\eta _{T}$,
the ratio $\frac{\prod_{t=1}^{T}f(s_{t},\eta _{t},\alpha )}{%
\prod_{t=1}^{T}f(y_{t},\eta _{t},\alpha )}$ does not depend on $\alpha $
provided that $\left( s_{1},\ldots ,s_{T}\right) $ is a
permutation of $\left( y_{1},\ldots ,y_{T}\right) $. With $\eta
_{t}=z_{t}\left( y_{t-1},x,\theta \right) $, we then have 
\[
\frac{p^{\sigma }(y\mid y_{0},x,\theta ,\alpha )}{p(y\mid y_{0},x,\theta
,\alpha )}=\frac{f(y_{\sigma (T)}, \eta _{1},\alpha
)\,\prod_{t=2}^{T}\,f(y_{\sigma (t-1)}, \eta _{t},\alpha )}{%
\prod_{t=1}^{T}f(y_{t},\eta _{t},\alpha )}.
\]
Since $\left(y_{\sigma (T)},y_{\sigma (1)},\ldots ,y_{\sigma (T-1)} \right)$ is a permutation of $\left( y_{1},\ldots ,y_{T}\right) $, this does not depend on $\alpha$.
\end{proof}

Combining Lemma \ref{lemma:SnoAlphaGeneric}  with the results in Sections \ref{Sec: Transformed model} and \ref{Sec: Moment construction} implies that for any static model for which there exists a symmetric sufficient statistic for $A$, it is possible to construct moment conditions for the corresponding dynamic model in which $Y_{t-1}$ is an explanatory variable in period $t$. This unifies some known results and makes  it clear how to construct moment conditions in models that have not previously been considered in the literature. Moreover, the logic behind the construction also applies to continuous distributions. We provide some examples below.

\subsection{Example: exponential--family panel models (\citealt{hahn1997note})}

A rich class of models satisfying the hypotheses of Lemma~\ref{lemma:SnoAlphaGeneric} is provided by the exponential--family panel models studied by \citet{hahn1997note}. In the notation of \eqref{JointProbGeneric}, the period--$t$ density takes the form
\begin{equation}
f(y_t,\eta,\alpha)=\exp\!\big(\,s(y_t)'\,\alpha\,e(\theta)+d(\eta,\alpha,\theta)+b(y_t,\eta,\theta)\,\big),
\label{eq:hahn-density}
\end{equation}
where $\alpha$ is the fixed effect, $\eta=z_t(y_{t-1},x,\theta)$ is the index, $s(\cdot)$ is a vector of canonical statistics, $e(\theta)$ is a known function of the common parameter, $d(\cdot)$ is the log--normalizing constant (which does not involve $y_t$), and $b(\cdot)$ collects the remaining terms, which involve $y_t$ and the index but not the fixed effect. This is the model in equation~(1) of \citet{hahn1997note}, with the role of the regressors played here by the index $\eta$, and as \citet{hahn1997note} notes, this class contains the panel linear regression model with normal errors, the panel logit model, the panel Poisson regression model, and the panel negative binomial model.

Consider the static case where $\eta=z_t(y_{t-1},x,\theta)$ does not depend on $y_{t-1}$.  In this case, 
\[
S(Y)=\sum_{t=1}^{T}s(Y_t)
\]
is a sufficient statistic for $\alpha$, and it is symmetric in $(Y_1,\ldots,Y_T)$ (Lemma~\ref{lem:general-suff} of Appendix~\ref{app:moregen-details}). The assumptions of Lemma~\ref{lemma:SnoAlphaGeneric} are therefore satisfied, and the moment--condition construction of Sections~\ref{Sec: Transformed model}--\ref{Sec: Moment construction} applies to a model that includes lagged dependent variables as explanatory variables. 

To obtain the moment functions explicitly, form the ratio $p^{\sigma}/p$ using \eqref{eq:hahn-density}. The normalizing constants $d(\eta_t,\alpha,\theta)$ cancel period by period; and because $\sum_{t}s(y_{\sigma(t-1)})=\sum_{t}s(y_t)$, the fixed effect term $s(y_t)'\alpha\,e(\theta)$ cancels as well. What remains is
\begin{equation}
S^{\sigma}_{y_0}(y,x,\theta)=\exp\!\Big(\sum_{t=1}^{T}\big[\,b\big(y_{\sigma(t-1)},\eta_t,\theta\big)-b\big(y_t,\eta_t,\theta\big)\,\big]\Big),
\qquad \eta_t=z_t(y_{t-1},x,\theta),
\label{eq:hahn-Sratio}
\end{equation}
which is free of $\alpha$, in agreement with Lemma~\ref{lemma:SnoAlphaGeneric}.  Exactly as in Theorem~\ref{th:MainTheorem}, take $T=3$, $\sigma_2=\sigma_+$ (so that $S^{\sigma_2}\equiv1$), and $\sigma_1$ the reverse cycle; when $Y$ is continuous the indicator $\mathbbm{1}\{Y_{\tau(k)}=q\}$ is replaced by an arbitrary bounded function $g$, and the moment condition is that, for every such $g$,
\begin{equation}
\mathbb{E}\!\left[\,g(Y_2)\,S^{\sigma_1}_{Y_0}(Y,X,\theta_0)-g(Y_1)\;\Big|\;Y_0,X,A\,\right]=0 .
\label{eq:hahn-moment}
\end{equation}
The validity of \eqref{eq:hahn-moment} is established more formally in Proposition~\ref{prop:general-moment} of Appendix~\ref{app:moregen-details}.

\subsection{Example: sufficient statistics that depend on $\theta$}

The model in \eqref{eq:hahn-density} implies a sufficient statistic that is free of the common parameter and the regressors. However, the construction of moment conditions suggested above requires only that the static model admits a sufficient statistic for $A$ that is symmetric in the time indices. It therefore applies more broadly, including cases in which the relevant statistic depends on $\theta$, so long as symmetry across $t$ is preserved. A bivariate gamma panel model in the spirit of \citet{lindsay1982conditional} illustrates this. For each $t$ the outcome is the pair $Y_t=(Y_{1t},Y_{2t})$ where, conditional on the past, the regressors, and the fixed effect $A$, the two components are independent with\footnote{These  use the common shape--rate parameterization.}
\begin{equation}
Y_{1t}\sim\mathrm{Gamma}\!\left(z_{1t},\,A\right),\qquad Y_{2t}\sim\mathrm{Gamma}\!\left(z_{2t},\,\phi A\right),
\label{eq:lindsay-model}
\end{equation}
 where $z_{jt}=z_{jt}(y_{t-1},X,\theta)$ for $j=1,2$ and $\theta=(\beta_1',\beta_2',\phi)'$. In the static case where the $z_{jt}(y_{t-1},X,\theta)$ do not depend on $y_{t-1}$,
\[
s_\theta(Y)=\sum_{t=1}^{T} Y_{1t}+\phi\sum_{t=1}^{T} Y_{2t}
\]
is a sufficient statistic for $A$ in the sense that the distribution of $(Y_1,\ldots,Y_T)$ conditional on $s_\theta(Y)$ does not depend on $A$. It is symmetric in the time indices, but in contrast to \eqref{eq:hahn-density}, it depends on the common parameter $\theta$ through $\phi$, so the model lies outside the class of \citet{hahn1997note}. Symmetry is nonetheless all that our moment conditions construction requires, and the same cancellation gives
\begin{equation}
S^{\sigma}_{y_0}(y,X,\theta)=\exp\!\Big(\sum_{t=1}^{T}\big[\,z_{1t}\,(\log y_{1,\sigma(t-1)}-\log y_{1t})+z_{2t}\,(\log y_{2,\sigma(t-1)}-\log y_{2t})\,\big]\Big),
\label{eq:lindsay-Sratio}
\end{equation}
free of $A$, so that, for any bounded $g$ on the outcome space,
\begin{equation}
\mathbb{E}\!\left[\,g(Y_2)\,S^{\sigma_1}_{Y_0}(Y,X,\theta_0)-g(Y_1)\;\Big|\;Y_0,X,A\,\right]=0;
\label{eq:lindsay-moment}
\end{equation}
see Appendix~\ref{app:moregen-details} for details. The cross--equation parameter $\phi$ enters only through the fixed effect--coupled rate of $Y_{2t}$ and hence cancels from $S^{\sigma}$ together with $A$. The construction thus delivers moment conditions for the index parameters $(\beta_1,\beta_2)$, while identifying $\phi$ would require other arguments.\footnote{For example, $\phi$ may be recovered as $\phi=\mathbb{E}\left[\frac{Y_{1t}}{z_{1t}}\right]/\mathbb{E}\left[\frac{Y_{2t}}{z_{2t}}\right]$ if  $\beta_{1},\beta_{2}$ are identified. This readily follows from the expression of the mean for a Gamma distribution.} Of course  in this case, different moment conditions for $(\beta_1,\beta_2)$ could have also been constructed using the marginal distributions for $Y_{1t}$ and $Y_{2t}$ and the fact that the fixed effect enters the means of $Y_{1t}$ and $Y_{2t}$ multiplicatively. 

\subsection{Example: linear panel data models with normal errors}
\label{sec: Linear Panel Data Models with Normal Errors}

The linear panel data model with normal errors and additive fixed effects,
\begin{equation}
Y_t = z_t(Y_{t-1},X,\theta) + A + \varepsilon_t,\qquad \varepsilon_t\mid (Y^{t-1},X,A)\sim\mathcal{N}(0,\sigma^2),
\label{eq:linear-model}
\end{equation}
is also a special case of \eqref{eq:hahn-density}, with the fixed effect entering through the canonical statistic $s(y_t)=y_t$. Hence $S(Y)=\sum_{t=1}^{T} Y_t$ is a symmetric sufficient statistic for $A$, and Lemma~\ref{lemma:SnoAlphaGeneric} applies. In forming $p^{\sigma}/p$, the terms in $y_t^2$, which do not involve the index, cancel by symmetry along with the fixed effect, leaving
\begin{equation}
S^{\sigma}_{y_0}(y,x,\theta)=\exp\!\Big(\tfrac{1}{\sigma^2}\sum_{t=1}^{T} z_t(y_{t-1},x,\theta)\,\big(y_{\sigma(t-1)}-y_t\big)\Big),
\label{eq:linear-Sratio}
\end{equation}
and the moment condition $\mathbb{E}\big[g(Y_2)\,S^{\sigma_1}_{Y_0}(Y,X,\theta_0)-g(Y_1)\mid Y_0,X,A\big]=0$ for any bounded $g$.

This example is instructive because it exposes the limits of the moment condition construction above. The model \eqref{eq:linear-model} already admits an obvious and well--behaved set of fixed effect-free moment conditions: because $Y_t-z_t=A+\varepsilon_t$, any contrast that differences out the fixed effect, such as
\begin{equation}
\mathbb{E}\big[(Y_t-z_t)-(Y_{t-1}-z_{t-1})\mid Y_0,X,A\big]=0,
\label{eq:linear-differencing}
\end{equation}
is linear in the data, possesses moments of all orders, and is the basis for the standard linear panel data estimators \citep{ArellanoBond1991}. When $z_t$ is linear in the lagged outcome, the moment function \eqref{eq:linear-Sratio} produced by our construction is, by contrast, the exponential of a quadratic \footnote{Interestingly, using a different approach \cite{Kitazawa2001} derived moment conditions for a model like (\ref{eq:linear-model}) that are based on exponential of linear functions of Gaussian outcomes.} form in the Gaussian outcomes --- the quadratic arising from the lagged outcome inside $z_t$ --- and is consequently heavy--tailed. Its conditional variance need not exist: it is finite only when the state--dependence coefficient lies in a bounded interval, and successively smaller intervals are required for the higher moments,\footnote{\label{Footnote: Existence of moments}For example, with $z_t(y_{t-1},x,\theta)=\gamma\,y_{t-1}+x_t'\beta$ and $T=3$, the conditional second moment $\mathbb{E}\big[(S^{\sigma_1}_{Y_0})^2\mid Y_0,X,A\big]$ is finite if and only if $\gamma$ lies between the two real roots of $1-4\gamma^2+4\gamma^3-4\gamma^4$, that is, $\gamma\in(-0.400,\,0.575)$; more generally the $k$--th conditional moment is finite if and only if $2k(k-1)\,\gamma^2(1-\gamma+\gamma^2)<1$.} and outside these ranges the moment has infinite variance and the associated estimator is no longer $\sqrt{n}$--asymptotically normal. Even where the variance is finite, the moment is strongly right--skewed, so generalized method of moments estimators based on it can be very imprecise.
We conclude from this example that the transformed--model approach can sometimes best be understood as a device for generating valid fixed effect-free moment conditions when no problem--specific reduction is apparent. The moment functions it delivers are always valid, but, as \eqref{eq:linear-Sratio} illustrates, they need not be well behaved, and they may be dominated by simple alternatives such as the differencing moments \eqref{eq:linear-differencing} that a model--specific analysis would suggest. The construction therefore complements, but does not substitute for, case--specific reasoning about how best to eliminate the fixed effects.

\section{Identification}
\label{sec:identification}

In earlier work, \citet{honore2000panel} studied the dynamic multinomial logit model with the linear specification \eqref{LinearSpec} and showed that 
$\theta_0$ can be identified with only $T=3$ periods. The key restriction they require is that 
the regressors do not vary over the last two periods, $X_2=X_3$. This makes it possible 
to construct a conditional likelihood that eliminates the fixed effects and leads to an 
elegant identification argument.\footnote{It can be shown that under the restriction $X_2=X_3$, model \eqref{JointProb} admits additional valid moment functions beyond those presented in Theorem \ref{th:MainTheorem}. The score of the conditional likelihood in \citet{honore2000panel} involves a particular linear combination of these extra moments. Here, our argument relies only on the moment conditions in Theorem \ref{th:MainTheorem}, which remain valid without imposing $X_2=X_3$ and can therefore be used for $\sqrt{n}$-consistent estimation with continuous covariates.}

The drawback is that this restriction is demanding when covariates are continuous, as in our 
empirical application which features product-specific prices. In that case, the event $X_2=X_3$ has probability zero, so estimation  relies only on the few observations for which $X_2$ and $X_3$ are close, implying in turn a slower-than-$\sqrt{n}$ rate of convergence.

This section shows that this shortcoming can be avoided by leveraging the fixed effect free moment condition in Theorem \ref{th:MainTheorem}.  We illustrate this point in the context of our empirical specification
\eqref{EmpiricalSpec},
\begin{align*}
z_t(q,Y_{i,t-1},X_i,\theta)
=
\gamma_q \mathbbm{1}\{Y_{i,t-1}=q\}
+
X_{it,q}\beta ,
\end{align*}
with $T=3$ periods. The target parameter is $\theta_{0}=(\beta_{0},\gamma_{01},\ldots,\gamma_{0Q})\in \Theta \subset \mathbb{R}^{Q+1}$ and Proposition \ref{prop:Identification_Habit} shows that it is uniquely identified under support conditions that do not require constancy of the covariates across certain periods. We impose the following assumptions. 
\begin{assumption}{\textsc{(Rectangular Compact Parameter Space)}.} \label{ass:Compact_Theta}
The parameter space is
\begin{align*}
    \Theta = [\underline{\beta},\overline{\beta}]\times [\underline{\gamma},\overline{\gamma}]^{Q}
\end{align*}
for scalars $\underline{\beta}<0<\overline{\beta}$, $\underline{\gamma}<0<\overline{\gamma}$.
\end{assumption}

\begin{assumption}{\textsc{(Identification Restrictions)}.} \label{ass:Identification_Restrictions}
The outcomes $Y=(Y_1,Y_2,Y_3)$ are generated according to model \eqref{JointProb} with $z_{.}(.,.,.,.)$ defined in \eqref{EmpiricalSpec} for $Q\geq 3$,\footnote{We focus our attention on the case $Q\geq 3$ as \cite{honore2025moment} already provided sufficient conditions for identification in the binary case $Q=2$.} $T=3$, and true parameter $\theta_{0}=(\beta_{0},\gamma_{01},\ldots,\gamma_{0Q})\in \Theta$. Moreover, for any initial condition $y_{0}\in \{1,\ldots,Q\}$, there exists a non-empty set $\mathcal{X}_{y_0}^{reg}\subseteq \mathbb{R}^{3Q}$ and a compact set $\mathcal{A}^{reg}_{y_{0}}$ such that for all $x\in \mathcal{X}_{y_0}^{reg}$, \\
$P(A\in \mathcal{A}^{reg}_{y_{0}} |Y_{0}=y_{0},X=x)$ is well defined and bounded away from zero.
\end{assumption}

\noindent Assumption \ref{ass:Compact_Theta} is standard. Assumption 
\ref{ass:Identification_Restrictions} imposes the regularity conditions employed in our 
identification argument. In particular, it guarantees that, the model probabilities of all choice 
histories $y=(y_1,y_2,y_3)$ remain strictly positive for each initial condition $y_0$.
 \\
\indent Our identification strategy exploits the conditional moment functions in \eqref{MainMoment}. With  $T=3$ periods, they take the form
 \begin{align} \label{moment_exp}
    m_{y_0}^{\sigma_{1},\sigma_{+},q}(y,x,\theta) \equiv m_{y_0}^{\sigma_{1},q}(y,x,\theta) =
  \mathbbm{1}\big\{ y_2 = q \big\}
S^{\sigma_{1}}_{y_0}(y,x,\theta)- \mathbbm{1}\big\{ y_1 = q \big\}, 
\end{align}
where $\sigma_{+}$ is the cyclical increment and $\sigma_{1}$ is the reverse cycle.  The core idea behind Proposition \ref{prop:Identification_Habit} is to craft a just-identified system of moment equations $\Xi(\theta)$ based on  \eqref{moment_exp} whose Jacobian $\nabla\Xi(\theta)$ forms a $P$-matrix on the whole parameter space $\Theta$; that is, a matrix with all principal minors positive. This property allows us to invoke a global univalence theorem of \citet{gale1965jacobian}, which implies that $\Xi(\theta)$ is injective on $\Theta$. Since the moment equations are satisfied at the true parameter, injectivity then yields global identification of $\theta_0$ as the unique root of $\Xi(\theta)=0$. \\
\indent The implementation of this idea requires introducing the following covariate regions. Fix a constant 
$B\in\mathbb{R}_{++}$ with $B>1/2$. For each $q\in\{1,\ldots,Q\}$ and $\epsilon>0$, let
\begin{align*}
    \mathcal{X}^{\gamma}_{q,\epsilon}
    =
    \Big\{
    x\in \mathcal{X}_{q}^{reg}:
    &\ \norm{x}_{\infty}< B, \\
    &\ x_{1q}\geq x_{3q}+\frac{\epsilon}{3}, 
       \quad x_{3q}\geq x_{2q}+\frac{\epsilon}{3}, \\
    &\ x_{2r}\geq x_{3r}+\frac{\epsilon}{3}, 
       \quad x_{3r}\geq x_{1r}+\frac{\epsilon}{3},
       \quad r\in\{1,\ldots,Q\}\setminus\{q\}, \\
    &\ \abs{x_{ts,l}}<\epsilon,
       \quad \forall (t,s,l)\in\{1,2,3\}^{2}\times\{1,\ldots,Q\}
    \Big\}
\end{align*}
and let
\begin{align*}
    \mathcal{X}^{\beta}_{q}
    =
    \Big\{
    x\in \mathcal{X}_{q}^{reg}:
    &\ \norm{x}_{\infty}< B, \\
    &\ x_{1q}>x_{3q}+\frac{1}{2}, 
       \quad x_{3q}>x_{2q}+\frac{1}{2}, \\
    &\ x_{2r}>x_{3r}+\frac{1}{2}, 
       \quad x_{3r}>x_{1r}+\frac{1}{2},
       \quad r\in\{1,\ldots,Q\}\setminus\{q\}
    \Big\}.
\end{align*}
where we use the shorthand $x_{ts,q}=x_{t,q}-x_{s,q}$ for conciseness.
The set $\mathcal{X}^{\beta}_{q}$ can be thought of as identifying the common slope coefficient $\beta$, while the sets $\mathcal{X}^{\gamma}_{q,\epsilon}$ pin down the state-dependence parameters $\gamma_q$. These bounded covariate regions are constructed in such a way that the Jacobian of the identifying map, $\nabla\Xi(\theta)$, possesses a particular sign and dominance structure.  Specifically, for $\Xi(\theta)$ defined in \eqref{EmpiricaliIdentifyingMoments}, they imply that $\nabla\Xi(\theta)$ has all entries positive and is strictly diagonally dominant, uniformly over $\theta\in\Theta$, once $\epsilon_1,\ldots,\epsilon_Q$ are chosen sufficiently small. The latter condition means that the relevant own-derivative terms dominate the cross-derivative terms in absolute value. Appendix Lemma \ref{lemma:pmatrix} shows that these properties imply that $\nabla\Xi(\theta)$ is a $P$-matrix, as desired.

\begin{proposition} \label{prop:Identification_Habit}
    Suppose that Assumptions \ref{ass:Compact_Theta}-\ref{ass:Identification_Restrictions} hold and that
    \begin{align} 
    &P(Y_{0}=y_0,X\in \mathcal{X}^{\beta}_{y_{0}})>0 \quad \text{for all } y_{0}\in\{1,\ldots,Q\} \label{ass:Well_Defined_Sets_beta}  \\
    &P(Y_{0}=y_0,X\in  \mathcal{X}^{\gamma}_{y_{0},\epsilon})>0 \quad \text{for all } y_{0}\in\{1,\ldots,Q\} \text{ and } \epsilon\in (0,3B) \label{ass:Well_Defined_Sets_gamma}
    \end{align}
    Then, there exists $\epsilon_{1},\ldots,\epsilon_{Q}$, sufficiently small positive constants, such that the mapping 
    \begin{align} \label{EmpiricaliIdentifyingMoments}
    \Xi(\theta) = \begin{pmatrix}
       \mathbb{E}\left[m_{y_{0}}^{\sigma_{1},1}(Y,X,\theta)|Y_{0}=1,X\in \mathcal{X}^{\beta}_{1}\right] \\
\mathbb{E}\left[m_{y_{0}}^{\sigma_{1},1}(Y,X,\theta)|Y_{0}=1,X\in \mathcal{X}^{\gamma}_{1,\epsilon_{1}}\right] \\
\mathbb{E}\left[m_{y_{0}}^{\sigma_{1},2}(Y,X,\theta)|Y_{0}=2,X\in \mathcal{X}^{\gamma}_{2,\epsilon_{2}}\right] \\
\vdots \\
\mathbb{E}\left[m_{y_{0}}^{\sigma_{1},Q}(Y,X,\theta)|Y_{0}=Q,X\in \mathcal{X}^{\gamma}_{Q,\epsilon_{Q}}\right]
    \end{pmatrix}
\end{align}
is injective on $\Theta$. Thus, $\theta_{0}$ is the unique solution to $\Xi(\theta)=0$, i.e., it is identified. 
\end{proposition}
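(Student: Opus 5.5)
The plan is to follow the route sketched before the statement. First, $\Xi(\theta_0)=0$. Theorem~\ref{th:MainTheorem} with $T=3$, $\sigma_2=\sigma_+$ and $\sigma_1$ the reverse cycle gives $\mathbb{E}[m^{\sigma_1,q}_{Y_0}(Y,X,\theta_0)\mid Y_0,X,A]=0$. Integrating over $A$ and over $X$ in each conditioning event $\{Y_0=y_0, X\in\mathcal{X}\}$ gives the result; these events have positive probability by \eqref{ass:Well_Defined_Sets_beta}--\eqref{ass:Well_Defined_Sets_gamma}. It then suffices to show that $\Xi$ is injective on $\Theta$. I would use the global univalence theorem of \citet{gale1965jacobian}. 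It applies to a $C^1$ map on a closed rectangle whose Jacobian is a $P$-matrix everywhere, and Assumption~\ref{ass:Compact_Theta} provides the rectangle. By Appendix Lemma~\ref{lemma:pmatrix}, it is enough to show that $\nabla\Xi(\theta)$ has positive entries and is strictly diagonally dominant, uniformly in $\theta\in\Theta$.

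The second step is to write $S^{\sigma_1}$ from \eqref{ST3} explicitly for the specification \eqref{EmpiricalSpec}. Its log equals
\[
\gamma_{y_2}\mathbbm{1}\{y_0=y_2\}-\gamma_{y_1}\mathbbm{1}\{y_0=y_1\}+\gamma_{y_3}\mathbbm{1}\{y_1=y_3\}-\gamma_{y_2}\mathbbm{1}\{y_1=y_2\}+\gamma_{y_1}\mathbbm{1}\{y_2=y_1\}-\gamma_{y_3}\mathbbm{1}\{y_2=y_3\}
\]
plus $\beta\,(x_{31,y_1}+x_{12,y_2}+x_{23,y_3})$. Only the term $\mathbbm{1}\{y_2=q\}S^{\sigma_1}$ depends on $\theta$. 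Since $\|x\|_\infty<B$, $\Theta$ is compact and the outcome space is finite, I can differentiate under the expectation. This gives
\[
\partial_\theta \Xi_j(\theta)=\mathbb{E}\big[\mathbbm{1}\{Y_2=q\}\,S^{\sigma_1}_{Y_0}(Y,X,\theta)\,\partial_\theta\log S^{\sigma_1}_{Y_0}(Y,X,\theta)\,\big|\,\cdot\,\big].
\]
The probabilities are evaluated at $(\theta_0,A)$. Assumption~\ref{ass:Identification_Restrictions}, with $A$ in a compact set with positive conditional probability, bounds every history probability away from zero, and $S^{\sigma_1}$ is bounded above and below on $\Theta$. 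The sign of each Jacobian entry is therefore determined by the signs of $\partial_\theta\log S$ over histories with $y_2=q$ and $y_0=q$, weighted by uniformly positive masses.

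The third step is the sign analysis, which I expect to be the main obstacle.

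\emph{$\beta$ row.} I would enumerate the histories with $y_0=y_2=q$ and use the orderings defining $\mathcal{X}^\beta_q$, namely $x_{1q}>x_{3q}>x_{2q}$ and $x_{2r}>x_{3r}>x_{1r}$ with gaps $>1/2$. The aim is to show that $x_{31,y_1}+x_{12,q}+x_{23,y_3}$ is bounded below by a positive constant, or at least that positive histories dominate after weighting. The diagonal entry $\partial_\beta\Xi_1$ must also dominate the $\gamma$-derivatives in that row. Those $\gamma$-derivatives are integer-valued and bounded, but they sit next to a $\beta$-coefficient of order at least $1/2$. Making this comparison work is precisely where the thresholds $1/2$ and $B$ must be balanced, and it is the delicate calculation.

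\emph{$\gamma_q$ rows.} On $\mathcal{X}^\gamma_{q,\epsilon}$ every $|x_{ts,l}|<\epsilon$, so the $\beta$-derivative is $O(\epsilon)$ uniformly. For the $\gamma$-derivatives, given $y_0=y_2=q$, the coefficient of $\gamma_q$ is $1-\mathbbm{1}\{y_1=q\}+\mathbbm{1}\{y_1=q\}+\mathbbm{1}\{y_1=y_3=q\}-\mathbbm{1}\{y_3=q\}$ after collecting terms. I would show this coefficient is nonnegative and strictly positive on a set of histories of uniformly positive probability. The cross terms $\gamma_r$, $r\neq q$, arise only through $y_1=y_3=r$-type histories, and I would bound them by the own term; the orderings in $\mathcal{X}^\gamma_{q,\epsilon}$ tilt the probabilities appropriately.

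The final step fixes the constants. Take $\epsilon_1,\ldots,\epsilon_Q$ small enough that the $O(\epsilon)$ entries in the $\gamma$ rows cannot break diagonal dominance, while keeping every entry positive. Then apply Lemma~\ref{lemma:pmatrix} and \citet{gale1965jacobian} to conclude that $\Xi$ is injective, so $\theta_0$ is the unique root of $\Xi(\theta)=0$.
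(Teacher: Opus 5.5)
Your outline is the paper's: Theorem~\ref{th:MainTheorem} gives $\Xi(\theta_0)=0$, then Lemma~\ref{lemma:pmatrix} and Gale--Nikaido give injectivity. However, the sign and dominance analysis, which is the substance of the proof, is left open, and the one coefficient you compute is wrong.

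On $y_0=y_2=q$, the $\gamma$-part of $\log S^{\sigma_1}$ is $\gamma_q\mathbbm{1}\{y_1\neq q,\,y_3\neq q\}+\sum_{r\neq q}\gamma_r\mathbbm{1}\{y_1=y_3=r\}$. You dropped one of the two $-\mathbbm{1}\{y_1=q\}$ terms. Your coefficient therefore puts weight $1$ on $\gamma_q$ for the histories $(q,q,r)$ and $(q,q,q)$, whereas in fact neither carries any $\gamma$, and $S^{\sigma_1}\equiv 1$ on $(q,q,q)$. Under your formula, $(q,q,q)$ would add to $\partial_{\gamma_q}$ in the $\beta$ row with no offsetting $\beta$-term. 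Dominance would then hinge on comparing history probabilities that depend on the unrestricted law of $A$, which cannot be controlled; presumably this is why that step looked ``delicate''.

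With the correct coefficients, the $\beta$ row is a history-by-history comparison. On $\mathcal{X}^\beta_q$, chaining the two gaps gives $x_{12,q}>1$ and $x_{21,r}>1$, so on every history the $\beta$-coefficient exceeds the total $\gamma$-coefficient:
\begin{itemize}
\item on $(r,q,r)$, $x_{12,q}+x_{21,r}>2$;
\item on $(r,q,l)$ with $r\neq l$, $x_{31,r}+x_{12,q}+x_{23,l}>1$;
\item on $(r,q,q)$ and $(q,q,r)$, $x_{13,q}+x_{31,r}>1$ and $x_{32,q}+x_{23,r}>1$, and these histories carry no $\gamma$ at all.
\end{itemize}
No balancing against $B$ is involved.

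In the $\gamma_q$ rows, the cross terms are not handled by any ``tilting'' of probabilities. The $\gamma_r$-histories $(r,q,r)$ are a subset of the $\gamma_q$-histories and carry identical weights. So own-minus-cross equals exactly $\Delta_{q,\epsilon}$, the $P\cdot S^{\sigma_1}$-mass of histories $(r,q,l)$ with $r,l\neq q$ and $r\neq l$. This is where $Q\ge 3$ is indispensable: for $Q=2$ this mass vanishes, the $\gamma_q$ and $\gamma_r$ entries of the row coincide, and strict dominance fails. Your sketch never uses $Q\geq 3$.

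You then need $\nabla_\beta\Xi^{q,\gamma}_{\epsilon_q}<\Delta_{q,\epsilon_q}$. Because the conditioning set moves with $\epsilon$, this requires a lower bound on $\Delta_{q,\epsilon}$ that is uniform in $\epsilon$. The paper gets $\kappa_q\rho_q e^{3\epsilon\underline{\beta}+\underline{\gamma}}$ from $\norm{x}_\infty<B$ and Assumption~\ref{ass:Identification_Restrictions}. It sets this against $\nabla_\beta\Xi^{q,\gamma}_{\epsilon}<2\epsilon\nu_{q,\epsilon}+3\epsilon\Delta_{q,\epsilon}$, where $\nu_{q,\epsilon}$ is the bounded mass of the $(r,q,r)$, $(r,q,q)$ and $(q,q,r)$ histories.

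Finally, Lemma~\ref{lemma:pmatrix} requires the $\beta$-entries of the $\gamma$ rows to be positive. That positivity does not come from taking $\epsilon$ small. It comes from the orderings built into $\mathcal{X}^\gamma_{q,\epsilon}$, which make every relevant $\beta$-coefficient at least $\epsilon/3$; that is their actual role. Alternatively, you could drop positivity and use a diagonal-dominance lemma stated with absolute values.
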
 
The proof is given in the Appendix. The use of the initial condition $Y_0=1$ in the first component of $\Xi(\theta)$, which identifies $\beta$, is inessential and could be replaced by any other initial condition.\footnote{Accordingly, \eqref{ass:Well_Defined_Sets_beta} can be relaxed to there exists $y_0\in\{1,\ldots,Q\}$ such that $P(Y_0=y_0,X\in\mathcal{X}^{\beta}_{y_0})>0$, without modification to the proof.} By contrast, our proof recovers the state-dependence parameters by exploiting observations from each initial condition, which explains the support requirement in \eqref{ass:Well_Defined_Sets_gamma}. \\
\indent In sum, Proposition \ref{prop:Identification_Habit} establishes point identification of the parameter $\theta_0$ in the empirical specification \eqref{EmpiricalSpec} through the moment map $\Xi(\theta)$. Because these moments are evaluated over events with positive probability, the argument also shows that $\theta_0$ is, in principle, $\sqrt{n}$-estimable. The particular map in \eqref{EmpiricaliIdentifyingMoments}, however, is best viewed as a constructive device for proving identification. Using it directly for estimation would require selecting the tuning constants $\epsilon_1,\ldots,\epsilon_Q$ and restricting attention to small covariate regions, which would be cumbersome and generally inefficient. %
For this reason, our empirical application instead uses a tuning-parameter-free and more flexible GMM procedure based on the collection of moment functions in \eqref{MainMoment}.

\section{Empirical application}

\label{sec:empirical}

In this section, we illustrate the use of the moment conditions derived
above for estimation of a simple model for product choice as a function of
prices in the presence of state dependence and persistent unobserved
heterogeneity. The product that we consider is soft drinks,
and the parameters of interest are a common coefficient on price in the indirect utility function and product-specific coefficients on an indicator for whether the same product was chosen on the previous purchase occasion.

The data are  from 2019 and come from NielsenIQ.\footnote{The results presented here are based on our own analyses calculated (or derived) based in part on data from Nielsen Consumer LLC and marketing databases provided through the NielsenIQ Datasets at the Kilts Center for Marketing Data Center at The University of Chicago Booth School of Business. 
The conclusions drawn from the NielsenIQ data are those of the researchers and do not reflect the views of NielsenIQ. NielsenIQ is not responsible for, had no role in, and was not involved in analyzing and preparing the results reported herein.
} There are 8,100
households making a total of 47,933 purchases of soft drinks. The categories are aggregated into Cola, Root Beer/Ginger Flavor, Citrus Flavor, Soft Drink with Unique Flavor, and Other. The households are located in one of the ten largest Nielsen DMAs.
For each of the categories, we calculate the price as a weighted average of sales in all stores of the household  Nielsen DMA within the week of purchase.
In addition to household ID, option chosen and prices, we also have information on the Nielsen DMA, household income, household size and
race.\footnote{A Designated Market Area (DMA) is a federally registered trademark and service mark of The Nielsen Company (US), LLC. Income is measured in intervals on the real line with cutoff at \$5,000, \$8,000, \$10,000, \$15,000, \$20,000, \$25,000, \$30,000, \$35,000, \$40,000, \$45,000, \$50,000, \$60,000, and \$100,000. We record income as 2.5 for observations in the bottom interval and as 125 for observations in the top interval; for the remaining observations income is recorded at the midpoint of the interval divided by 1000.} Table \ref{tab:summary} displays summary statistics for the data.

\begin{table}[h]
\caption{Summary Statistics}
\vspace{-0.2in}
\label{tab:summary}
\begin{center}
\begin{tabular}{lccc}
\toprule\toprule
 & No. Obs. & Mean & Std. Dev. \\
\midrule
Price (Cola)                    &      47,933 &      3.190 &      0.628 \\
Price (Root Beer/Ginger)        &      47,933 &      3.261 &      0.664 \\
Price (Citrus)                  &      47,933 &      3.308 &      0.671 \\
Price (Unique Flavor)            &      47,933 &      3.348 &      0.944 \\
Price (Other)                   &      47,933 &      4.491 &      1.879 \\
Household size                  &      47,933 &      2.671 &      1.384 \\
Income (in \$1,000)             &      47,933 &     72.940 &     37.644 \\
White                           &      47,933 &      0.715 &      0.451 \\
Black                           &      47,933 &      0.173 &      0.379 \\
Asian                           &      47,933 &      0.042 &      0.200 \\
Other                           &      47,933 &      0.069 &      0.254 \\
\bottomrule\bottomrule
\end{tabular}
\end{center}
\vspace{-0.3in}
{\singlespacing \footnotesize The table displays summary statistics for the variables used here. The data comes  from NielsenIQ, and covers 47,933 purchases in the ten largest Nielsen DMAs in 2019. A unit of observation is a purchase occasion.
Prices are measured in cents per ounce.}
\end{table}

Figure \ref{fig:purchases_hist} shows the histogram of the
number of purchases per household and Table \ref{tab:transitions} shows the number of transitions between products. 

\begin{figure}[h]
\caption{Histogram of Purchases per Household}
\vspace{-0.2in}
\label{fig:purchases_hist}
\begin{center}
\includegraphics[width =
0.6\textwidth]{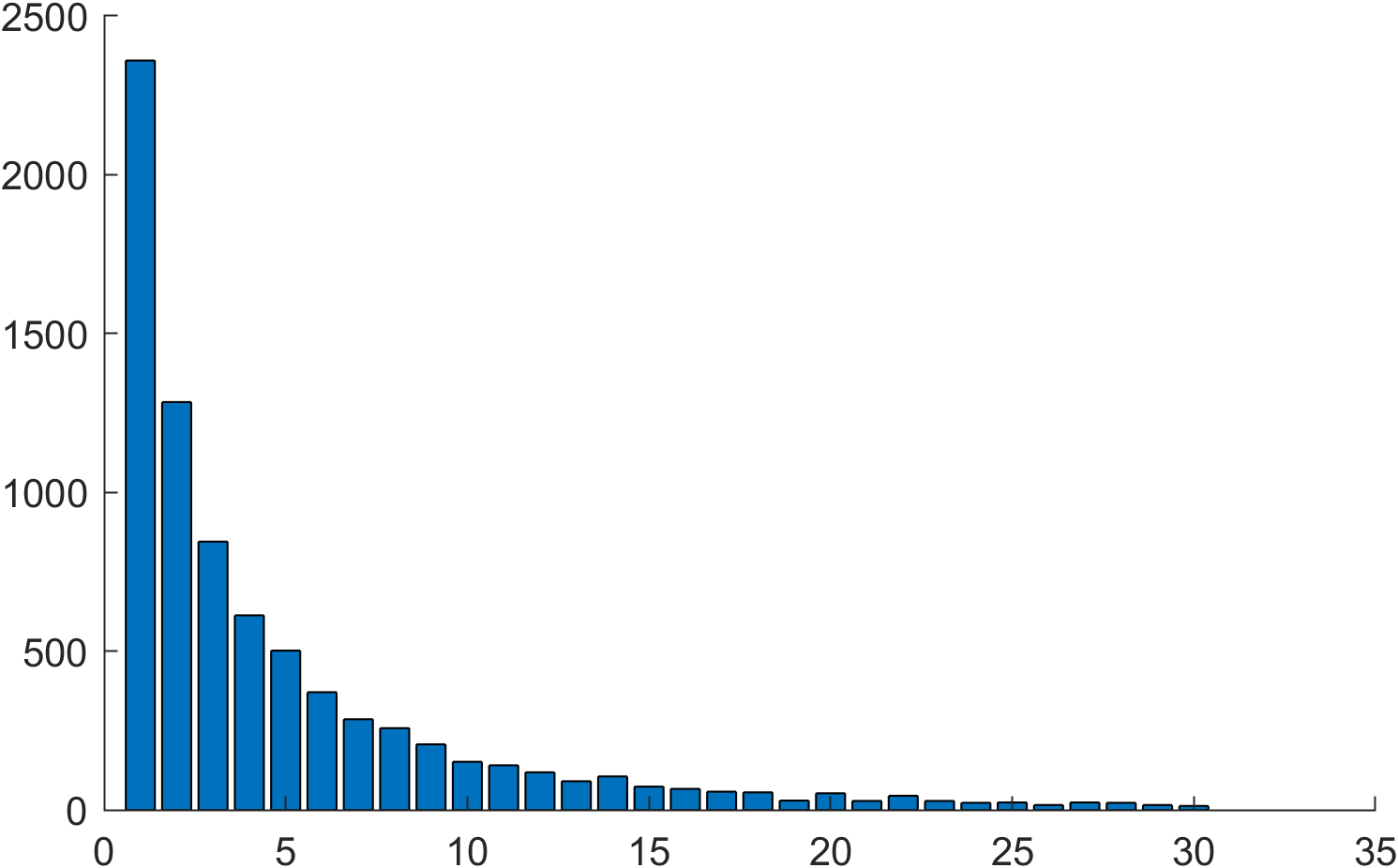}
\end{center}
\vspace{-0.3in}
{\singlespacing \footnotesize The figure shows the histogram for the number of soft drink purchases for 8,100 households in the ten largest Nielsen DMAs in 2019. 186 households with more than 30
purchases are not displayed. The data comes from NielsenIQ.}
\end{figure}

\begin{table}[h]
\caption{Transitions Between Products}
\vspace{-0.2in}
\label{tab:transitions}
\begin{center}
\begin{tabular}{lccccc}
\toprule\toprule
& Cola & Root Beer/Ginger & Citrus & Unique Flavor & Other \\ 
\midrule
Cola & {8528} & 1142 & 1179 & 600 & 1435 \\ 
Root Beer/Ginger & 1137 & {3911} & 639 & 260 & 1285 \\ 
Citrus & 1142 & 620 & {2871} & 351 & 825 \\ 
Unique Flavor & 603 & 261 & 360 & {1836} & 355 \\ 
Other & 1468 & 1255 & 851 & 371 & {6548}\\
\bottomrule\bottomrule
\end{tabular}%
\end{center}
\vspace{-0.3in}
{\singlespacing \footnotesize The table shows the number of times a purchase of a type of soft drink (listed in the column to the left) is followed by a purchase of a possibly different type (listed in the row on top). The data comes  from NielsenIQ, and covers 47,933 purchases in the ten largest Nielsen DMAs in 2019.
}
\end{table}

Table \ref{tab:transitions} shows substantial persistence in choices, with 59\% of purchases repeating the previous category. However, persistence varies considerably by product. For example, 66\% of cola purchases are followed by another cola purchase, while only 49\% of citrus drink purchases are followed by citrus drinks. The large degree of persistence suggests that true state dependence and/or persistent unobserved preferences play an important role.
These patterns motivate a model that allows for product-specific state dependence as well as unrestricted individual-product heterogeneity.  
Specifically, we consider the linear specification \eqref{EmpiricalSpec} with price as an explanatory variable. That is,
\begin{equation*}
z_{t}(q,y_{t-1},\text{Price},\theta)=\text{Price}_{t,q}\times \beta+
\gamma_{q}\mathbbm{1}\left\{y_{t-1}=q\right\}
\end{equation*}
where $\theta=(\beta^{\prime },\gamma_{1},\ldots,\gamma_{Q})^{\prime }$. In
this model, the $\gamma_{q}$'s measure the product specific state dependence
and the fixed effects $A_{qi}$ control unrestricted individual-product
specific heterogeneity.  Proposition \ref{prop:Identification_Habit} in the previous section established that $\theta$ is identified under appropriate support conditions.

Let $Z_i^{s:t}=(Z_{is},Z_{i,s+1},\ldots,Z_{it})$ for $s<t$, and let $T_i$ denote the number of observed purchase occasions for individual $i$ after the initial condition. We estimate the model by generalized method of moments (GMM), using as building blocks the three-period conditional moment functions displayed in \eqref{moment_exp}. For every individual $i$ with $T_i\geq 3$, we exploit the moment functions associated with each consecutive three-period window. That is, for each soda category $q$, the collection of conditional moment functions we consider is
\begin{align*}
    m_{Y_{i0}}^{\sigma_{1},\sigma_{+},q}
\left(Y_i^{1:3},\mathrm{Price}_i^{1:3},\theta\right),\quad
m_{Y_{i1}}^{\sigma_{1},\sigma_{+},q}
\left(Y_i^{2:4},\mathrm{Price}_i^{2:4},\theta\right),
\quad \ldots,\quad
m_{Y_{i,T_i-3}}^{\sigma_{1},\sigma_{+},q}
\left(Y_i^{T_i-2:T_i},\mathrm{Price}_i^{T_i-2:T_i},\theta\right).
\end{align*}

Following \citet{honore2025moment}, we first pre-normalize each conditional moment function by dividing it by the sum of the absolute values of the distinct summands appearing in its expression.\footnote{Appendix~\ref{app:identification:proof} gives a disaggregated expression of $m_{Y_0}^{\sigma_{1},\sigma_{+},q}(Y,X,\theta)$, making explicit its constituent terms.} This normalization uniformly bounds the conditional moment functions and their gradients over the parameter space. As in the binary choice setting considered by \citet{honore2025moment}, we find that this rescaling improves the finite-sample performance of the resulting GMM estimators. 

Next, we form unconditional moment functions by interacting the rescaled conditional moment functions $\widetilde m_{Y_{i,s-1}}^{\sigma_{1},\sigma_{+},q}
\left(Y_i^{s:s+2},\mathrm{Price}_i^{s:s+2},\theta\right)$
with indicators for the initial condition $Y_{is-1}$ and the relative price differences $\mathrm{Price}_{is,q'}-\mathrm{Price}_{is,\mathrm{cola}}$, $\mathrm{Price}_{is+1,q'}-\mathrm{Price}_{is+1,\mathrm{cola}}$, and $\mathrm{Price}_{is+2,q'}-\mathrm{Price}_{is+2,\mathrm{cola}}$ for each product category $q'$ other than cola.\footnote{Because only price differences matter in the multinomial logit model, we use prices relative to cola rather than price levels.}

Finally, to keep the dimension of the estimating moment vector manageable, we sum up these unconditional moments across all three-period windows within each soda category $q$, for each individual $i$. This procedure ultimately yields a total of 85 moment conditions. Since this number is large relative to the sample size,\footnote{The weighting matrix has 3,655 unique elements, while the sample size is 36,859.} estimating a fully unrestricted optimal weighting matrix may be unstable. We therefore use a two-step GMM procedure in which the second-step weighting matrix is diagonal, with each diagonal element equal to the inverse of the sample variance of the corresponding moment function evaluated at the initial GMM estimator.

The calculations in Section \ref{sec: Linear Panel Data Models with Normal Errors} above reveal that moment conditions derived from \eqref{SExplicitGeneric} can have heavy--tailed distributions. The results presented in Table \ref{tab:moment_stats_unrescaled_nonzero} suggest that this is not an issue for the moments used in the estimation in this section. Specifically,  Table \ref{tab:moment_stats_unrescaled_nonzero} presents summary statistics for the empirical distribution of the 3-period conditional moment functions in equation \eqref{moment_exp} evaluated at the GMM estimator. Since each of the moments has a high probability of being 0, we present that probability along with summary statistics for the moment functions conditional on not being 0. Conditional on not being 0, all of the five moment functions have skewness close to 0 and the kurtosis is between $1.24$ and $1.35$. Consistent with equation \eqref{moment_exp}, the minimum value is always $-1$, while the maximum value is below $2.5$.

\begin{table}[!ht]
\centering
\caption{Summary Statistics for the Moment Functions}
\label{tab:moment_stats_unrescaled_nonzero}
\resizebox{\textwidth}{!}{%
\begin{tabular}{ccccccccccc}
\toprule\toprule
&  &  & \multicolumn{8}{c}{Summary Statistics Conditional on $ m^{\sigma_{1},\sigma_{+},q}\neq 0$}
\\ \cline{4-11}
$q$ & \multicolumn{1}{|c}{$\Pr (m^{\sigma_{1},\sigma_{+},q}=0)$} &  & Mean & Variance & SD & 
Skewness & Kurtosis & Min & Median & Max \\ \hline
1 & \multicolumn{1}{|c}{$0.705$} &  & \multicolumn{1}{r}{$0.010$} & 
\multicolumn{1}{r}{$0.846$} & \multicolumn{1}{r}{$0.920$} & 
\multicolumn{1}{r}{$0.061$} & \multicolumn{1}{r}{$1.348$} & 
\multicolumn{1}{r}{$-1.000$} & \multicolumn{1}{r}{$-0.003$} & 
\multicolumn{1}{r}{$2.243$} \\ 
2 & \multicolumn{1}{|c}{$0.791$} &  & \multicolumn{1}{r}{$-0.009$} & 
\multicolumn{1}{r}{$0.865$} & \multicolumn{1}{r}{$0.930$} & 
\multicolumn{1}{r}{$0.073$} & \multicolumn{1}{r}{$1.284$} & 
\multicolumn{1}{r}{$-1.000$} & \multicolumn{1}{r}{$0.001$} & 
\multicolumn{1}{r}{$2.415$} \\ 
3 & \multicolumn{1}{|c}{$0.811$} &  & \multicolumn{1}{r}{$-0.013$} & 
\multicolumn{1}{r}{$0.892$} & \multicolumn{1}{r}{$0.944$} & 
\multicolumn{1}{r}{$0.080$} & \multicolumn{1}{r}{$1.241$} & 
\multicolumn{1}{r}{$-1.000$} & \multicolumn{1}{r}{$-0.010$} & 
\multicolumn{1}{r}{$2.320$} \\ 
4 & \multicolumn{1}{|c}{$0.906$} &  & \multicolumn{1}{r}{$0.025$} & 
\multicolumn{1}{r}{$0.947$} & \multicolumn{1}{r}{$0.973$} & 
\multicolumn{1}{r}{$0.063$} & \multicolumn{1}{r}{$1.243$} & 
\multicolumn{1}{r}{$-1.000$} & \multicolumn{1}{r}{$0.001$} & 
\multicolumn{1}{r}{$1.856$} \\ 
5 & \multicolumn{1}{|c}{$0.755$} &  & \multicolumn{1}{r}{$-0.014$} & 
\multicolumn{1}{r}{$0.854$} & \multicolumn{1}{r}{$0.924$} & 
\multicolumn{1}{r}{$0.088$} & \multicolumn{1}{r}{$1.313$} & 
\multicolumn{1}{r}{$-1.000$} & \multicolumn{1}{r}{$-0.008$} & 
\multicolumn{1}{r}{$2.235$} \\ 
\bottomrule\bottomrule
\end{tabular}
}
\end{table}

In addition to the GMM estimator based on the moments derived in this paper, we also calculate the
maximum likelihood estimator of a multinomial logit models that uses prices, past choice, household size, household income, and race dummies as explanatory variables. 
The GMM estimation requires at least four purchase occasions for a household, while the multinomial logit maximum likelihood estimator only requires two. We therefore calculate maximum likelihood estimator on the sample of household with at least two purchases as well as for the data used for the GMM estimator.  The results are reported in Table \ref{TABLE Estimation Results}. 

The maximum likelihood estimates that do not account for unobserved heterogeneity imply substantial state dependence, with coefficients between 1.5 and 2.5 depending on the product category. In contrast, the GMM estimates that control for fixed effects are dramatically smaller, ranging from 0.02 to 0.29.
This difference suggests that a large share of the observed persistence in choices is due to persistent unobserved heterogeneity rather than true state dependence. After controlling for this heterogeneity, state dependence remains economically modest and is close to zero for some categories, such as citrus and other soft drinks. The large difference between the MLE and GMM estimates highlights the importance of controlling for unobserved heterogeneity when estimating state dependence in discrete choice models.

In contrast, the estimated price coefficient is remarkably stable across specifications, at approximately $-0.11$, suggesting that ignoring unobserved heterogeneity primarily biases estimates of state dependence rather than price sensitivity. At a superficial level, this is not surprising since the prices are plausibly independent of the household  fixed effects.

\begin{table}[h]
\caption{Estimation Results: State Dependence and Price Sensitivity}
\label{TABLE Estimation Results}
\begin{center}
\input{./Tables/table_Nielsen.tex}
\end{center}
\vspace{-0.3in}
{\singlespacing \footnotesize
The table shows the results from maximum likelihood estimation and GMM estimation of a dynamic multinomial logit model. The
data comes from NielsenIQ and covers 47,933 purchases in the ten largest
Nielsen DMAs in 2019.
The models estimated by maximum likelihood also include the time invariant variables household income, household size and race dummies for White, Black and Asian as explanatory variables. 
The GMM specification allows for unrestricted fixed effects. Asymptotic standard errors are reported in parentheses.}
\end{table}

\section{Conclusion}

This paper has shown how to produce moment conditions free of the fixed effects
for the dynamic multinomial logit model. %
Our approach is based on a simple observation: permuting outcomes across time periods leaves the
contribution of the alternative-specific fixed effects unchanged, so the
ratio of a transformed-model probability which permutes the order of outcomes to the true-model probability is free of those effects. When the chosen permutation is cyclic, the transformed model defines
a proper distribution. This alone yields a valid moment condition,  and contrasting two such models delivers additional ones for the common parameters. The construction requires
at least three time periods after the initial condition, and we provide an
explicit formula for the number of linearly independent moments that are generated this way.
This number grows quickly in both the number of alternatives and the number of time
periods.

Two features of the construction of moment conditions in this paper are worth emphasizing. First, it is explicit. 
Unlike the abstract characterization obtained from functional differencing, our approach
delivers moment functions in closed form that can be used directly for
estimation. Second, the construction is general. The only model-specific ingredient is that
the static counterpart of the model admits a sufficient statistic for the fixed
effects that is symmetric in the time indices. Whenever such a sufficient statistic exists,
the transformed-model device can be deployed to generate fixed effects-free moment conditions for the
corresponding dynamic model with lagged dependent variables. This unifies the
multinomial logit with other results, and provides one way to proceed in
panel data settings that have not yet been studied.

We put our moment conditions to use in two ways. We establish point
identification of the parameters in our empirical specification under support
conditions that, unlike conditional-likelihood arguments, do not require
covariates to be constant across periods. 
We also leverage the moment conditions to estimate a multinomial logit model with fixed effects in an  application to household soft drink
purchases. Accounting for unobserved heterogeneity in this application sharply reduces the estimated
degree of state dependence relative to a model that ignores it, while leaving the
price coefficient essentially unchanged. This suggests that  much of the persistence
in observed choices can be explained by stable unobserved preferences rather than genuine
state dependence.

Several questions remain open. For example, our count of linearly independent moment
conditions is supported by a proof for the binary case and otherwise for three time
periods. We verified the validity of the claim numerically more broadly, but a general analytical proof
for more than two alternatives is still lacking. Also, while the
transformed-model moments are always valid, they need not be well behaved when applied in specific settings. As the 
linear panel data Gaussian example illustrates, they can be 
heavy--tailed in which case they may be best viewed
as a constructive existence result that complements, rather than replaces,
model-specific reasoning about how best to eliminate the fixed effects.
Characterizing efficient or well-behaved combinations of the available moments,
and extending the construction to richer forms of dependence and to settings
without independence of irrelevant alternatives, are natural directions for
future work.

\label{sec:conclusion}

\section*{Acknowledgement}
This paper benefited from the use of generative AI tools to assist with language editing and \LaTeX{} formatting; all output was carefully reviewed by the authors. All substantive content, results, and any remaining errors are the authors' responsibility.

\setlength{\bibsep}{2pt}

\newpage
\appendix

\section{Discussion of Conjecture~\ref{conj:counting}}
\label{app:counting}

This appendix discusses two complementary strategies for proving Conjecture~\ref{conj:counting}. The first uses the polynomial structure of the model probabilities in the fixed effect parameter to derive a lower bound on the total number of linearly independent moment conditions, which together with a known matching upper bound pins down the count exactly for $Q = 2$. The second argues directly about linear independence of the $\ell(Q, T)$ moments in \eqref{MomentParameterized}, which we establish rigorously for $T = 3$ and any $Q \geq 2$.

\subsection{Strategy 1: polynomial expansion of the likelihood}
\label{app:counting:poly}

Following \citet{dobronyi2021identification}, define $\eta_q := \exp(\alpha_q)$ for $q = 1, \ldots, Q$, and $\pi_t(q, r) := \exp(z_t(q, r, x, \theta))$. Without loss of generality, normalize $\eta_1 = 1$.\footnote{This is without loss of generality since $p(y \mid y_0, x, \theta, \alpha) = p(y \mid y_0, x, \theta, \alpha^*)$ for $\alpha^* = (0, \alpha_2 - \alpha_1, \ldots, \alpha_Q - \alpha_1)$.} The joint probability \eqref{JointProb} factors as
\begin{align}
p(y \mid y_0, x, \theta, \alpha) \;=\; \kappa(\eta) \cdot \tilde{p}(y, \eta),
\label{PolyFactor}
\end{align}
where $\kappa(\eta) = \Big[ \big(\sum_{r=1}^Q \pi_1(r, y_0) \eta_r\big) \prod_{t=2}^T \prod_{l=1}^Q \big(\sum_{r=1}^Q \pi_t(r, l) \eta_r\big) \Big]^{-1}$ depends on $\eta$ but not on $y$, and
\begin{align}
\tilde{p}(y, \eta) \;=\; \pi_1(y_1, y_0) \, \eta_{y_1} \prod_{t=2}^T \bigg\{ \pi_t(y_t, y_{t-1}) \, \eta_{y_t} \prod_{l=1}^Q \left[\sum_{r=1}^Q \pi_t(r, l) \, \eta_r\right]^{1 - \mathbbm{1}\{y_{t-1} = l\}} \bigg\}
\label{PolyP}
\end{align}
is a polynomial in $\eta = (\eta_2, \ldots, \eta_Q)$ of total degree at most $(T-1)(Q-1) + T = TQ - Q + 1$.

Writing the polynomial in monomial form, $\tilde{p}(y, \eta) = \sum_k c_k(y) \prod_{i=2}^Q \eta_i^{k_i}$, the moment condition \eqref{MomentConditionSum} (which must hold for all $\eta \in \mathbb{R}_{+}^{Q-1}$) is equivalent to the system of linear constraints
\begin{align}
\sum_{y} m(y) \, c_k(y) \;=\; 0 \quad \text{for every multi-index } k \in \mathbb{N}^{Q-1} \text{ with } |k| \leq TQ - Q + 1.
\label{PolyConstraint}
\end{align}
The total number of linearly independent moment conditions is therefore $Q^T - \mathrm{rank}\{c_k(\cdot) : k\}$. Since the number of multi-indices $k$ with $|k| \leq TQ - Q + 1$ in $Q - 1$ variables is $\binom{TQ}{Q-1}$, this rank is at most $\binom{TQ}{Q-1}$, yielding the lower bound
\begin{align}
\text{(\# linearly independent moment conditions)} \;\geq\; Q^T - \binom{TQ}{Q-1}.
\label{WeakBound}
\end{align}

\paragraph{The case $Q = 2$.}
For $Q = 2$, $\tilde{p}(y, \eta_2)$ is a univariate polynomial of degree at most $2T - 1$, so the constraints \eqref{PolyConstraint} involve exactly $2T$ coefficient vectors $c_0, c_1, \ldots, c_{2T-1} \in \mathbb{R}^{2^T}$. The bound \eqref{WeakBound} gives at least $2^T - 2T = \ell(2, T)$ linearly independent moment conditions. \citet{honore2025moment} construct $\ell(2, T)$ explicit linearly independent moment functions, which gives the same lower bound. The matching upper bound is established in subsequent work; see \citet{dobronyi2021identification} and \citet{dano2023transition}. This proves that the number of linearly independent moment conditions equals $\ell(2, T)$ for $Q = 2$ and any $T \geq 3$.

\paragraph{The case $Q \geq 3$.}
For $Q \geq 3$, the bound \eqref{WeakBound} is not sharp and can even be negative. For example, $\binom{9}{2} = 36 > 27 = 3^3$, so \eqref{WeakBound} is uninformative at $(Q, T) = (3, 3)$. The reason is that the $\binom{TQ}{Q-1}$ coefficient vectors $c_k(\cdot) \in \mathbb{R}^{Q^T}$ are not all linearly independent: the product structure of $\tilde{p}(y, \eta)$ as a product of linear forms in $\eta$ forces algebraic relations among the $c_k$. Identifying these relations in closed form, and thereby sharpening \eqref{WeakBound}, is an open problem.

\subsection{Strategy 2: linear independence of the moments in \eqref{MomentParameterized}}
\label{app:counting:construction}

The second strategy works directly with the $\ell(Q, T)$ explicit moment functions in \eqref{MomentParameterized}. If these moments are linearly independent for generic $(y_0, x, \theta)$, then at least $\ell(Q, T)$ linearly independent moment conditions exist. Combined with a matching upper bound on that number, which is known for $Q=2$ (see Strategy~1), this establishes Conjecture~\ref{conj:counting}.

We prove linear independence rigorously for $T = 3$ and any $Q \geq 2$. For $T = 3$ the only allowed value is $j = 0$, which forces $\mathbf{t} = \emptyset$, $s_1 = 1$, $s_2 = 2$, and varies $q_0 \in \{1, \ldots, Q\}$. The two cyclic permutations of $\{1, 2, 3\}$ have generation orders $\tau_1 = (1, 2, 3)$ (so that $\sigma_1$ is the cyclic increment $\sigma_+$ and $S^{\sigma_1} = 1$) and $\tau_2 = (2, 1, 3)$ (the reverse cyclic). The parameterization \eqref{MomentParameterized} assigns $s_1$ to $\sigma_1$, so the two labels are interchanged here relative to the $T=3$ example of Section~\ref{Sec: Moment construction}, where $\sigma_1$ denotes the reverse cycle. The resulting $Q$ moment functions are
\begin{align}
m_q(y) \;:=\; \mathbbm{1}\{y_1 = q\} - \mathbbm{1}\{y_2 = q\} \, S^{\sigma_2}_{y_0}(y, x, \theta), \qquad q = 1, \ldots, Q.
\label{MomentT3App}
\end{align}

\begin{proposition}
\label{prop:T3indep}
For $T = 3$ and any $Q \geq 2$, the moment functions $m_1, \ldots, m_Q$ in \eqref{MomentT3App} are linearly independent in $\mathbb{R}^{Q^T}$ for generic $(y_0, x, \theta_0)$.
\end{proposition}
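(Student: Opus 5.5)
The plan is to argue directly from a vanishing linear combination. Suppose $\sum_{q=1}^Q c_q m_q(y)=0$ for all $y\in\{1,\ldots,Q\}^3$, and show that all $c_q=0$ by evaluating this identity at a few well-chosen outcome sequences. Write $S(y):=S^{\sigma_2}_{y_0}(y,x,\theta)$ for the reverse cycle. Here $\sigma_2(1)=3$, $\sigma_2(2)=1$, $\sigma_2(0)=\sigma_2(3)=2$, so by \eqref{DefS}
\[
\log S(y)= z_1(y_2,y_0)-z_1(y_1,y_0)+z_2(y_3,y_1)-z_2(y_2,y_1)+z_3(y_1,y_2)-z_3(y_3,y_2),
\]
suppressing $(x,\theta)$. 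By \eqref{MomentT3App}, for any $y$ we have $\sum_q c_q m_q(y)=c_{y_1}-c_{y_2}S(y)$. Each evaluation therefore gives a two-term relation between coefficients.

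\textbf{Step 1: relations from two sequences.} Fix distinct $r,s\in\{1,\ldots,Q\}$, which is possible since $Q\ge 2\). Evaluating at $y=(r,s,s)$ gives $c_r=c_s\,S(r,s,s)$. Evaluating at $y=(s,r,r)$ gives $c_s=c_r\,S(s,r,r)$. Substituting one into the other yields
\[
c_r\big(1-S(r,s,s)\,S(s,r,r)\big)=0 .
\]

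\textbf{Step 2: the product is not identically one.} A direct computation shows that in $\log S(r,s,s)+\log S(s,r,r)$ the period-1 terms cancel pairwise. The period-2 terms vanish individually, because $y_2=y_3$ in both sequences. What remains is
\[
D_{rs}:=z_3(r,s)-z_3(s,s)+z_3(s,r)-z_3(r,r).
\]
If $D_{rs}\neq 0$, then $c_r=0$. Since $r$ was arbitrary and every $r$ has some partner $s\neq r$, this gives $c_1=\cdots=c_Q=0$, which is the claimed linear independence.

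\textbf{Step 3: genericity, the main obstacle.} The remaining task is to show $D_{rs}\neq0$ outside a measure-zero set of $(y_0,x,\theta_0)$. Under the linear specification \eqref{LinearSpec}, the $X$-terms cancel and $D_{rs}=\gamma_{rs}+\gamma_{sr}-\gamma_{ss}-\gamma_{rr}$. This is a nonzero linear form in $\gamma$, and it remains nonzero after the normalizations $\gamma_{1\cdot}=\gamma_{\cdot1}=0$: taking $r=1$ leaves $-\gamma_{ss}$. Under \eqref{EmpiricalSpec} it equals $-(\gamma_r+\gamma_s)$. In each case $\{D_{rs}=0\}$ is a proper hyperplane in parameter space, hence has measure zero.

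For a general known $z_t$, I would phrase genericity as the requirement that $\theta\mapsto D_{rs}(\theta)$ is analytic and not identically zero. The zero set is then null, and the finite union over pairs $(r,s)$ is still null.

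If one wants an argument that uses variation in $x$ rather than in $\theta$, a fallback is to compare $y=(r,s,t)$ for two different values of $t$. Since $c_r=c_s S(r,s,t)$ for every $t$, we get $c_s\big(S(r,s,t)-S(r,s,t')\big)=0$, and generically in $x_2,x_3$ the two values of $S$ differ. This alternative needs $Q\ge 2$ choices of $t$, which always exist.
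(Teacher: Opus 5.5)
Your argument is correct, but it gets non-degeneracy from a different source than the paper does.

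The paper uses one evaluation point per alternative, $y_q=(q,q,c_q)$ with $c_q\neq q$. There both indicators in $m_{q'}$ equal $\mathbbm{1}\{q'=q\}$, so the $Q\times Q$ evaluation matrix is diagonal with entries $1-S^{\sigma_2}_{y_0}(y_q,x,\theta)$. Under \eqref{LinearSpec} the $\gamma$'s cancel and $\log S^{\sigma_2}_{y_0}(y_q,x,\theta)=(x_2-x_3)'(\beta_{c_q}-\beta_q)$. Genericity is therefore a condition on covariates and slopes.

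You instead pair $(r,s,s)$ with $(s,r,r)$ and use the product of the two $S$ values. In that product the period-1 terms and all covariate terms cancel, leaving $D_{rs}=\gamma_{rs}+\gamma_{sr}-\gamma_{rr}-\gamma_{ss}$. Genericity is then a condition on the state-dependence parameters; under the normalizations, $D_{1s}=-\gamma_{ss}$.

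The paper's route is a one-step diagonal argument and works with $\gamma=0$. Its diagonal entries vanish, however, when $x_2=x_3$ (the design discussed in Section~\ref{sec:identification}) or when there are no time-varying covariates. Your route still delivers independence in those cases, provided the state dependence is non-degenerate. Conversely, it fails in the static case where $z_t$ does not depend on $y_{t-1}$, since then $D_{rs}\equiv 0$.

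Your fallback of comparing $(r,s,t)$ with $(r,s,t')$ restores covariate-based genericity. Taking $r=s$ there reproduces exactly the paper's evaluation.

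A small simplification: the relation $c_r=c_s\,S(r,s,s)$ with $S>0$ holds for every pair $r\neq s$. So a single pair with $D_{rs}\neq 0$ already forces all $c_q=0$; you do not need a partner for each $r$.
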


\begin{proof}
For each $q \in \{1, \ldots, Q\}$, fix any $c_q \neq q$ and define the evaluation point
\[
y_q \;:=\; (q, q, c_q) \in \{1, \ldots, Q\}^3.
\]
At $y_q$, both indicators in \eqref{MomentT3App} reduce to $\mathbbm{1}\{q' = q\}$, so for each $q' \in \{1, \ldots, Q\}$,
\begin{align*}
m_{q'}(y_q) \;=\; \mathbbm{1}\{q = q'\} \, \big[1 - S^{\sigma_2}_{y_0}(y_q, x, \theta)\big].
\end{align*}
The $Q \times Q$ matrix $\big[m_{q'}(y_q)\big]_{q', q}$ is therefore diagonal with diagonal entries $1 - S^{\sigma_2}_{y_0}(y_q, x, \theta)$.

It remains to verify that $S^{\sigma_2}_{y_0}(y_q, x, \theta) \neq 1$ for generic $(y_0, x, \theta_0)$. The reverse cyclic $\sigma_2$ acts as $\sigma_2(1) = 3$, $\sigma_2(2) = 1$, $\sigma_2(3) = 2$, so $\sigma_2(0) := \sigma_2(3) = 2$. Using \eqref{DefS},
\begin{align*}
S^{\sigma_2}_{y_0}((q, q, c_q), x, \theta) \;=\; \exp\!\Big[ &\big(z_1(q, y_0) - z_1(q, y_0)\big) + \big(z_2(c_q, q) - z_2(q, q)\big) \\
& + \big(z_3(q, q) - z_3(c_q, q)\big) \Big] \\
=\; \exp\!\Big[ &\big(z_2(c_q, q) - z_3(c_q, q)\big) - \big(z_2(q, q) - z_3(q, q)\big) \Big].
\end{align*}
For the linear specification \eqref{LinearSpec}, $z_t(r, l, x, \theta) = \gamma_{r, l} + x_t' \beta_r$, so the $\gamma$ terms cancel and
\[
S^{\sigma_2}_{y_0}(y_q, x, \theta) \;=\; \exp\!\big[(x_2 - x_3)' (\beta_{c_q} - \beta_q)\big].
\]
For generic $x$ with $x_2 \neq x_3$ and generic $\beta$ with $\beta_{c_q} \neq \beta_q$ for each $q$, the exponent is non-zero and $S^{\sigma_2}_{y_0}(y_q, x, \theta) \neq 1$. The diagonal matrix $\big[m_{q'}(y_q)\big]$ is therefore invertible, and the moment functions $m_1, \ldots, m_Q$ are linearly independent.
\end{proof}

Proposition~\ref{prop:T3indep} establishes the first half of Conjecture~\ref{conj:counting} (linear independence of the constructed moments) for $T = 3$ and any $Q \geq 2$, and implies that at least $\ell(Q, 3) = Q$ linearly independent moment conditions of the form \eqref{MomentCondition} exist for generic $(y_0, x, \theta_0)$. The second half of the conjecture (that every valid moment function is a linear combination of the constructed moments, equivalently that there are no more than $\ell(Q, 3)$ linearly independent moment conditions) is supported by direct numerical computation for $T = 3$ but does not have a clean analytical proof for $Q \geq 3$.

\paragraph{The obstacle for $T \geq 4$.}
For $T \geq 4$, linear independence of the $\ell(Q, T)$ moment functions in \eqref{MomentParameterized} has been verified by direct numerical computation for all values of $(Q, T)$ in Table~\ref{tab:counting}, but a general analytical proof has so far eluded us. A natural extension of the $T = 3$ argument groups the moments hierarchically by the number of shared time indices $j$ and, within each level, by the shared-position values $\mathbf{t}$ and $\mathbf{q}$. Within a fixed group the moments reduce to a structure analogous to the $T = 3$ case, and a similar diagonal evaluation-point argument applies on a restricted slice of the outcome space. Combining the slices into a single linear-independence statement requires bookkeeping that we have not been able to carry out cleanly for arbitrary $(Q, T)$.

\subsection{Summary}

Combining the two strategies, the count $\ell(Q, T)$ in Conjecture~\ref{conj:counting} is established analytically for $Q = 2$ and any $T \geq 3$ (by Strategy~1) and, as a lower bound, for $T = 3$ and any $Q \geq 2$ (by Strategy~2). For $Q \geq 3$ the conjecture remains open in both halves: at $T = 3$ we have the lower bound but not exhaustiveness, and for $T \geq 4$ neither half is established. Equality has been verified numerically for all $(Q, T)$ in Table~\ref{tab:counting} but not proved analytically. We leave a complete analytical proof of Conjecture~\ref{conj:counting} for future work.

\section{Generalized moment conditions for non-cyclic permutations}
\label{app:generalized}

Theorem~\ref{th:MainTheorem} constructs moment conditions from pairs of cyclic permutations $\sigma_1,\sigma_2$. In this appendix, we show that the construction extends to pairs of general (not necessarily cyclic) permutations, provided they satisfy a compatibility condition on their cycle structures. In all cases we have checked numerically, the resulting moment conditions do not generate any linearly independent moments beyond those already obtained from cyclic permutations in Theorem~\ref{th:MainTheorem}, so this generalization is not needed for Conjecture~\ref{conj:counting}. However, the result is of independent interest, as it clarifies the structure of the transformed model approach and may be useful for extensions to other models.

\subsection{Transformed models for non-cyclic permutations}

For a general (not necessarily cyclic) permutation $\sigma:\{1,\ldots,T\}\to\{1,\ldots,T\}$, the transformed model $p^\sigma$ defined in \eqref{AuxModel} does not correspond to a single Markov chain through all $T$ outcomes. Instead, $p^\sigma$ factorizes as a product of independent Markov chains, one for each cycle of $\sigma$.

To see this, recall that every permutation $\sigma$ can be decomposed into disjoint cycles. Let $\sigma = c_1 \circ c_2 \circ \cdots \circ c_L$ be the cycle decomposition, with $L \geq 1$ cycles. Each cycle $c_\ell$ acts on a subset $C_\ell \subset \{1,\ldots,T\}$, and $\{C_1,\ldots,C_L\}$ is a partition of $\{1,\ldots,T\}$. Let $C_*$ denote the cycle containing $T$, and write $C_* = \{T, \sigma(T), \sigma^2(T), \ldots, \sigma^{|C_*|-1}(T)\}$. For the cycle $C_*$, define the generation order $\tau_*(k) := \sigma^k(T)$ for $k = 1,\ldots,|C_*|$, so that $\tau_*(|C_*|) = T$. Similarly, for each cycle $C_\ell$ not containing $T$, pick any element $a_\ell \in C_\ell$ and define the generation order within that cycle by $\tau_\ell(k) := \sigma^k(a_\ell)$ for $k = 1,\ldots,|C_\ell|$, with $\tau_\ell(|C_\ell|) = a_\ell$. In \eqref{AuxModelFactorized} below we also use the convention $\tau_\ell(0) := a_\ell$, which reflects that this chain returns to its starting outcome.

The transformed model $p^\sigma$ can then be written as
\begin{align}
p^\sigma(y \mid y_0, x, \theta, \alpha)
= \underbrace{p_1(y_{\tau_*(1)} \mid y_0, x, \theta, \alpha) \prod_{k=2}^{|C_*|} p_{\tau_*(k-1)+1}(y_{\tau_*(k)} \mid y_{\tau_*(k-1)}, x, \theta, \alpha)}_{\text{chain through cycle } C_*}
\label{AuxModelFactorized}
\end{align}
\[
\times \prod_{\ell : C_\ell \neq C_*}
\underbrace{\prod_{k=1}^{|C_\ell|} p_{\tau_\ell(k-1)+1}(y_{\tau_\ell(k)} \mid y_{\tau_\ell(k-1)}, x, \theta, \alpha)}_{\text{chain through cycle } C_\ell} .
\]
Each factor in \eqref{AuxModelFactorized} is a Markov chain that generates the outcomes $\{y_s : s \in C_\ell\}$ independently of the outcomes in other cycles. For cycles $C_\ell \neq C_*$, the chain starts from the outcome $y_{a_\ell}$, which also appears as the outcome generated in the last step of the chain (since $\tau_\ell(|C_\ell|) = a_\ell$). Hence, summing $p^\sigma$ over all $y_s$ with $s \in C_\ell$ for $C_\ell \neq C_*$ does \emph{not} yield one for generic $(x,\theta,\alpha)$, because these chains are ``closed loops'' that both start and end at $y_{a_\ell}$.

Nevertheless, Lemma~\ref{lemma:SnoAlpha} continues to hold: the ratio $S^\sigma_{y_0}(y,x,\theta) = p^\sigma(y \mid y_0,x,\theta,\alpha) / p(y \mid y_0,x,\theta,\alpha)$ is free of $\alpha$ for any permutation $\sigma$, not just cyclic ones. This is because the proof of Lemma~\ref{lemma:SnoAlpha} only uses the fact that $\sigma$ is a permutation (so that $\sum_t \alpha_{y_{\sigma(t-1)}} = \sum_t \alpha_{y_t}$), not that $\sigma$ is cyclic.

\subsection{Equivalence relation on permutations}

We define an equivalence relation $\sim$ on the set of all permutations of $\{1,\ldots,T\}$ as follows.

\begin{definition}
\label{def:equivrel}
Two permutations $\sigma_1,\sigma_2:\{1,\ldots,T\}\to\{1,\ldots,T\}$ satisfy $\sigma_1 \sim \sigma_2$ if and only if their cycle decompositions are identical except possibly for the cycles containing $T$.
\end{definition}

That is, $\sigma_1 \sim \sigma_2$ if every cycle that does \emph{not} contain $T$ appears identically in both $\sigma_1$ and $\sigma_2$. %
The cycles containing $T$ are allowed to differ but must have the same length. Note that $\sigma \sim \sigma$ always holds.

For example, with $T=7$, the permutations $\sigma_1 = [1,5,3][2,4,7][6]$ and $\sigma_2 = [1,5,3][2,7,4][6]$ satisfy $\sigma_1 \sim \sigma_2$, because the cycles $[1,5,3]$ and $[6]$ (which do not contain $T=7$) are identical, while the cycles $[2,4,7]$ and $[2,7,4]$ (which contain $T=7$) differ.

When $\sigma_1 \sim \sigma_2$, the two permutations share the same ``non-$T$ cycles.'' Denote their union by $C_{\mathrm{shared}} := \bigcup_{\ell: T \notin C_\ell} C_\ell$. Let $C_*^{(1)}$ and $C_*^{(2)}$ denote the cycles of $\sigma_1$ and $\sigma_2$ that contain $T$, respectively. Since $\sigma_1 \sim \sigma_2$, we have $C_*^{(1)} = C_*^{(2)}$ as sets (both equal $\{1,\ldots,T\} \setminus C_{\mathrm{shared}}$), though $\sigma_1$ and $\sigma_2$ may permute the elements of this set differently.

\subsection{Shared index set}

For two permutations $\sigma_1 \sim \sigma_2$, we now define the set of ``shared indices'' that determines the structure of the generalized moment conditions. Consider the generation orders within the cycles containing $T$:
\begin{align*}
\tau_*^{(1)}(k) := \sigma_1^k(T), \qquad \tau_*^{(2)}(k) := \sigma_2^k(T), \qquad k = 1,2,\ldots,|C_*|.
\end{align*}
Since the two cycles contain the same set of elements $C_*^{(1)} = C_*^{(2)}$, both generation orders are permutations of the same set, but they may list the elements in different orders. Additionally, the elements of $C_{\mathrm{shared}}$ provide further shared indices from the non-$T$ cycles.

We define the shared index set as
\begin{align}
{\cal I}_{\sigma_1,\sigma_2} := \big\{(t,t) : t \in C_{\mathrm{shared}}\big\} \cup  \big\{(t,t) : t = \sigma_1^{j}(T) = \sigma_2^{j}(T), \ j = 1,\ldots,k^* \big\} \cup \big\{(s_1,s_2)\big\},
\label{SharedIndexSet}
\end{align}
where the diagonal pairs $(t_k,t_k)$ correspond to all elements in the shared non-$T$ cycles together with the elements on which the two generation orders through $C_*$ still agree, and $(s_1,s_2)$ with $s_1 \neq s_2$ corresponds to the ``split point'' where the two cycles containing $T$ first diverge. More precisely, let $k^*$ be the largest integer such that $\tau_*^{(1)}$ and $\tau_*^{(2)}$ agree on their first $k^*$ entries:
\begin{align*}
 \sigma_1^{j}(T) = \sigma_2^{j}(T), \quad j = 1,\ldots,k^*,
\end{align*}
and $\sigma_1^{k^*+1}(T) \neq \sigma_2^{k^*+1}(T)$. Then $s_1 = \sigma_1^{k^*+1}(T)$ and $s_2 = \sigma_2^{k^*+1}(T)$, and the $t_k$ from the $T$-cycle are the shared elements $\sigma_1^{j}(T) = \sigma_2^{j}(T)$ for $j=1,\ldots,k^*$. Both generation orders start from $y_0$, so it is the shared initial segment of $\tau_*^{(1)}$ and $\tau_*^{(2)}$ that matters here, exactly as in Theorem~\ref{th:MainTheorem}.

We write $[{\sigma_1,\sigma_2}] := |{\cal I}_{\sigma_1,\sigma_2}|$ for the total number of shared index pairs.

\subsection{Generalized theorem}

\begin{theorem}
\label{th:GeneralizedMoments}
Let $\sigma_1, \sigma_2:\{1,\ldots,T\}\to\{1,\ldots,T\}$ be two permutations satisfying $\sigma_1 \sim \sigma_2$ (Definition~\ref{def:equivrel}), with $\sigma_1 \neq \sigma_2$. Let ${\cal I}_{\sigma_1,\sigma_2} = \{(t_k, u_k) : k=1,\ldots,[\sigma_1,\sigma_2]\}$ with $t_1 < t_2 < \cdots < t_{[\sigma_1,\sigma_2]}$ be as in \eqref{SharedIndexSet}. For any $q \in \{1,\ldots,Q\}^{[\sigma_1,\sigma_2]}$, define the moment function
\begin{align}
m^{\sigma_1,\sigma_2,q}_{y_0}(y,x,\theta) := \mathbbm{1}\!\left\{(y_{t_1},\ldots,y_{t_{[\sigma_1,\sigma_2]}}) = q\right\} S^{\sigma_1}_{y_0}(y,x,\theta)
- \mathbbm{1}\!\left\{(y_{u_1},\ldots,y_{ u_{[\sigma_1,\sigma_2]}}) = q\right\} S^{\sigma_2}_{y_0}(y,x,\theta).
\label{GeneralizedMoment}
\end{align}
Then
\begin{align}
\mathbb{E}\!\left[m^{\sigma_1,\sigma_2,q}_{Y_0}(Y,X,\theta_0) \,\Big|\, Y_0, X, A\right] = 0.
\label{GeneralizedMomentCondition}
\end{align}
\end{theorem}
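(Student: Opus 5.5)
The plan is to repeat the proof of Theorem~\ref{th:MainTheorem}, with the Markov factorization \eqref{AuxModelTau} replaced by the cycle factorization \eqref{AuxModelFactorized}. Since \eqref{GeneralizedMomentCondition} is an expectation conditional on $(Y_0,X,A)$ only, we fix $(y_0,x,\alpha)$ and compute each of the two terms of \eqref{GeneralizedMoment} separately. By Lemma~\ref{lemma:SnoAlpha}, which holds for arbitrary permutations, $S^{\sigma_i}_{y_0}=p^{\sigma_i}/p$. Therefore
\[
\mathbb{E}\big[\mathbbm{1}\{\cdot\}\,S^{\sigma_i}_{Y_0}(Y,X,\theta_0)\mid Y_0,X,A\big]=\sum_{y}\mathbbm{1}\{\cdot\}\,p^{\sigma_i}(y\mid y_0,x,\theta_0,\alpha),
\]
where the indicator for $i=1$ fixes $y_{t_k}=q_k$ and for $i=2$ fixes $y_{u_k}=q_k$. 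The theorem then reduces to showing that these two restricted sums of $p^{\sigma_1}$ and $p^{\sigma_2}$ are equal.

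First I will sort out which coordinates are fixed. For the diagonal pairs in \eqref{SharedIndexSet}, both indicators fix the same coordinate $y_t$ to the same value. These coordinates are all elements of $C_{\mathrm{shared}}$ and the shared prefix $\tau^{(1)}_*(1),\ldots,\tau^{(1)}_*(k^*)$ of the $T$-cycle. For the split pair, the same component of $q$, call it $\bar q$, is assigned to $y_{s_1}$ in the first term and to $y_{s_2}$ in the second. The ordering convention $t_1<\cdots<t_{[\sigma_1,\sigma_2]}$ only labels which entry of $q$ goes with which pair, so it is harmless once this matching is made explicit.

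Next I apply \eqref{AuxModelFactorized} to each $p^{\sigma_i}$. Every closed-loop factor for a cycle $C_\ell\neq C_*$ involves only outcomes $y_s$ with $s\in C_\ell\subseteq C_{\mathrm{shared}}$, and all of these are fixed by the indicator. Because $\sigma_1\sim\sigma_2$, the cycle $C_\ell$ is identical in both permutations. Hence the transition periods $\tau_\ell(k-1)+1$ and the arguments agree, and the product of closed-loop factors is the same constant for $i=1,2$; it can be pulled out of the sum. What remains is the sum of the $C_*$-chain over the unfixed coordinates $y_{\tau^{(i)}_*(k)}$ with $k\ge k^*+2$. These are the last positions in generation order, so summing them out from the last step backwards telescopes to one, exactly as in the proof of Theorem~\ref{th:MainTheorem}.

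The result for each $i$ is
\[
p_1(q_{(1)}\mid y_0)\prod_{j=2}^{k^*}p_{\tau_*(j-1)+1}(q_{(j)}\mid q_{(j-1)})\cdot p_{\tau_*(k^*)+1}(\bar q\mid q_{(k^*)}),
\]
which for $k^*=0$ reduces to $p_1(\bar q\mid y_0)$. Here $\tau_*$ is the common prefix and $q_{(j)}$ is the value fixed at $\tau_*(j)$. This expression does not depend on $i$, so the two terms cancel and \eqref{GeneralizedMomentCondition} follows.

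The main obstacle is the bookkeeping rather than any analytic point. I need to confirm that no coordinate of the $T$-cycle beyond position $k^*+1$ is fixed, so that the telescoping is legitimate. I also need to check that the closed loops really contain only fixed, shared coordinates, and that the cycle containing $T$ is disjoint from $C_{\mathrm{shared}}$, so that the closed-loop factors do not interfere with the summation along the $C_*$-chain.
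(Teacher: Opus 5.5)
Your proposal is correct and follows essentially the same route as the paper's proof. You use Lemma~\ref{lemma:SnoAlpha} to reduce both terms to restricted sums of $p^{\sigma_i}$, factor over cycles, pull out the identical closed-loop factors on $C_{\mathrm{shared}}$, and telescope the $C_*$-chain backwards to the common prefix product ending in $p_{\tau_*(k^*)+1}(\tilde q\mid\cdot)$; these are exactly the paper's Steps 1--4, and your bookkeeping checks (closed loops involve only fixed coordinates, no coordinate of $C_*$ beyond position $k^*+1$ is fixed) are the points those steps rely on.
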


When both $\sigma_1$ and $\sigma_2$ are cyclic permutations, the shared index set ${\cal I}_{\sigma_1,\sigma_2}$ reduces to the structure in Theorem~\ref{th:MainTheorem}: the diagonal pairs $(t_k,t_k)$ correspond to the shared initial segment of the generation orders $\tau_1$ and $\tau_2$, and $(s_1,s_2) = (\tau_1(k),\tau_2(k))$ is the first point of disagreement.

\begin{remark}
Analogous to \eqref{MainMomentCondition} in Theorem~\ref{th:MainTheorem}, the result can be written equivalently as: for any $\tilde{q} \in \{1,\ldots,Q\}$,
\begin{align*}
\mathbb{E}\!\left[\mathbbm{1}\{Y_{s_1} = \tilde{q}\} \, S^{\sigma_1}_{Y_0}(Y,X,\theta_0) - \mathbbm{1}\{Y_{s_2} = \tilde{q}\} \, S^{\sigma_2}_{Y_0}(Y,X,\theta_0) \,\Big|\, \{Y_{t_k}\}_{k=1}^{[\sigma_1,\sigma_2]-1}, Y_0, X, A\right] = 0,
\end{align*}
where $\{(t_k,t_k) : k=1,\ldots,[\sigma_1,\sigma_2]-1\}$ are the diagonal elements of ${\cal I}_{\sigma_1,\sigma_2}$ and $(s_1,s_2)$ is the off-diagonal element.
\end{remark}

\subsection{Proof of Theorem~\ref{th:GeneralizedMoments}}

By Lemma~\ref{lemma:SnoAlpha} (which holds for all permutations, not just cyclic ones), we have $S^{\sigma_i}_{y_0}(y,x,\theta) \cdot p(y \mid y_0,x,\theta,\alpha) = p^{\sigma_i}(y \mid y_0,x,\theta,\alpha)$ for $i \in \{1,2\}$. Therefore, establishing \eqref{GeneralizedMomentCondition} is equivalent to showing that for all $\alpha$,
\begin{align}
\sum_{y \in \{1,\ldots,Q\}^T} \left[\mathbbm{1}\!\left\{(y_{t_1},\ldots,y_{t_{[\sigma_1,\sigma_2]}}) = q\right\} p^{\sigma_1}(y) - \mathbbm{1}\!\left\{(y_{u_1},\ldots,y_{u_{[\sigma_1,\sigma_2]}}) = q\right\} p^{\sigma_2}(y)\right] = 0,
\label{ProofTarget}
\end{align}
where we suppress the dependence on $(y_0,x,\theta,\alpha)$ for brevity.

\medskip
\noindent\textbf{Step 1: Factorization of the transformed model over cycles.}
For any permutation $\sigma$ with cycle decomposition $\sigma = c_1 \circ \cdots \circ c_L$, the transformed model \eqref{AuxModelSimplified} can be written as a product over cycles:
\begin{align}
p^\sigma(y) = \prod_{\ell=1}^L \prod_{\substack{t \in \{1,\ldots,T\}:\\ \sigma(t-1) \in C_\ell}} p_t(y_{\sigma(t-1)} \mid y_{t-1}).
\label{CycleFactorization}
\end{align}
To see this, note that $p^\sigma(y) = \prod_{t=1}^T p_t(y_{\sigma(t-1)} \mid y_{t-1})$ by \eqref{AuxModelSimplified}, and the mapping $t \mapsto \sigma(t-1)$ assigns each factor to the cycle containing $\sigma(t-1)$, giving a partition of the product over cycles. We denote the factor corresponding to cycle $C_\ell$ by
\begin{align*}
p^\sigma_{C_\ell}(y) := \prod_{\substack{t \in \{1,\ldots,T\}:\\ \sigma(t-1) \in C_\ell}} p_t(y_{\sigma(t-1)} \mid y_{t-1}).
\end{align*}
For the cycle $C_*$ containing $T$, this factor includes the term at $t=1$ (since $\sigma(0) = \sigma(T) \in C_*$). Note that $p^\sigma_{C_\ell}(y)$ depends on $y$ only through $\{y_s : s \in C_\ell\}$ and $\{y_{t-1} : \sigma(t-1) \in C_\ell\}$. We have $p^\sigma(y) = p^\sigma_{C_*}(y) \cdot \prod_{\ell: C_\ell \neq C_*} p^\sigma_{C_\ell}(y)$.

\medskip
\noindent\textbf{Step 2: Shared cycles contribute identically.}
Since $\sigma_1 \sim \sigma_2$, the non-$T$ cycles are identical. For any shared cycle $C_\ell \neq C_*$, the set $\{t : \sigma_1(t-1) \in C_\ell\}$ and $\{t : \sigma_2(t-1) \in C_\ell\}$ are the same (since $\sigma_1$ and $\sigma_2$ act identically on $C_\ell$). Hence $p^{\sigma_1}_{C_\ell}(y) = p^{\sigma_2}_{C_\ell}(y)$ for all $y$.

Denote the common product of shared-cycle factors by
\begin{align*}
F_{\mathrm{shared}}(y) := \prod_{\ell: C_\ell \neq C_*} p^{\sigma_1}_{C_\ell}(y) = \prod_{\ell: C_\ell \neq C_*} p^{\sigma_2}_{C_\ell}(y).
\end{align*}
Then $p^{\sigma_i}(y) = F_{\mathrm{shared}}(y) \cdot p^{\sigma_i}_{C_*}(y)$ for $i = 1,2$.

\medskip
\noindent\textbf{Step 3: Reduction to the $T$-cycle.}
Write $q = (\bar{q}, \tilde{q})$, where $\bar{q} = (q_1,\ldots,q_{[\sigma_1,\sigma_2]-1})$ corresponds to the diagonal elements of ${\cal I}_{\sigma_1,\sigma_2}$ and $\tilde{q} \in \{1,\ldots,Q\}$ corresponds to the off-diagonal element $(s_1,s_2)$. Then the left-hand side of \eqref{ProofTarget} becomes
\begin{align*}
\sum_{y} \mathbbm{1}\!\left\{y_{t_k}=q_k,\, k=1,\ldots,[\sigma_1,\sigma_2]-1\right\} F_{\mathrm{shared}}(y) \left[\mathbbm{1}\{y_{s_1}=\tilde{q}\} \, p^{\sigma_1}_{C_*}(y) - \mathbbm{1}\{y_{s_2}=\tilde{q}\} \, p^{\sigma_2}_{C_*}(y)\right].
\end{align*}
Now we perform the summation. The indicator $\mathbbm{1}\{y_{t_k}=q_k\}$ fixes the outcomes at positions in $C_{\mathrm{shared}}$ and on the shared initial segment of $C_*$. Since $F_{\mathrm{shared}}(y)$ depends on $y$ only through $\{y_s : s \in C_{\mathrm{shared}}\}$ and certain conditioning variables $y_{t-1}$, and since $p^{\sigma_i}_{C_*}(y)$ depends on $y$ only through $\{y_s : s \in C_*\}$ and certain $y_{t-1}$, we can separate the summation.

The summation over $\{y_s : s \in C_{\mathrm{shared}}\}$ is constrained by the indicators, and yields $F_{\mathrm{shared}}$ evaluated at the fixed values. After this, what remains is a sum over $\{y_s : s \in C_*\}$, with the outcomes on the shared initial segment of $C_*$ also fixed by the indicators:
\begin{align*}
F_{\mathrm{shared}}(\bar{q}) \cdot \sum_{\{y_s : s \in C_*\}} \mathbbm{1}\!\big\{y_{\sigma_1^{j}(T)} = \bar{q}_j,\, j=1,\ldots,k^*\big\} \left[\mathbbm{1}\{y_{s_1}=\tilde{q}\} \, p^{\sigma_1}_{C_*}(y) - \mathbbm{1}\{y_{s_2}=\tilde{q}\} \, p^{\sigma_2}_{C_*}(y)\right],
\end{align*}
where $F_{\mathrm{shared}}(\bar{q})$ absorbs the shared-cycle contributions.

\medskip
\noindent\textbf{Step 4: Telescoping within the $T$-cycle.}
It remains to show that for each $i \in \{1,2\}$,
\begin{align*}
H_i := \sum_{\{y_s : s \in C_*\}} \mathbbm{1}\!\big\{y_{\sigma_i^{j}(T)} = \bar{q}_j,\, j=1,\ldots,k^*\big\} \, \mathbbm{1}\{y_{s_i}=\tilde{q}\} \, p^{\sigma_i}_{C_*}(y)
\end{align*}
is the same for $i=1$ and $i=2$. Within the $T$-cycle, $p^{\sigma_i}_{C_*}(y)$ is a product of transition probabilities along the generation order $\tau_*^{(i)}$. Summing over all $y_s$ for $s \in C_*$ that are \emph{not} fixed by the indicators (i.e., not in the shared initial segment $\{\sigma_i(T),\ldots,\sigma_i^{k^*}(T)\}$ and not the split point $s_i$), the Markov property causes these sums to telescope to one, since $\sum_{r=1}^Q p_t(r \mid \cdot) = 1$.

After the telescoping, we are left with
\begin{align*}
H_i = p_1\!\left(\bar{q}_1 \mid y_0\right) \prod_{j=2}^{k^*} p_{\sigma_i^{j-1}(T)+1}\!\left(\bar{q}_{j} \mid \bar{q}_{j-1}\right) \cdot p_{\sigma_i^{k^*}(T)+1}\!\left(\tilde{q} \mid \bar{q}_{k^*}\right),
\end{align*}
where $\bar{q}_j$ is the value assigned to the shared index $\sigma_1^{j}(T) = \sigma_2^{j}(T)$, and where for $k^*=0$ the expression reduces to $p_1(\tilde{q} \mid y_0)$. Since $\sigma_1^{j}(T) = \sigma_2^{j}(T)$ for $j = 0,1,\ldots,k^*$, every factor is the same for $i=1$ and $i=2$, so $H_1 = H_2$.

Therefore, the expression in \eqref{ProofTarget} equals $F_{\mathrm{shared}}(\bar{q}) \cdot (H_1 - H_2) = 0$, establishing \eqref{GeneralizedMomentCondition}. \hfill $\square$

\bigskip

\noindent\textbf{Example: $T=4$ with non-cyclic permutations.}
For $T=4$, consider $\sigma_1 = [1,2,4][3]$ and $\sigma_2 = [1,4,2][3]$ (in sequence form, $\sigma_1 = (2,4,3,1)$ and $\sigma_2 = (4,1,3,2)$). Both share the cycle $[3]$ (which does not contain $T=4$), and differ in the cycle containing $4$: $[1,2,4]$ versus $[1,4,2]$. Since neither $\sigma_1$ nor $\sigma_2$ is a cyclic permutation of $\{1,2,3,4\}$, the moment conditions from Theorem~\ref{th:GeneralizedMoments} cannot be obtained directly from Theorem~\ref{th:MainTheorem}. The shared index set is ${\cal I}_{\sigma_1,\sigma_2} = \{(3,3),(1,2)\}$, with $[\sigma_1,\sigma_2]=2$, and the corresponding moment conditions are
\begin{align}
\mathbb{E}\!\left[\mathbbm{1}\{Y_1 = q\} \, S^{\sigma_1}_{Y_0}(Y,X,\theta_0) - \mathbbm{1}\{Y_2 = q\} \, S^{\sigma_2}_{Y_0}(Y,X,\theta_0) \,\Big|\, Y_3, Y_0, X, A\right] = 0,
\label{ExampleMoment}
\end{align}
for $q \in \{1,\ldots,Q\}$. We now verify \eqref{ExampleMoment} directly using the chain representation.

\medskip

The cycle $C_* = \{1,2,4\}$ contains $T=4$, and the shared cycle is $\{3\}$. For $\sigma_1$, the generation order within $C_*$ is $\tau_*^{(1)} = (1,2,4)$, meaning the chain visits $y_1$, then $y_2$, then $y_4$. For $\sigma_2$, the generation order is $\tau_*^{(2)} = (2,1,4)$, visiting $y_2$, then $y_1$, then $y_4$. The shared cycle $\{3\}$ is a fixed point, producing a closed-loop factor. Using the chain representation from Section~\ref{app:generalized}.1, the transformed models are
\begin{align*}
p^{\sigma_1}(y) &= \underbrace{p_1(y_1 \mid y_0) \cdot p_2(y_2 \mid y_1) \cdot p_3(y_4 \mid y_2)}_{\text{chain through } C_* = \{1,2,4\}} \cdot \underbrace{p_4(y_3 \mid y_3)}_{\text{loop at } \{3\}}, \\[4pt]
p^{\sigma_2}(y) &= \underbrace{p_1(y_2 \mid y_0) \cdot p_3(y_1 \mid y_2) \cdot p_2(y_4 \mid y_1)}_{\text{chain through } C_* = \{1,2,4\}} \cdot \underbrace{p_4(y_3 \mid y_3)}_{\text{loop at } \{3\}}.
\end{align*}
The closed-loop factor $p_4(y_3 \mid y_3)$ is identical in both expressions, as guaranteed by $\sigma_1 \sim \sigma_2$.

Using $S^{\sigma_i} \cdot p = p^{\sigma_i}$, the moment condition \eqref{ExampleMoment} is equivalent to showing that for all $\alpha$ and for fixed $y_3$,
\begin{align}
\sum_{y_1,y_2,y_4} \mathbbm{1}\{y_1 = q\} \, p^{\sigma_1}(y) = \sum_{y_1,y_2,y_4} \mathbbm{1}\{y_2 = q\} \, p^{\sigma_2}(y).
\label{ExampleTarget}
\end{align}
The common factor $p_4(y_3 \mid y_3)$ cancels from both sides. For the left-hand side, we sum the $C_*$-chain of $\sigma_1$ over the free variables. Sum over $y_4$ first (the last outcome generated by the chain):
\begin{align*}
\sum_{y_4} p_3(y_4 \mid y_2) = 1 \qquad \text{(telescopes)}.
\end{align*}
Then sum over $y_2$:
\begin{align*}
\sum_{y_2} p_2(y_2 \mid y_1) = 1 \qquad \text{(telescopes)}.
\end{align*}
What remains is
$\sum_{y_1} \mathbbm{1}\{y_1 = q\} \cdot p_1(y_1 \mid y_0) = p_1(q \mid y_0)$.

For the right-hand side, we sum the $C_*$-chain of $\sigma_2$ analogously. Sum over $y_4$:
\begin{align*}
\sum_{y_4} p_2(y_4 \mid y_1) = 1 \qquad \text{(telescopes)}.
\end{align*}
Then sum over $y_1$:
\begin{align*}
\sum_{y_1} p_3(y_1 \mid y_2) = 1 \qquad \text{(telescopes)}.
\end{align*}
What remains is
$\sum_{y_2} \mathbbm{1}\{y_2 = q\} \cdot p_1(y_2 \mid y_0) = p_1(q \mid y_0)$.

Both sides equal $p_1(q \mid y_0)$, confirming \eqref{ExampleTarget}. The key mechanism is that both chains start from $y_0$ (via the $p_1$ factor) and visit the same set of outcomes $\{y_1, y_2, y_4\}$, but in different orders. When summing over all the free variables, the intermediate transitions telescope to one, and only the initial transition probability $p_1(q \mid y_0)$ survives --- regardless of the order in which the chain visits the elements.

\section{Details for the general exponential--family, bivariate gamma, and linear--normal examples}
\label{app:moregen-details}

Lemma~\ref{lem:general-suff} states the sufficient statistic for the exponential--family model \eqref{eq:hahn-density} discussed in Section~\ref{sec:moregen}, and Proposition~\ref{prop:general-moment} verifies the validity of the moment condition obtained from Lemma~\ref{lemma:SnoAlphaGeneric}.

\begin{lemma}{\textsc{(Sufficient statistic for the exponential--family model)}.}
\label{lem:general-suff}
Let $(Y_1,\ldots,Y_T)$ be independent with densities $f(y_t,\eta_t,\alpha)$ given by \eqref{eq:hahn-density}, for indices $\eta_1,\ldots,\eta_T$ and common fixed effect $\alpha$. Then $S(Y)=\sum_{t=1}^{T}s(Y_t)$ is sufficient for $\alpha$, and the conditional distribution of $(Y_1,\ldots,Y_T)$ given $S=\bar s$ does not depend on $\alpha$:
\begin{equation}
f(y\mid S=\bar s,x;\theta)=\frac{\exp\!\big(\sum_{t=1}^{T} b(y_t,\eta_t,\theta)\big)}{\displaystyle\int_{\mathcal A_{\bar s}}\exp\!\big(\sum_{t=1}^{T} b(u_t,\eta_t,\theta)\big)\,du},\qquad \mathcal A_{\bar s}=\Big\{u:\textstyle\sum_t s(u_t)=\bar s\Big\}.
\label{eq:general-suff}
\end{equation}
Moreover $S$ is symmetric in its arguments.
\end{lemma}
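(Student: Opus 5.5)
The statement to prove is Lemma~\ref{lem:general-suff}. The plan is the standard factorization argument: write out the joint density of independent observations and separate the part that involves $\alpha$ from the part that does not. By independence and \eqref{eq:hahn-density}, the joint density of $y=(y_1,\ldots,y_T)$ is
\[
\prod_{t=1}^T f(y_t,\eta_t,\alpha)=\exp\!\Big(\Big(\sum_{t=1}^T s(y_t)\Big)'\alpha\,e(\theta)\Big)\,\exp\!\Big(\sum_{t=1}^T d(\eta_t,\alpha,\theta)\Big)\,\exp\!\Big(\sum_{t=1}^T b(y_t,\eta_t,\theta)\Big).
\]
The first factor depends on $y$ only through $S(y)=\sum_t s(y_t)$. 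The second factor does not depend on $y$ at all. The third factor does not involve $\alpha$.

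By the Fisher--Neyman factorization theorem, $S$ is therefore sufficient for $\alpha$. To obtain the explicit form \eqref{eq:general-suff}, I would compute the conditional density directly. On the set $\mathcal A_{\bar s}$ the first factor is constant and equal to $\exp(\bar s'\alpha\,e(\theta))$, and the second factor is constant in $y$.

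Conditioning on $S=\bar s$ means normalizing the joint density by its integral (or sum) over $\mathcal A_{\bar s}$, taken with respect to the induced measure on that level set. Both $\alpha$-dependent factors then appear identically in the numerator and the denominator and cancel. What remains is $\exp(\sum_t b(y_t,\eta_t,\theta))$ divided by its integral over $\mathcal A_{\bar s}$, which is \eqref{eq:general-suff} and is free of $\alpha$.

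Symmetry of $S$ is immediate. The map $(y_1,\ldots,y_T)\mapsto\sum_t s(y_t)$ is invariant under any reordering of the summands, so $S(y_{\pi(1)},\ldots,y_{\pi(T)})=S(y)$ for every permutation $\pi$.

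The only delicate point is measure-theoretic. In the continuous case $\mathcal A_{\bar s}$ is typically a null set, so the integral in \eqref{eq:general-suff} must be read with respect to the appropriate disintegration or surface measure on the level set. I would handle this in two ways:
\begin{itemize}
\item invoke the factorization theorem for the sufficiency claim, which does not require an explicit conditional density;
\item interpret the displayed formula as a regular conditional density. Its $\alpha$-freeness follows because the $\alpha$-dependent factor is constant on each fiber, so it cancels under any disintegration.
\end{itemize}
In the discrete case the integral is simply a sum, and the cancellation is elementary.
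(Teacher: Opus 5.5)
Your proof is correct and takes essentially the same route as the paper. You factor the joint density into an $\alpha$-term that depends on $y$ only through $S(y)$, a $y$-free normalizing term, and an $\alpha$-free $b$-term, cancel the first two when conditioning on $S=\bar s$, and note that a sum is symmetric; your extra remark about reading $du$ on the (typically null) level set $\mathcal A_{\bar s}$ via a disintegration is a point the paper leaves implicit, and it does not affect the $\alpha$-freeness conclusion.
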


\begin{proof}
By \eqref{eq:hahn-density}, the joint density factors as
\[
\prod_{t=1}^{T} f(y_t,\eta_t,\alpha)=\exp\!\big(\alpha' e(\theta)\,S(y)\big)\cdot\exp\!\Big(\sum_{t=1}^{T} d(\eta_t,\alpha,\theta)\Big)\cdot\exp\!\Big(\sum_{t=1}^{T} b(y_t,\eta_t,\theta)\Big),
\]
with $S(y)=\sum_t s(y_t)$. The first factor depends on $y$ only through $S(y)$, and the second is free of $y$. Conditioning on $S=\bar s$ divides by the integral of the joint over $\mathcal A_{\bar s}$; the factor $\exp(\alpha' e(\theta)\,\bar s)$, which is constant on $\mathcal A_{\bar s}$, and the $y$--free factor $\exp(\sum_t d(\eta_t,\alpha,\theta))$ cancel between numerator and denominator, leaving \eqref{eq:general-suff}, which is free of $\alpha$. Finally, $S=\sum_t s(Y_t)$ is a sum and is therefore symmetric in $(Y_1,\ldots,Y_T)$.
\end{proof}

\begin{proposition}{\textsc{(Moment conditions for the exponential--family panel model)}.}
\label{prop:general-moment}
Consider the dynamic model \eqref{JointProbGeneric} with period densities \eqref{eq:hahn-density} and $T=3$. Then for any bounded $g$,
\begin{equation}
\mathbb{E}\!\left[\,g(Y_2)\,S^{\sigma_1}_{Y_0}(Y,X,\theta_0)-g(Y_1)\;\Big|\;Y_0,X,A\,\right]=0,
\label{eq:general-moment}
\end{equation}
where $S^{\sigma_1}_{Y_0}$ is given by \eqref{eq:hahn-Sratio} at $\theta=\theta_0$.
\end{proposition}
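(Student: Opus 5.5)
The plan is to reproduce the argument of Theorem~\ref{th:MainTheorem} in the continuous setting, replacing sums by integrals and indicators by the bounded function $g$. Write $\eta_t = z_t(y_{t-1},X,\theta_0)$ and $f_t(y_t\mid y_{t-1}) := f(y_t,\eta_t,A)$, so that the true joint density given $(Y_0,X,A)$ is $p(y)=f_1(y_1\mid y_0)f_2(y_2\mid y_1)f_3(y_3\mid y_2)$. For the reverse cycle $\sigma_1$ ($\sigma_1(1)=3,\sigma_1(2)=1,\sigma_1(3)=2$, $\sigma_1(0)=2$), the transformed density \eqref{AuxModelSimplified} is
\[
p^{\sigma_1}(y)=f_1(y_2\mid y_0)\,f_2(y_3\mid y_1)\,f_3(y_1\mid y_2).
\]

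\textbf{Step 1: the ratio is free of $A$.} I will verify directly from \eqref{eq:hahn-density} that $p^{\sigma_1}(y)/p(y)=S^{\sigma_1}_{y_0}(y,X,\theta_0)$ as given in \eqref{eq:hahn-Sratio}. The terms $d(\eta_t,A,\theta)$ appear once per period in both numerator and denominator, with the same $\eta_t$ because $\eta_t$ depends on $y_{t-1}$ in both models. So they cancel period by period. The fixed-effect exponents sum to $\sum_t s(y_{\sigma_1(t-1)})'Ae(\theta)$ in the numerator and $\sum_t s(y_t)'Ae(\theta)$ in the denominator, and these are equal because $\sigma_1$ is a permutation. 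This is Lemma~\ref{lemma:SnoAlphaGeneric} specialized, with symmetry supplied by Lemma~\ref{lem:general-suff}. Only the $b$-terms remain. Since $p>0$ wherever defined, the identity
\[
g(y_2)\,S^{\sigma_1}(y)\,p(y)=g(y_2)\,p^{\sigma_1}(y)
\]
holds pointwise.

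\textbf{Step 2: integrate along the generation order $\tau_1=(2,1,3)$.} This gives
\[
\mathbb{E}[g(Y_2)S^{\sigma_1}\mid Y_0,X,A]=\int g(y_2)f_1(y_2\mid y_0)\Big[\int f_3(y_1\mid y_2)\Big(\int f_2(y_3\mid y_1)\,dy_3\Big)dy_1\Big]dy_2 .
\]
The innermost integral equals one, and then the $y_1$-integral equals one, since each $f_t(\cdot\mid\cdot)$ is a density in its first argument for every value of the conditioning index. What remains is $\int g(u)f_1(u\mid y_0)\,du$.

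\textbf{Step 3: evaluate $\mathbb{E}[g(Y_1)\mid Y_0,X,A]$.} Integrating out $y_3$ and then $y_2$ under the true model gives the same quantity, $\int g(u)f_1(u\mid y_0)\,du$. The difference of the two expectations is therefore zero, which is \eqref{eq:general-moment}.

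\textbf{Main obstacle: justifying the integral manipulations.} The algebra is trivial; the delicate part is integrability. The interchange of integration order (Fubini/Tonelli) is the step I expect to need care. I will handle it by noting that $p^{\sigma_1}\ge 0$ integrates to one by the iterated computation of Step 2 with $g\equiv1$, which is the continuous analogue of Lemma~\ref{lemma:cyclic}, applied via Tonelli to the nonnegative integrand. Since $g$ is bounded, $|g(y_2)|S^{\sigma_1}p=|g(y_2)|p^{\sigma_1}$ is integrable, so $g(Y_2)S^{\sigma_1}$ has finite conditional expectation, even though $S^{\sigma_1}$ itself may have no higher moments (as in the linear--normal example). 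Fubini then applies.

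A secondary point is measure-theoretic. The densities are taken with respect to a common dominating measure (counting or Lebesgue), and $p^{\sigma_1}$ is defined with respect to the same product measure, so the ratio identity holds almost everywhere. I will state this explicitly.
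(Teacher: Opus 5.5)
Your proposal is correct and follows essentially the same route as the paper. Both use $S^{\sigma_1}p=p^{\sigma_1}$ (Lemma~\ref{lemma:SnoAlphaGeneric}), integrate out $y_3$ and then $y_1$ along the generation order $(2,1,3)$, and identify the remaining integral $\int g(u)f(u,\eta_1,A)\,du$ with $\mathbb{E}[g(Y_1)\mid Y_0,X,A]$. Your Tonelli/Fubini step, which dominates $|g(y_2)|\,p^{\sigma_1}$ by $\sup|g|\,p^{\sigma_1}$ with $p^{\sigma_1}$ integrating to one, is a worthwhile justification that the paper leaves implicit.
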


\begin{proof}
By Lemma~\ref{lemma:SnoAlphaGeneric}, $S^{\sigma_1}_{Y_0}\,p=p^{\sigma_1}$, so
\[
\mathbb{E}\!\left[g(Y_2)\,S^{\sigma_1}_{Y_0}\,\big|\,Y_0,X,A\right]=\int g(y_2)\,p^{\sigma_1}(y\mid Y_0,X,\theta_0,A)\,dy,
\]
where the integral is over the support of $(Y_1,Y_2,Y_3)$. For the reverse--cyclic permutation,
\[
p^{\sigma_1}(y\mid Y_0,X,\theta_0,A)=f(y_2,\eta_1,A)\,f(y_3,\eta_2,A)\,f(y_1,\eta_3,A),
\]
with $\eta_1=z_1(Y_0,X,\theta_0)$ constant given $Y_0$, $\eta_2=z_2(y_1,X,\theta_0)$ depending on $y_1$ but not on $y_3$, and $\eta_3=z_3(y_2,X,\theta_0)$ depending on $y_2$ but not on $y_1$. Integrate sequentially. The factor $f(y_3,\eta_2,A)$ is a density in $y_3$, so $\int f(y_3,\eta_2,A)\,dy_3=1$; after eliminating $y_3$, the factor $f(y_1,\eta_3,A)$ is a density in $y_1$, so $\int f(y_1,\eta_3,A)\,dy_1=1$. The remaining one--dimensional integral is
\[
\int g(y_2)\,f(y_2,\eta_1,A)\,dy_2=\mathbb{E}\!\left[g(Y_1)\,\big|\,Y_0,X,A\right],
\]
because under the true model $Y_1\mid Y_0,X,A$ has density $f(\cdot,\eta_1,A)$ with the same $\eta_1=z_1(Y_0,X,\theta_0)$. Combining the three steps gives \eqref{eq:general-moment}.
\end{proof}

\begin{remark}
The argument is structurally identical to the proof of Lemma~\ref{lemma:cyclic} for the discrete case. The Markov property ensures that $\eta_t$ depends on $y_{t-1}$ but not on $y_t$. As a result, the reverse--cyclic permutation places each $Y_t$ as a free integration variable while its density parameter $\eta_t$ is governed by a different element of the triple, allowing each factor to integrate to unity independently. Summation in the discrete case is replaced by integration, and the indicator $\mathbf{1}\{Y_{\tau(k)}=q\}$ is replaced by the weighting function $g$ evaluated at $Y_{\tau(k)}$.
\end{remark}

The bivariate gamma model \eqref{eq:lindsay-model} illustrates a model whose symmetric sufficient statistic depends on the common parameter, and therefore lies outside Hahn's class. We record the corresponding sufficient statistic and moment condition.

\begin{lemma}{\textsc{(Sufficient statistic for the bivariate gamma model)}.}
\label{lem:lindsay-suff}
Let $\{(Y_{1t},Y_{2t})\}_{t=1}^{T}$ be independent across $t$, with $Y_{1t}\sim\mathrm{Gamma}(z_{1t},A)$ and $Y_{2t}\sim\mathrm{Gamma}(z_{2t},\phi A)$ for given indices $z_{1t},z_{2t}$ in the (shape, rate) parametrization. Then $s_\theta(Y)=\sum_{t=1}^{T} Y_{1t}+\phi\sum_{t=1}^{T} Y_{2t}$ is sufficient for $A$, and it is symmetric in $t$.
\end{lemma}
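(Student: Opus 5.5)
The plan is to compute the joint density explicitly and factor it in the form required by the Fisher--Neyman factorization criterion, exactly as in the proof of Lemma~\ref{lem:general-suff}. In the shape--rate parametrization, $Y_{1t}$ has density $A^{z_{1t}}y_{1}^{z_{1t}-1}e^{-Ay_{1}}/\Gamma(z_{1t})$. Likewise $Y_{2t}$ has density $(\phi A)^{z_{2t}}y_{2}^{z_{2t}-1}e^{-\phi A y_{2}}/\Gamma(z_{2t})$. Multiplying over $t$ and over the two independent components, the joint density of $\{(Y_{1t},Y_{2t})\}_{t=1}^T$ equals
\[
\exp\!\Big(-A\Big[\textstyle\sum_t y_{1t}+\phi\sum_t y_{2t}\Big]\Big)\cdot A^{\sum_t (z_{1t}+z_{2t})}\,\phi^{\sum_t z_{2t}}\cdot \prod_{t=1}^T \frac{y_{1t}^{z_{1t}-1}y_{2t}^{z_{2t}-1}}{\Gamma(z_{1t})\Gamma(z_{2t})}.
\]
In this product the first factor depends on the data only through $s_\theta(y)$. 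The second factor is free of the data. The third factor is free of $A$.

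Sufficiency then follows by conditioning on $s_\theta(Y)=\bar s$. I would normalize over the set $\mathcal A_{\bar s}=\{u:\sum_t u_{1t}+\phi\sum_t u_{2t}=\bar s\}$, using the corresponding surface (or disintegration) measure. The factor $e^{-A\bar s}$ is constant on this set and cancels between numerator and denominator. The $y$-free factor $A^{\sum(z_{1t}+z_{2t})}\phi^{\sum z_{2t}}$ cancels as well. What remains is a conditional density proportional to $\prod_t y_{1t}^{z_{1t}-1}y_{2t}^{z_{2t}-1}$ on $\mathcal A_{\bar s}$, which is free of $A$ and depends only on $\phi$ and the indices.

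Symmetry is immediate. Permuting the time index $t$, jointly in both components, leaves both sums $\sum_t Y_{1t}$ and $\sum_t Y_{2t}$ unchanged, so $s_\theta$ is invariant.

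The only delicate point is making the conditioning on a continuous statistic rigorous. Strictly, this requires the disintegration or the co-area formula on the hyperplane. I would avoid it by invoking the factorization theorem directly rather than writing the conditional density. The density factors as $g_A(s_\theta(y))\,h(y)$ with $h$ free of $A$, and that alone establishes sufficiency. The explicit conditional-density expression can then be given as an illustration, in parallel with \eqref{eq:general-suff}.

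Finally, I would note that $\phi$ enters $s_\theta$ itself, so the statistic is sufficient for $A$ only for each fixed $\theta$. This is all Lemma~\ref{lemma:SnoAlphaGeneric} requires, since its hypothesis is imposed pointwise in $(\eta_1,\ldots,\eta_T)$ and $\theta$.
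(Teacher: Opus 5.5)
Your proposal is correct and matches the paper's proof. The paper likewise factors the joint density as $e^{-A\,s_\theta(y)}A^{\sum_t(z_{1t}+z_{2t})}h(y)$ with $h$ free of $A$ (it puts $\phi^{\sum_t z_{2t}}$ inside $h$ rather than in your data-free factor, which is immaterial), applies the factorization theorem, and gets symmetry from $s_\theta$ being a sum over $t$; your explicit conditional-density step and fixed-$\phi$ remark are harmless extras.
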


\begin{proof}
The joint density is
\[
\prod_{t=1}^{T}\phi^{\,z_{2t}}\,\frac{A^{\,z_{1t}+z_{2t}}}{\Gamma(z_{1t})\Gamma(z_{2t})}\,y_{1t}^{\,z_{1t}-1}y_{2t}^{\,z_{2t}-1}\,e^{-A(y_{1t}+\phi y_{2t})}=e^{-A\,s_\theta(y)}\,A^{\sum_t(z_{1t}+z_{2t})}\,h(y),
\]
with $s_\theta(y)=\sum_t y_{1t}+\phi\sum_t y_{2t}$ and $h(y)=\prod_t \phi^{\,z_{2t}}\Gamma(z_{1t})^{-1}\Gamma(z_{2t})^{-1}y_{1t}^{\,z_{1t}-1}y_{2t}^{\,z_{2t}-1}$ free of $A$. The first two factors depend on $y$ only through $s_\theta(y)$, so by the factorization theorem $s_\theta(Y)$ is sufficient for $A$; being a sum over $t$, it is symmetric. Unlike the canonical statistic in \eqref{eq:hahn-density}, $s_\theta$ depends on $\theta$ through $\phi$.
\end{proof}

By Lemma~\ref{lemma:SnoAlphaGeneric}, $S^{\sigma}_{y_0}=p^{\sigma}/p$ is free of $A$. In the ratio, the factors $\phi^{z_{2t}}$, $A^{z_{1t}+z_{2t}}$ and the gamma functions cancel period by period; the $A$--coupling cancels because $\sum_t y_{1,\sigma(t-1)}=\sum_t y_{1t}$ and $\sum_t y_{2,\sigma(t-1)}=\sum_t y_{2t}$, so that $\phi$ disappears together with $A$; and the exponents $-1$ on $y_{1t}$ and $y_{2t}$ cancel by the same symmetry. What remains is \eqref{eq:lindsay-Sratio}, which depends on $\theta$ only through $(\beta_1,\beta_2)$.

\begin{proposition}{\textsc{(Moment conditions for the bivariate gamma model)}.}
\label{prop:lindsay-moment}
For the dynamic model \eqref{eq:lindsay-model} with $T=3$, the moment condition \eqref{eq:lindsay-moment} holds for every bounded $g$, with $S^{\sigma_1}_{Y_0}$ given by \eqref{eq:lindsay-Sratio}.
\end{proposition}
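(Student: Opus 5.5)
The plan is to repeat the argument of Proposition~\ref{prop:general-moment}, with the bivariate gamma density in place of the exponential-family density. Write $f(y_t,\eta_t,A)$ for the joint density of $Y_t=(Y_{1t},Y_{2t})$ given the past. This density is the product of a $\mathrm{Gamma}(z_{1t},A)$ density in $y_{1t}$ and a $\mathrm{Gamma}(z_{2t},\phi A)$ density in $y_{2t}$, with index $\eta_t=(z_{1t},z_{2t})=(z_{1t}(y_{t-1},X,\theta),z_{2t}(y_{t-1},X,\theta))$.

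\textbf{Step 1: the likelihood ratio.} By Lemma~\ref{lem:lindsay-suff}, the static version of the model admits the symmetric sufficient statistic $s_\theta$. Lemma~\ref{lemma:SnoAlphaGeneric} therefore applies and gives $S^{\sigma}_{y_0}\,p=p^{\sigma}$ for every permutation $\sigma$.

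I will check directly that the ratio takes the form \eqref{eq:lindsay-Sratio}, using the cancellation argument recorded just before the proposition:
\begin{itemize}
\item the normalizing factors $\phi^{z_{2t}}A^{z_{1t}+z_{2t}}/(\Gamma(z_{1t})\Gamma(z_{2t}))$ depend only on $\eta_t$, so they cancel period by period;
\item the exponentials $e^{-A(y_{1\cdot}+\phi y_{2\cdot})}$ cancel because $\sum_t y_{j,\sigma(t-1)}=\sum_t y_{jt}$;
\item the factors $y^{-1}$ cancel by the same reordering argument.
\end{itemize}
What remains is $\prod_t (y_{1,\sigma(t-1)}/y_{1t})^{z_{1t}}(y_{2,\sigma(t-1)}/y_{2t})^{z_{2t}}$, which is exactly \eqref{eq:lindsay-Sratio}.

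\textbf{Step 2: integrate against the reverse cycle.} Take $\sigma_1$ to be the reverse cycle, and evaluate at $\theta=\theta_0$. Then
\[
\mathbb{E}[g(Y_2)S^{\sigma_1}_{Y_0}\mid Y_0,X,A]=\int g(y_2)\,f(y_2,\eta_1,A)\,f(y_3,\eta_2,A)\,f(y_1,\eta_3,A)\,dy ,
\]
where
\begin{itemize}
\item $\eta_1$ depends only on $Y_0$;
\item $\eta_2$ depends on $y_1$;
\item $\eta_3$ depends on $y_2$.
\end{itemize}
I integrate in the order $y_3$, then $y_1$, then $y_2$. Each inner factor is a proper bivariate density (a product of two gamma densities) in its free variable, so the first two integrals equal one. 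The remaining integral is $\int g(y_2)f(y_2,\eta_1,A)\,dy_2$, which equals $\mathbb{E}[g(Y_1)\mid Y_0,X,A]$ because under the true model $Y_1$ has density $f(\cdot,\eta_1,A)$. This yields \eqref{eq:lindsay-moment}.

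\textbf{Main obstacle: integrability.} The manipulations are only valid if the integrals are finite. Boundedness of $g$ alone does not guarantee this, because $S^{\sigma_1}$ contains powers of ratios of gamma variables. I will avoid any separate moment bound on $S^{\sigma_1}$ by arguing as follows.
\begin{itemize}
\item The integrand $|g(y_2)|\,S^{\sigma_1}p$ equals $|g(y_2)|\,p^{\sigma_1}$, which is nonnegative up to the factor $|g|$.
\item By Tonelli's theorem, $\int p^{\sigma_1}\,dy=1$ by the same sequential integration. Hence $\mathbb{E}[|g(Y_2)|S^{\sigma_1}]\le\sup|g|<\infty$.
\item Fubini's theorem then justifies the iterated integration in Step~2.
\end{itemize}
Two further points need to be stated explicitly. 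First, the shape indices $z_{jt}$ must be positive, so that each factor is a genuine gamma density. Second, $A>0$ and $\phi>0$, so that the rates are valid.
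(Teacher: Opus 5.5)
Your proposal is correct and follows essentially the same route as the paper: write $p^{\sigma_1}=f(y_2,z_1,A)\,f(y_3,z_2,A)\,f(y_1,z_3,A)$ for the reverse cycle, integrate out $y_3$ and then $y_1$ (each integral equals one), and identify the remaining integral with $\mathbb{E}[g(Y_1)\mid Y_0,X,A]$. The paper leaves three points implicit, and you handle all of them correctly: the integrability step via Tonelli/Fubini (using $S^{\sigma_1}p=p^{\sigma_1}\ge 0$ and $\int p^{\sigma_1}\,dy=1$ to bound $\mathbb{E}[|g(Y_2)|\,S^{\sigma_1}\mid Y_0,X,A]$ by $\sup|g|$), the direct check that the ratio takes the form \eqref{eq:lindsay-Sratio}, and the positivity requirements on $z_{jt}$, $A$ and $\phi$.
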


\begin{proof}
The argument is identical to the proof of Proposition~\ref{prop:general-moment}, with the bivariate gamma transition density $f(y_t,z_t,A)$ (the product of the two gamma densities) in place of the period density. Under the reverse--cyclic permutation, $p^{\sigma_1}=f(y_2,z_1,A)\,f(y_3,z_2,A)\,f(y_1,z_3,A)$, with $z_1=z(Y_0)$ constant given $Y_0$, $z_2=z(y_1)$ depending on $y_1$ but not $y_3$, and $z_3=z(y_2)$ depending on $y_2$ but not $y_1$. Integrating out $y_3=(y_{13},y_{23})$ and then $y_1=(y_{11},y_{21})$ each contributes unity, and the remaining integral equals $\mathbb{E}[g(Y_1)\mid Y_0,X,A]$, since under the true model $Y_1\mid Y_0,X,A$ has density $f(\cdot,z_1,A)$.
\end{proof}

\subsection{Moments of the linear--normal moment function}
\label{app:moregen-details:linear}
The discussion of the linear panel data model in Section~\ref{sec:moregen} noted that the moment function \eqref{eq:linear-Sratio} is the exponential of a quadratic form in the Gaussian outcomes. It is therefore heavy--tailed, with conditional moments that exist only when the state--dependence coefficient lies in a bounded interval. This appendix section makes the claim in Footnote {\ref{Footnote: Existence of moments}} precise. The results build on Chapter 3 of \citet{mathai1992quadratic} which establishes that if $W\sim N(\mu,\Sigma)$ in $\mathbb{R}^n$,  $C$ is symmetric, and  $b\in \mathbb{R}^n$ then
\begin{equation}\label{eq:master}
\;
E\!\left[\exp\!\big(W'CW+b'W\big)^{k}\right]
=\det(I-2k\Sigma C)^{-1/2}\,
\exp\!\Big(\tfrac12\,v_k'P_k^{-1}v_k-\tfrac12\,\mu'\Sigma^{-1}\mu\Big),
\;
\end{equation}
where
\[
P_k=\Sigma^{-1}-2kC,\qquad v_k=kb+\Sigma^{-1}\mu .
\]
 The $k$-th moment is finite if and only if $\Sigma^{-1}-2kC$ is positive definite.

Take the linear index $z_t(y_{t-1},x,\theta)=\gamma\,y_{t-1}+x_t'\beta$, so that \eqref{eq:linear-model} is the Gaussian first--order autoregressive model with a fixed effect, and let $T=3$ with $\sigma_1$ the reverse cycle. Writing $S^{\sigma_1}_{y_0}=\exp\!\big(Q(y)/\sigma^2\big)$, expression \eqref{eq:linear-Sratio} gives
\begin{equation}
Q(y)=z_1(y_2-y_1)+z_2(y_3-y_2)+z_3(y_1-y_3),\quad \text{with} \quad z_t=\gamma y_{t-1}+x_t'\beta.
\label{eq:linear-Q}
\end{equation}
Expanding and collecting terms, the cross terms in $y_1y_2$ cancel, leaving
\begin{equation}
Q(y)=\underbrace{\gamma\,y_3(y_1-y_2)}_{\text{quadratic}}\;+\;\underbrace{\gamma\,y_0(y_2-y_1)+\beta'\big[x_1(y_2-y_1)+x_2(y_3-y_2)+x_3(y_1-y_3)\big]}_{\text{linear in }y}.
\label{eq:linear-Qdecomp}
\end{equation}

Conditional on $(Y_0,X,A)$, the recursion $Y_t=\gamma Y_{t-1}+x_t'\beta+A+\varepsilon_t$
reads $JY=\mu_0+\varepsilon$ with $\varepsilon\sim\mathcal{N}(0,\sigma^2 I_3)$, so $Y$ is
normal with finite mean and variance $\sigma^2P^{-1}$, where
\[
P=\begin{pmatrix}1+\gamma^2&-\gamma&0\\-\gamma&1+\gamma^2&-\gamma\\0&-\gamma&1\end{pmatrix}.
\]
Writing $M$ for the Hessian of $Q$, so that its quadratic part is $\tfrac12\,y'My$,
\[
M=\begin{pmatrix}0&0&\gamma\\0&0&-\gamma\\\gamma&-\gamma&0\end{pmatrix},
\]
and noting that $Q$ has no term that is constant in $y$, we can write
$\big(S^{\sigma_1}_{Y_0}\big)^k=\exp\!\big(Y'\tfrac{1}{2\sigma^2}M\,Y+b'Y\big)^{k}$, with
$b\in\mathbb{R}^3$ collecting the linear coefficients of $Q/\sigma^2$. This is exactly the
form to which the result in \eqref{eq:master} applies, with $\Sigma=\sigma^2P^{-1}$ and $C=\tfrac{1}{2\sigma^2}M$. Equation 
\eqref{eq:master} then tells us that the moment is finite when
$
\Sigma^{-1}-2k\cdot\tfrac{1}{2\sigma^2}M=\sigma^{-2}\,(P-kM)
$
is positive definite, i.e.\ when $P-kM$ is positive definite. 
Finiteness is therefore governed by $\gamma$ alone, and not by $\sigma^2$, $A$, $Y_0$, or $X$.

It remains to turn positive definiteness of $P-kM$ into a condition on $\gamma$. The two leading principal minors
of $P-kM$ are $1+\gamma^2$ and $1+\gamma^2+\gamma^4$. So positive definiteness comes down to
the sign of the full determinant. A direct computation gives
\[
\det(P-kM)=1-2k(k-1)\,\gamma^2\big(1-\gamma+\gamma^2\big),
\]
so $\mathbb{E}\big[(S^{\sigma_1}_{Y_0})^k\mid Y_0,X,A\big]<\infty$ if and only if
$2k(k-1)\,\gamma^2(1-\gamma+\gamma^2)<1$.

Because $1-\gamma+\gamma^2=(\gamma-\tfrac12)^2+\tfrac34>0$, the region for $\gamma$ for which the moment is finite is an interval of $\gamma$ around the origin that shrinks as $k$ grows. For $k=1$ the condition is satisfied for all $\gamma$; this is consistent with $\mathbb{E}[S^{\sigma_1}_{Y_0}\mid Y_0,X,A]=1$. The second moment ($k=2$) requires $\gamma$ to lie between the two real roots of $1-4\gamma^2+4\gamma^3-4\gamma^4$, that is $\gamma\in(-0.400,\,0.575)$, and the higher moments require the successively smaller intervals
\[
k=3:\ \ \gamma\in({-0.251},\,{0.326}),\qquad\qquad k=4:\ \ \gamma\in({-0.184},\,{0.224}).
\]

\section{Proof of Proposition \ref{prop:Identification_Habit}}
\label{app:identification}
Theorem \ref{thm_GN}, stated in Subsection \ref{app:identification:auxilary_result}, restates Theorem 4 of \citet{gale1965jacobian} and is included for convenience. This result is the key tool used to establish point identification of $\theta_0$ for our empirical specification \eqref{EmpiricalSpec} in Proposition \ref{prop:Identification_Habit}. Lemma \ref{lemma:pmatrix} complements the theorem by providing simple sufficient conditions for a matrix to be a $P$-matrix, the main condition required to apply Theorem \ref{thm_GN}. Subsection \ref{app:identification:proof} then gives a detailed proof of Proposition \ref{prop:Identification_Habit}.

\subsection{Auxiliary results} \label{app:identification:auxilary_result}

\begin{theorem}{\textsc{(Theorem 4 in \cite{gale1965jacobian})}.} \label{thm_GN}
    If $\Xi:\Theta\mapsto \mathbb{R}^{Q+1}$, where $\Theta$ is a closed rectangular region of $\mathbb{R}^{Q+1}$, is a differentiable mapping such that the jacobian matrix $\nabla \Xi(\theta)$ is a P-matrix for all $\theta\in \Theta$, then $\Xi$ is univalent in $\Theta$.
\end{theorem}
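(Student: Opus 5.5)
The plan is to reduce univalence to a one-sided ``no weak decrease'' property and prove that property by induction on the dimension $n$ (here $n=Q+1$). Throughout I will use the following standard facts about a $P$-matrix $J$:
\begin{enumerate}[(a)]
\item every principal submatrix of $J$ is a $P$-matrix;
\item $DJD$ is a $P$-matrix for any diagonal $D$ with entries $\pm1$;
\item $J$ reverses the sign of no nonzero vector, i.e.\ for $h\neq 0$ some $i$ has $h_i(Jh)_i>0$;
\item there is $d>0$ (componentwise) with $Jd>0$.
\end{enumerate}

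\emph{Step 1: reduction.} Suppose $\Xi(a)=\Xi(b)$ with $a\neq b$ in $\Theta$. For each coordinate $i$ with $b_i<a_i$, replace $\theta_i$ by $-\theta_i$ and $\Xi_i$ by $-\Xi_i$. The rectangle stays a closed rectangle, and the Jacobian becomes $D\nabla\Xi D$, which is still a $P$-matrix by (b). So without loss of generality $b\ge a$ and $b\neq a$. It therefore suffices to prove the following claim: \emph{if $\Xi$ is differentiable on a closed rectangle with $P$-matrix Jacobian everywhere, then there is no $x\ge a$, $x\ne a$, with $\Xi(x)\le\Xi(a)$.}

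\emph{Step 2: induction on $n$.} For $n=1$ the derivative is positive, so $\Xi$ is strictly increasing and the claim holds. For general $n$, suppose the claim fails, and let $K=\{y\in[a,x]:\Xi(y)\le\Xi(a)\}$, which is closed. If some $y\in K\setminus\{a\}$ has $y_j=a_j$, fix coordinate $j$ and drop component $\Xi_j$. The restricted map lives on a rectangle of dimension $n-1$, and its Jacobian is a principal submatrix of $\nabla\Xi$, hence a $P$-matrix by (a). This contradicts the induction hypothesis. Hence every $y\in K\setminus\{a\}$ satisfies $y>a$ strictly.

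Next I show that $a$ is isolated in $K$. By (c) and compactness of $\{h\ge0,\ \|h\|=1\}$, there is $c>0$ such that for every unit $h\ge0$ some $i$ satisfies $h_i(Jh)_i\ge c$, where $J=\nabla\Xi(a)$. This gives $(Jh)_i\ge c$, since $h_i\le1$. Differentiability at $a$ then yields $\Xi_i(a+sh)-\Xi_i(a)\ge cs-o(s)>0$ uniformly in $h$ for small $s>0$, so no point of $K$ other than $a$ lies near $a$. Consequently $K\setminus\{a\}$ is compact, and it is nonempty because it contains $x$.

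\emph{Step 3: contradiction by minimality.} Pick $y\in K\setminus\{a\}$ minimizing $\sum_i y_i$. By Step 2, $y>a$. By (d), choose $d>0$ with $\nabla\Xi(y)d>0$. For small $t>0$ the point $y-td$ stays in $[a,x]$ and differs from $a$. Moreover
\[
\Xi(y-td)=\Xi(y)-t\,\nabla\Xi(y)d+o(t)\le\Xi(a),
\]
so $y-td\in K\setminus\{a\}$ with strictly smaller coordinate sum, a contradiction.

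The main obstacle is making the local arguments uniform using only differentiability, without assuming continuous differentiability. The two delicate points are the isolation of $a$ in Step 2, where I rely on the compactness constant $c$ from (c), and fact (d), which I would cite or prove by a separate induction on $n$ using (a).
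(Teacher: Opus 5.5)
The paper does not prove this statement. It quotes Theorem 4 of Gale and Nikaido (1965) and uses it as a black box in the identification argument, so there is no in-paper argument to compare against.

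Your proof is correct. It is essentially the classical Gale--Nikaido route:
\begin{itemize}
\item The signature change $\tilde\Xi(\phi)=D\,\Xi(D\phi)$ gives the Jacobian $D\nabla\Xi D$, which is still a $P$-matrix. This reduces univalence to the one-sided statement that $\Xi(x)\le\Xi(a)$ with $x\ge a$ forces $x=a$.
\item That statement is proved by induction on the dimension. The induction rules out solutions on the faces $x_j=a_j$.
\item Sign non-reversal at $a$ makes $a$ isolated, so a minimizer of $\sum_i y_i$ exists on the remaining compact set. A positive direction $d$ with $\nabla\Xi(y)d>0$ then produces a point with smaller coordinate sum, a contradiction.
\end{itemize}
The citation buys the paper brevity; your version makes the result self-contained at the cost of needing the $P$-matrix facts (a)--(d).

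Two small points tighten it.

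First, (d) needs no separate induction. $J^\top$ is also a $P$-matrix, so by (c) there is no $v\ge0$, $v\neq0$, with $J^\top v\le 0$. Gordan's (or Ville's) alternative then gives $d\ge0$ with $Jd>0$, and a small perturbation $d+\varepsilon\mathbf{1}$ makes $d>0$.

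Second, the uniformity you flag in the isolation step is exactly what Fr\'echet differentiability at $a$ provides. The remainder is $o(\lVert y-a\rVert)$ uniformly in the direction $(y-a)/\lVert y-a\rVert$, understood relative to the rectangle when $a$ is a boundary point. Step~3 only uses the one-sided derivative at $y$ in direction $-d$, which is legitimate because $y-td\in[a,x]\subseteq\Theta$ for small $t>0$. Mere existence of partial derivatives would not suffice for the isolation step. Fr\'echet differentiability is the standing meaning of ``differentiable'' in Gale and Nikaido, so your argument matches their hypothesis.
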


\begin{lemma} \label{lemma:pmatrix}
Let $J\in\mathbb{R}^{n\times n}$. Suppose that (i) $J_{ij}>0$ for all $i,j\in\{1,\dots,n\}$, and (ii) that $J$ is strictly diagonally dominant, i.e.
\begin{align*}
    \forall i \in \{1,\ldots,n\}: \quad J_{ii}>\sum_{j\neq i} J_{ij}
\end{align*}
Then $J$ is a $P$-matrix (that is, has all principal minors positive).
\end{lemma}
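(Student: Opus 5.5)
Because $J$ has positive entries and is strictly diagonally dominant, every principal submatrix inherits both properties, so it suffices to show that any matrix with these two properties has positive determinant and then apply this to each principal submatrix. First the inheritance: for an index set $I\subseteq\{1,\ldots,n\}$, the submatrix $J_{II}$ has positive entries. For $i\in I$,
\[
J_{ii}>\sum_{j\neq i}J_{ij}\ \geq\ \sum_{j\in I,\,j\neq i}J_{ij},
\]
because the omitted terms are positive. So $J_{II}$ is again entrywise positive and strictly diagonally dominant. This step uses condition (i), since with mixed signs dropping terms could break the dominance inequality stated without absolute values. Here, though, the positivity makes $\sum_{j\neq i}|J_{ij}|=\sum_{j\neq i}J_{ij}$, so the dominance is the usual strict row diagonal dominance in absolute value.

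Next, nonsingularity. Take any matrix $M$ that is strictly diagonally dominant with positive diagonal, and suppose $Mv=0$ with $v\neq 0$. Pick $i$ maximizing $|v_i|$. Then
\[
|M_{ii}||v_i|=\Big|\sum_{j\neq i}M_{ij}v_j\Big|\leq \sum_{j\neq i}|M_{ij}|\,|v_i|<M_{ii}|v_i|,
\]
which is a contradiction. Hence $\det M\neq 0$.

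To get the sign, use a homotopy. Set $M(s)=D+s(M-D)$ for $s\in[0,1]$, where $D=\mathrm{diag}(M)$. For every $s$, the off-diagonal entries of $M(s)$ are scaled by $s\leq 1$, so $M(s)$ stays strictly diagonally dominant with the same positive diagonal, and by the previous step it is nonsingular. The map $s\mapsto\det M(s)$ is continuous and never vanishes on $[0,1]$. Since $\det M(0)=\prod_i M_{ii}>0$, it follows that $\det M=\det M(1)>0$.

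Applying this to each $J_{II}$ shows that every principal minor of $J$ is positive, so $J$ is a $P$-matrix. As an alternative to the homotopy, Gershgorin's theorem places every eigenvalue in a disc centred at $J_{ii}>0$ with radius less than $J_{ii}$. All real eigenvalues are then positive, complex eigenvalues pair off, and the determinant is positive. None of the steps is a serious obstacle. The one point needing care is the sign of the determinant rather than mere nonsingularity, and the homotopy argument settles it.
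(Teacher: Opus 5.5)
Your proof is correct. The reduction to principal submatrices is the same as the paper's, and you justify the inheritance more carefully: because (ii) is stated without absolute values, dropping terms relies on the positivity in (i), which the paper dismisses as ``clear.''

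You differ from the paper on the sign of the determinant. The paper factors $J=D(I+D^{-1}R)$ and applies Gershgorin to $D^{-1}R$. Its discs are centred at $0$ with radii equal to its row sums, which are $<1$. So every eigenvalue of $I+D^{-1}R$ is $1+\mu$ with $|\mu|<1$: real ones are positive and complex ones come in conjugate pairs, giving $\det J>0$.

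Your main route uses no spectral theory. You prove nonsingularity directly with the maximal-component (L\'evy--Desplanques) argument. You then fix the sign by moving along $D+s(M-D)$, on which strict dominance persists, so $\det M(s)$ never vanishes and keeps the sign of $\prod_i M_{ii}$ by the intermediate value theorem. Your Gershgorin ``alternative'' is essentially the paper's argument, applied to $J$ itself rather than to $I+D^{-1}R$, which makes no difference.

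What your homotopy route buys:
\begin{itemize}
\item It is self-contained, needing only a one-line inequality and continuity of the polynomial $s\mapsto\det M(s)$.
\item It makes clear that the determinant step needs only strict absolute diagonal dominance with positive diagonal; condition (i) is used only for inheritance.
\end{itemize}
The paper's route is shorter once Gershgorin's theorem is available, and it also localizes the spectrum. The two arguments are closer than they look: your nonsingularity step is exactly the standard proof of Gershgorin's theorem. The real difference is that you replace the conjugate-pair counting of eigenvalues with a connectedness argument.
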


\begin{proof}
    If $J\in\mathbb{R}^{n\times n}$ satisfies (i) and (ii), then it is clear that any principal submatrix of $J$ also satisfies (i) and (ii). Therefore, it suffices to show that (i) and (ii) imply that $\text{det}(J)>0$. \\

   \noindent  To this end, it is useful to write $J = D+R$ where $D=\text{diag}(J_{11},\ldots,J_{nn})$. Since $\text{det}(D)=\prod\limits_{i=1}^n J_{ii}>0$ by (i), $D$ is invertible and we can further write $J=D(I+D^{-1}R)$. Since $\text{det}(J) = \text{det}(D)\times \text{det}(I+D^{-1}R)$, the sign of $\text{det}(J)$ is determined by that of $\text{det}(I+D^{-1}R)$. By (ii), $D^{-1}R$ has row-sums strictly less than 1. Hence, by the Gershgorin disc theorem (e.g Theorem 6.1.1 \cite{horn2012matrix}), all eigenvalues of $D^{-1}R$ have moduli strictly less than one. It follows that any real eigenvalue of $(I+D^{-1}R)$ is strictly positive and thus contributes positively to $\text{det}(I+D^{-1}R)$. Furthermore,  since $(I+D^{-1}R)\in \mathbb{R}^{n\times n}$, any of its complex eigenvalues admits a conjugate pair, and that pair contributes positively to $\text{det}(I+D^{-1}R)$. This establishes that $\text{det}(I+D^{-1}R)>0$ which concludes the proof.
\end{proof}

\subsection{Proof details}  \label{app:identification:proof}

\paragraph{Preliminary calculations.} 
It is helpful to express \eqref{moment_exp} as
 \begin{align*} 
   &m_{y_0}^{\sigma_{1},q}(y,x,\theta) \\
&=\mathbbm{1}\{y_1=q,y_2=q,y_3=q\}S^{\sigma_{1}}_{y_0}(y,x,\theta) \\
&+\sum_{\substack{r=1 \\ r \neq q}}^Q\mathbbm{1}\{y_1=r,y_2=q,y_3=r\}S^{\sigma_{1}}_{y_0}(y,x,\theta) \\
&+\sum_{\substack{r=1,\ l=1\\ r\neq q,\ l\neq q,\ r\neq l}}^Q\mathbbm{1}\{y_1=r,y_2=q,y_3=l\}S^{\sigma_{1}}_{y_0}(y,x,\theta) \\
&+\sum_{\substack{r=1 \\ r \neq q}}^Q\mathbbm{1}\{y_1=r,y_2=q,y_3=q\}S^{\sigma_{1}}_{y_0}(y,x,\theta) \\
&+\sum_{\substack{r=1 \\ r \neq q}}^Q\mathbbm{1}\{y_1=q,y_2=q,y_3=r\}S^{\sigma_{1}}_{y_0}(y,x,\theta) \\
&- \mathbbm{1}\big\{ y_1 = q \big\}
\end{align*}
and to note that, for $y_2=q$,
\begin{align*}
& S^{\sigma_{1}}_{y_0}(y,x,\theta) \\
&= \exp(z_{3}(y_1,q,x,\theta)-z_{1}(y_1,y_0,x,\theta)+z_{1}(q,y_0,x,\theta)-z_{2}(q,y_1,x,\theta)+z_{2}(y_3,y_1,x,\theta)-z_{3}(y_3,q,x,\theta)) \\
&=\exp(\left(x_{31,y_{1}}+x_{12,q}+x_{23,y_3}\right)\beta + \gamma_{q}\left(\mathbbm{1}\{y_0=q\}-\mathbbm{1}\{y_3=q\}\right)-\gamma_{y_0}\mathbbm{1}\{y_1=y_0\}+\gamma_{y_1}\mathbbm{1}\{y_3=y_1\})
\end{align*}

Thus, for the initial condition $y_0=q$, we get
\begin{align} \label{disaggregated_moment}
    m_{q}^{\sigma_{1},q}(y,x,\theta)&=\mathbbm{1}\{y_1=q,y_2=q,y_3=q\}- \mathbbm{1}\big\{ y_1 = q \big\} \\
   &+\sum_{\substack{r=1 \\ r \neq q}}^Q\mathbbm{1}\{y_1=r,y_2=q,y_3=r\}\exp((x_{12,q}+x_{21,r})\beta+\gamma_{q}+\gamma_{r}) \notag \\
    &+\sum_{\substack{r=1,\ l=1\\ r\neq q,\ l\neq q,\ r\neq l}}^Q\mathbbm{1}\{y_1=r,y_2=q,y_3=l\}\exp((x_{31,r}+x_{12,q}+x_{23,l})\beta+\gamma_{q})  \notag\\
    &+\sum_{\substack{r=1 \\ r \neq q}}^Q\mathbbm{1}\{y_1=r,y_2=q,y_3=q\}\exp((x_{13,q}+x_{31,r})\beta)  \notag\\
    &+\sum_{\substack{r=1 \\ r \neq q}}^Q\mathbbm{1}\{y_1=q,y_2=q,y_3=r\}\exp((x_{32,q}+x_{23,r})\beta) \notag
\end{align}
As a result, 
\begin{align} \label{gradient_beta}
    &\pdv{m_{q}^{\sigma_{1},q}(y,x,\theta)}{\beta}\\
    &=\sum_{\substack{r=1 \\ r \neq q}}^Q\mathbbm{1}\{y_1=r,y_2=q,y_3=r\}(x_{12,q}+x_{21,r})\exp((x_{12,q}+x_{21,r})\beta+\gamma_{q}+\gamma_{r}) \notag\\
    &+\sum_{\substack{r=1,\ l=1\\ r\neq q,\ l\neq q,\ r\neq l}}^Q\mathbbm{1}\{y_1=r,y_2=q,y_3=l\}(x_{31,r}+x_{12,q}+x_{23,l})\exp((x_{31,r}+x_{12,q}+x_{23,l})\beta+\gamma_{q}) \notag\\
    &+\sum_{\substack{r=1 \\ r \neq q}}^Q\mathbbm{1}\{y_1=r,y_2=q,y_3=q\}(x_{13,q}+x_{31,r})\exp((x_{13,q}+x_{31,r})\beta) \notag\\
    &+\sum_{\substack{r=1 \\ r \neq q}}^Q\mathbbm{1}\{y_1=q,y_2=q,y_3=r\}(x_{32,q}+x_{23,r})\exp((x_{32,q}+x_{23,r})\beta) \notag
\end{align}

\begin{align} \label{gradient_gamma_q}
    \pdv{m_{q}^{\sigma_{1},q}(y,x,\theta)}{\gamma_q}&=\sum_{\substack{r=1 \\ r \neq q}}^Q\mathbbm{1}\{y_1=r,y_2=q,y_3=r\}\exp((x_{12,q}+x_{21,r})\beta+\gamma_{q}+\gamma_{r}) \\
    &+\sum_{\substack{r=1,\ l=1\\ r\neq q,\ l\neq q,\ r\neq l}}^Q\mathbbm{1}\{y_1=r,y_2=q,y_3=l\}\exp((x_{31,r}+x_{12,q}+x_{23,l})\beta+\gamma_{q}) \notag
\end{align}
and for $r\neq q$
\begin{align} \label{gradient_gamma_r}
    \pdv{m_{q}^{\sigma_{1},q}(y,x,\theta)}{\gamma_r}&=\mathbbm{1}\{y_1=r,y_2=q,y_3=r\}\exp((x_{12,q}+x_{21,r})\beta+\gamma_{q}+\gamma_{r}) 
\end{align}

\paragraph{Constructing a P-matrix.}  Let $\Xi^{1,\beta}(\theta) = \mathbb{E}\left[m_{y_{0}}^{\sigma_{1},1}(Y,X,\theta)|Y_{0}=1,X\in \mathcal{X}^{\beta}_{1}\right]$ and for $q=1,\ldots,Q$, let $\Xi_{\epsilon_{q}}^{q,\gamma}(\theta)=\mathbb{E}\left[m_{y_{0}}^{\sigma_{1},q}(Y,X,\theta)|Y_{0}=q,X\in \mathcal{X}^{\gamma}_{q,\epsilon_{q}}\right]$. Let 
    \begin{align*}
    \Xi(\theta) = \begin{pmatrix}
      \Xi^{1,\beta}(\theta) \\
\Xi_{\epsilon_{1}}^{1,\gamma}(\theta) \\
\vdots \\
\Xi_{\epsilon_{Q}}^{Q,\gamma}(\theta)
    \end{pmatrix}
\end{align*}
We want to show that $\forall \theta \in \Theta$, $\nabla \Xi(\theta)$ is a $P$-matrix for an appropriate selection of positive constants $\epsilon_{1},\ldots,\epsilon_{Q}$. The first observation is that  Assumptions \ref{ass:Compact_Theta} and \ref{ass:Identification_Restrictions}, conditions \eqref{ass:Well_Defined_Sets_beta} and \eqref{ass:Well_Defined_Sets_gamma}, together with the definitions of the sets $\mathcal{X}^{\gamma}_{q,\epsilon}$ and $\mathcal{X}^{\beta}_{q}$ ensure that all entries of $\nabla \Xi(\theta)$ are strictly positive. This is easily seen by inspecting equations \eqref{gradient_beta},\eqref{gradient_gamma_q},\eqref{gradient_gamma_r}. Hence by Lemma \ref{lemma:pmatrix}, all we need to show is that $\nabla \Xi(\theta)$ is strictly row diagonally dominant. \\

\noindent From \eqref{gradient_beta},\eqref{gradient_gamma_q},\eqref{gradient_gamma_r}, we have
\begin{align*}
     &\pdv{m_{q}^{\sigma_{1},q}(y,x,\theta)}{\beta}-\pdv{m_{q}^{\sigma_{1},q}(y,x,\theta)}{\gamma_q}-\sum_{r\neq q}\pdv{m_{q}^{\sigma_{1},q}(y,x,\theta)}{\gamma_r} \\
     &  =+\sum_{\substack{r=1 \\ r \neq q}}^Q\mathbbm{1}\{y_1=r,y_2=q,y_3=r\}(x_{12,q}+x_{21,r}-2)\exp((x_{12,q}+x_{21,r})\beta+\gamma_{q}+\gamma_{r}) \\
    &+\sum_{\substack{r=1,\ l=1\\ r\neq q,\ l\neq q,\ r\neq l}}^Q\mathbbm{1}\{y_1=r,y_2=q,y_3=l\}(x_{31,r}+x_{12,q}+x_{23,l}-1)\exp((x_{31,r}+x_{12,q}+x_{23,l})\beta+\gamma_{q}) \\
    &+\sum_{\substack{r=1 \\ r \neq q}}^Q\mathbbm{1}\{y_1=r,y_2=q,y_3=q\}(x_{13,q}+x_{31,r})\exp((x_{13,q}+x_{31,r})\beta) \\
    &+\sum_{\substack{r=1 \\ r \neq q}}^Q\mathbbm{1}\{y_1=q,y_2=q,y_3=r\}(x_{32,q}+x_{23,r})\exp((x_{32,q}+x_{23,r})\beta)
\end{align*}
For any $x\in \mathcal{X}^{\beta}_{q}$, $x_{12,q}+x_{21,r}-2>0$ and $(x_{31,r}+x_{12,q}+x_{23,l}-1)>0$ in addition to $(x_{13,q}+x_{31,r})>0$ and $(x_{32,q}+x_{23,r})>0$. It readily follows, choosing $q=1$, that 
\begin{align} \label{diagodominance_beta}
    \nabla_{\beta}\Xi^{1,\beta}(\theta)>\sum_{l=1}^Q \nabla_{\gamma_{l}}\Xi^{1,\beta}(\theta)
\end{align}
which means that the first row of $\nabla \Xi(\theta)$ satisfies the condition of strict diagonal dominance.
\\

\noindent Next, we would like to show that for all $q\in \{1,\ldots,Q\}$: 
\begin{align} \label{diagodominance_gamma}
    \nabla_{\gamma_{q}}\Xi_{\epsilon_{q}}^{q,\gamma}(\theta)>\nabla_{\beta}\Xi_{\epsilon_{q}}^{q,\gamma}(\theta)+\sum_{r\neq q}\nabla_{\gamma_{r}}\Xi_{\epsilon_{q}}^{q,\gamma}(\theta)
\end{align}
for appropriate constants $\epsilon_{q}>0$. Let
\begin{align*}
    &\Delta_{q,\epsilon_{q}}(\theta)\\
    &=\nabla_{\gamma_{q}}\Xi_{\epsilon_{q}}^{q,\gamma}(\theta)-\sum_{r\neq q}\nabla_{\gamma_{r}}\Xi_{\epsilon_{q}}^{q,\gamma}(\theta) \\
    &=\mathbb{E}\left[\sum_{\substack{r=1,\ l=1\\ r\neq q,\ l\neq q,\ r\neq l}}^QP(Y_1=r,Y_2=q,Y_3=l|Y_{0}=q,X)\exp((X_{31,r}+X_{12,q}+X_{23,l})\beta+\gamma_{q})|Y_{0}=q,X\in \mathcal{X}^{\gamma}_{q,\epsilon_{q}}\right] \end{align*}
and 
\begin{align*}
    \nu_{q,\epsilon_{q}}(\theta) &= \mathbb{E}\Big[\sum_{\substack{r=1 \\ r \neq q}}^QP(Y_1=r,Y_2=q,Y_3=r|Y_{0}=q,X)\exp((X_{12,q}+X_{21,r})\beta+\gamma_{q}+\gamma_{r}) \\
    &+\sum_{\substack{r=1 \\ r \neq q}}^QP(Y_1=r,Y_2=q,Y_3=q|Y_{0}=q,X)\exp((X_{13,q}+X_{31,r})\beta) \\
    &+\sum_{\substack{r=1 \\ r \neq q}}^QP(Y_1=q,Y_2=q,Y_3=r|Y_{0}=q,X)\exp((X_{32,q}+X_{23,r})\beta)|Y_{0}=q,X\in \mathcal{X}^{\gamma}_{q,\epsilon_{q}}\Big]
\end{align*}

\noindent By \eqref{gradient_beta} and the definition of  $\mathcal{X}^{\gamma}_{q,\epsilon_{q}}$, we have $\nabla_{\beta}\Xi_{\epsilon_{q}}^{q,\gamma}(\theta)< 2\epsilon_{q}\nu_{q,\epsilon_{q}}(\theta)+3\epsilon_{q}\Delta_{q,\epsilon_{q}}(\theta)$. Strict row diagonal dominance will then follow if we can justify the existence of $\epsilon_{q}$ such that $\nabla_{\beta}\Xi_{\epsilon_{q}}^{q,\gamma}(\theta)< 2\epsilon_{q}\nu_{q,\epsilon_{q}}(\theta)+3\epsilon_{q}\Delta_{q,\epsilon_{q}}(\theta) < \Delta_{q,\epsilon_{q}}(\theta)$, i.e that 
\begin{align} \label{sufficient_condition_diagodominance}
    2\epsilon_{q}\nu_{q,\epsilon_{q}}(\theta)<(1-3\epsilon_{q})\Delta_{q,\epsilon_{q}}(\theta)
\end{align}
To this end, we note the upper bound
\begin{align} \label{th:Identification_Habit_bound1}
   2\epsilon_{q}\nu_{q,\epsilon_{q}}(\theta)\leq 2\epsilon_{q}(Q-1) e^{2\epsilon_q \max\{|\underline{\beta}|,|\overline{\beta}|\}} (e^{2\overline{\gamma}}+2)
\end{align}
and the lower bound
\begin{align*}
\Delta_{q,\epsilon_{q}}(\theta) 
&\geq  
P\!\Big(
Y_1\neq q,\ Y_2=q,\ Y_3\neq q,\ Y_1\neq Y_3
\ \Big|\ 
Y_{0}=q,X\in \mathcal{X}^{\gamma}_{q,\epsilon_{q}}
\Big) \exp\Big(3\epsilon_{q}\underline{\beta}+\underline{\gamma}\Big)
\end{align*}
We now proceed to show that $P\!\Big(
Y_1\neq q,\ Y_2=q,\ Y_3\neq q,\ Y_1\neq Y_3
\ \Big|\ 
Y_{0}=q,X\in \mathcal{X}^{\gamma}_{q,\epsilon_{q}}
\Big)$ is uniformly bounded away from zero.  First, remark that for $x\in \mathcal{X}^{\gamma}_{q,\epsilon_{q}}$, 
\begin{align*}
&P\!\Big(
Y_1\neq q,\ Y_2=q,\ Y_3\neq q,\ Y_1\neq Y_3
\ \Big|\ 
Y_{0}=q,X=x,A=\alpha
\Big) \\
&\geq P\!\Big(
Y_1= r,\ Y_2=q,\ Y_3=l
\ \Big|\ 
Y_{0}=q,X=x,A=\alpha
\Big)
\end{align*}
for any $r\neq l, r\neq q, l\neq q$. Fix such a pair $(r,l)$. This is possible because we assumed $Q\geq 3$.  Now, since $x\in \mathcal{X}^{\gamma}_{q,\epsilon_{q}}$, $\norm{x}_{\infty}< B$. Furthermore, $\Theta$ is compact by Assumption \ref{ass:Compact_Theta}. Together with Assumption \ref{ass:Identification_Restrictions}, the multinomial logit specification \eqref{JointProb}, and \eqref{EmpiricalSpec}, this implies that on the compact region $\mathcal{A}^{reg}_{q}$, there exists a constant $\kappa_{q}>0$ such that 
\begin{align*}
    P\!\Big(
Y_1= r,\ Y_2=q,\ Y_3=l
\ \Big|\ 
Y_{0}=q,X=x,A=\alpha
\Big)\geq\kappa_{q}
\end{align*}
By Assumption \ref{ass:Identification_Restrictions}, there also exists $\rho_{y_{0}}>0$ such that for all $x\in \mathcal{X}_{y_0}^{reg}$, $P(A\in \mathcal{A}^{reg}_{y_{0}} |Y_{0}=y_{0},X=x)\geq \rho_{y_{0}}$. Averaging over $\alpha$, we then get that  $\forall x\in \mathcal{X}^{\gamma}_{q,\epsilon_{q}}\subset \mathcal{X}_{q}^{reg}$,
\begin{align*}
P\!\Big(Y_1= r,\ Y_2=q,\ Y_3=l\ \Big|\ Y_{0}=q,X=x\Big) \geq \kappa_{q}\rho_{q}
\end{align*}
where $\kappa_{q}\rho_{q}$ does not depend on $\epsilon_{q}$. Finally, by condition \eqref{ass:Well_Defined_Sets_gamma}, $P(Y_{0}=q,X\in  \mathcal{X}^{\gamma}_{q,\epsilon})>0$ so
\begin{align*}
    &P\!\Big(Y_1\neq q,\ Y_2=q,\ Y_3\neq q,\ Y_1\neq Y_3\ \Big|\ Y_{0}=q,X\in \mathcal{X}^{\gamma}_{q,\epsilon} \Big)\\
    &\geq P\!\Big(Y_1= r,\ Y_2=q,\ Y_3=l\ \Big|\ Y_{0}=q,X\in \mathcal{X}^{\gamma}_{q,\epsilon}\Big) \\
    &= \mathbb{E}\left[ P\!\Big(Y_1= r,\ Y_2=q,\ Y_3=l\ \Big|\ Y_{0}=q,X=x\Big)|Y_{0}=q,X\in \mathcal{X}^{\gamma}_{q,\epsilon}\right]\\
    &\geq \kappa_{q}\rho_{q}>0
\end{align*}
which in turn implies 
\begin{align} \label{th:Identification_Habit_bound2}
\Delta_{q,\epsilon_{q}}(\theta) 
\geq  \kappa_{q}\rho_{q} \exp\Big(3\epsilon_{q}\underline{\beta}+\underline{\gamma}\Big)    
\end{align}
We now observe that $\lim\limits_{\epsilon_q \to 0} 2\epsilon_q \nu_{q,\epsilon_q}(\theta) = 0$ by \eqref{th:Identification_Habit_bound1} while $\lim\limits_{\epsilon_q \to 0} (1-3\epsilon_{q})\Delta_{q,\epsilon_{q}}(\theta) \geq  \kappa_{q}\rho_{q}\exp(\underline{\gamma})>0$ by \eqref{th:Identification_Habit_bound2}. Hence, it is clear that there exists $\epsilon_{q}>0$ sufficiently small such that \eqref{sufficient_condition_diagodominance} is satisfied for all $\theta\in \Theta$.
\\

Taking stock, we have shown that we can select constants $\epsilon_{1},\ldots,\epsilon_{Q}>0$ such that \eqref{diagodominance_gamma} holds for every $q=1,\ldots,Q$. Henceforth, we fix these appropriate constants. Together with \eqref{diagodominance_beta}, this shows that $\nabla \Xi(\theta)$ is strictly row diagonally dominant with positive entries for any $\theta\in \Theta$. Therefore, $\nabla \Xi(\theta)$ is a $P$-matrix by Lemma \ref{lemma:pmatrix}, and Theorem \ref{thm_GN} guarantees that $\Xi(\theta)$ is univalent. To conclude, we note that by Theorem \ref{th:MainTheorem}, $\Xi(\theta_{0})=0$. Consequently,  the univalence of $\Xi$ ensures that $\theta_{0}$ is the unique root of the $(Q+1)$-dimensional system of equations $\Xi(\theta)=0$.

\end{document}

%% file: Tables/table_Nielsen.tex
\begin{tabular}{lccccc} \toprule  \toprule  
  & \multicolumn{2}{c}{MLE without heterogeneity} & & GMM Allowing for Fixed Effects \\ 
\cline{2-3} \cline{5-5} 
  & $\geq 2$ purchases & $\geq 4$ purchases & & $\geq 4$ purchases \\ 
\\ 
Cola ($\gamma$) & 1.525 & 1.523 & & 0.140 & \\ 
& (0.050) & (0.054) & & (0.063) &   \\ 
Root/Ginger ($\gamma$) & 1.564 & 1.611 & & 0.164 & \\ 
& (0.065) & (0.071) & & (0.076) &   \\ 
Citrus ($\gamma$) & 1.514 & 1.542 &  &0.017 & \\ 
& (0.070) & (0.074) & & (0.081) &   \\ 
Unique Flavor ($\gamma$) & 2.493 & 2.520 &  &0.293 & \\ 
& (0.101) & (0.106) & & (0.110) &   \\ 
Other ($\gamma$) & 1.596 & 1.644 &  &0.081 & \\ 
& (0.055) & (0.059) & & (0.068) &   \\ 
Price ($\beta$) & $-0.110$ & $-0.115$ &  &$-0.110$ & \\ 
& (0.017) & (0.019) & & (0.025) &   \\ 
\\ 
Observations & 39,833 & 36,859 &  &36,859 & \\ 
\bottomrule \bottomrule 
\end{tabular}